\theoremstyle{plain}
\newtheorem{theorem}{Theorem}[section]
\newtheorem{prop}{Proposition}[section]
\newtheorem{lemma}{Lemma}[section]
\newtheorem{assumption}{Assumption}[section]
\newtheorem{problem}{Problem}[section]
\newtheorem{remark}{Remark}[section]
\theoremstyle{definition}
\newtheorem{definition}{Definition}[section]
\newtheorem{example}{Example}[section]
\newtheorem{condition}{Condition}
\newtheorem{theorem}{Theorem}[section]
\newtheorem{lemma}{Lemma}[section]
\newtheorem{assumption}{Assumption}[section]
\newtheorem{remark}{Remark}[section]
\newtheorem{definition}{Definition}[section]
\crefname{assumption}{Assumption}{Assumptions}
\newcommand{\printfnsymbol}[1]{%
  \textsuperscript{\@fnsymbol{#1}}%
}
\newcommand*\diff{\mathop{}\!\mathrm{d}}
\DeclareMathOperator*{\argmin}{arg\,min}
\DeclareMathOperator*{\ProjOp}{Proj}
\newcommand{\Proj}{\ProjOp\nolimits}
\DeclarePairedDelimiter\abs{\lvert}{\rvert}%
\DeclarePairedDelimiter\norm{\lVert}{\rVert}%
\let\oldabs\abs
\def\abs{\@ifstar{\oldabs}{\oldabs*}}
\let\oldnorm\norm
\def\norm{\@ifstar{\oldnorm}{\oldnorm*}}
\newcommand{\ogamma}{\overline{\gamma}}
\newcommand{\oalpha}{\overline{\alpha}}
\newcommand{\ualpha}{\underline{\alpha}}
\newcommand{\olambda}{\overline{\lambda}}
\newcommand{\ulambda}{\underline{\lambda}}
\newcommand{\usigma}{\underline{\sigma}}
\newcommand\Sym[1]{\left[#1\right]_{\mathbb{S}}}
\newcommand{\pdv}[2]{\frac{\partial #1}{\partial #2}}
\newcommand{\odv}[2]{\frac{\diff #1}{\diff #2}}
\begin{document}

% \title{Robust Adaptive Trajectory Tracking for Nonlinear Systems with Safety Guarantees}
\title{Safe Feedback Motion Planning: A Contraction Theory and $\mathcal{L}_1$-Adaptive Control Based Approach}

\author{ Arun Lakshmanan\thanks{These authors contributed equally for this work} \\
    Mechanical Science and Engineering\\
	University of Illinois at Urbana-Champaign\\
	Urbana, IL 61801 \\
	\texttt{lakshma2@illinois.edu} \\
	%% examples of more authors
	\And
	Aditya Gahlawat\footnotemark[\value{footnote}] \\
    Mechanical Science and Engineering\\
	University of Illinois at Urbana-Champaign\\
	Urbana, IL 61801 \\
	\texttt{gahlawat@illinois.edu} \\
	\And
	Naira Hovakimyan \\
    Mechanical Science and Engineering\\
	University of Illinois at Urbana-Champaign\\
	Urbana, IL 61801 \\
	\texttt{nhovakim@illinois.edu} \\
}

% Uncomment to remove the date
\date{}

% Uncomment to override  the `A preprint' in the header
\renewcommand{\headeright}{}
\renewcommand{\undertitle}{}

\maketitle              % typeset the title of the contribution

\begin{abstract}
Autonomous robots that are capable of operating safely in the presence of imperfect model knowledge or external disturbances are vital in safety-critical applications. In this paper, we present a planner-agnostic framework to design and certify safe tubes around desired trajectories that the robot is always guaranteed to remain inside. By leveraging recent results in contraction analysis and $\mathcal{L}_1$-adaptive control  we synthesize an architecture that induces safe tubes for nonlinear systems with state and time-varying uncertainties.  We demonstrate with a few illustrative\footnote{The implementation can be found \href{https://github.com/arlk/SafeFeedbackMotionPlanning.jl}{\texttt{https://github.com/arlk/SafeFeedbackMotionPlanning.jl}}} examples how contraction theory-based $\mathcal{L}_1$-adaptive control can be used in conjunction with traditional motion planning algorithms to obtain provably safe trajectories.
\keywords{feedback motion planning, robust trajectory tracking, $\mathcal{L}_1$-adaptive control, contraction theory, control contraction metrics, robust adaptive control, nonlinear reference systems.}
\end{abstract}

\section{Introduction}
\label{sec:intro}
% motivation
Motion planning algorithms generate optimal  open-loop trajectories for robots to follow; however, any uncertainty in the system can potentially drive the robot far away from the desired path. For instance, quadrotors experience   blade-flapping and induced drag forces that are dependent on the velocity, ground effects that are dependent on the altitude, and external wind effects that are often unaccounted for by the motion planner, \cite{mahony2012multirotor}. Accurate modeling of these uncertainty effects on system dynamics can be very expensive and time-consuming.
A widely accepted approach to account for uncertainty in motion planning is through feedback \cite[Chapter~8]{lavalle2006planning}.  In practice, ancillary tracking controllers or model predictive control (MPC) schemes are employed to alleviate this problem. However, the presence of the uncertainties is not explicitly considered in the control design process, and instead the performance is achieved with hand-tuned controller parameters and experimental validation. Without valid safety certificates, the uncertainty might drive the system unstable and far enough away from the desired trajectory, resulting in collisions with obstacles,  \cref{fig:intro_none}.

% nonlinear analytic methods
Robust trajectory tracking controllers using classical Lyapunov stability theory have been designed for helicopters \cite{mahony2004robust}, hovercraft \cite{jeong2017coupled}, marine vehicles \cite{donaire2017trajectory}, and several other autonomous robots, which exhibit nonlinear behavior. These approaches rely on backstepping techniques, sliding-mode control, passivity-based control, or other robust nonlinear control design tools \cite[Chapter 14]{khalil2002nonlinear}. However, the classical methods do not provide a `one size fits all' procedure for the constructive design of tracking controllers for a large class of nonlinear systems. Unless the problem has a very specific structure that can be exploited, a control Lyapunov function (CLF) has to be found which can be prohibitively difficult for general nonlinear systems because the feasibility conditions do not appear as linear matrix inequalities (LMI), unlike in case of linear systems.

Advances in computational resources and optimization toolboxes available to autonomous robots have led to  active developments in the field of robust MPC. The two large classes of methods of interest are min-max MPC \cite{magni2001receding,raimondo2009min,wang2017adaptive} and tube-based MPC \cite{rakovic2009set,rakovic2016elastic,yu2010robust,williams2018robust,lopez2019dynamic}. Min-max MPC  approaches  consider the worst-case disturbance that can affect the system making them overly conservative. If the uncertainty is too large or the robot is planning over a long horizon, a min-max MPC based approach may even render the optimization infeasible. Tube-based MPC methods address these issues by employing an ancillary controller to attenuate disturbances and ensure that the robot stays inside of a `tube' around the desired trajectory. However, with the exception of \cite{lopez2019dynamic}, these methods assume the existence of a stabilizing ancillary controller and its region of attraction along the desired trajectory. Moreover, the resulting tubes are of fixed width, which may be overly conservative depending on the operating conditions (see \cref{fig:intro_wide}). This issue is partly addressed for feedback linearizable systems in \cite{lopez2019dynamic} by using sliding-mode boundary layer control to construct tubes of any desired size during the MPC optimization procedure. Furthermore, unlike classical methods, the MPC-based approaches while applicable to larger class of systems incur a heavy computational load and are not always amenable to real-time applications.

\begin{figure}[t]
    \centering
    \subfloat[]{\label{fig:intro_none}\includegraphics[width=0.33\columnwidth]{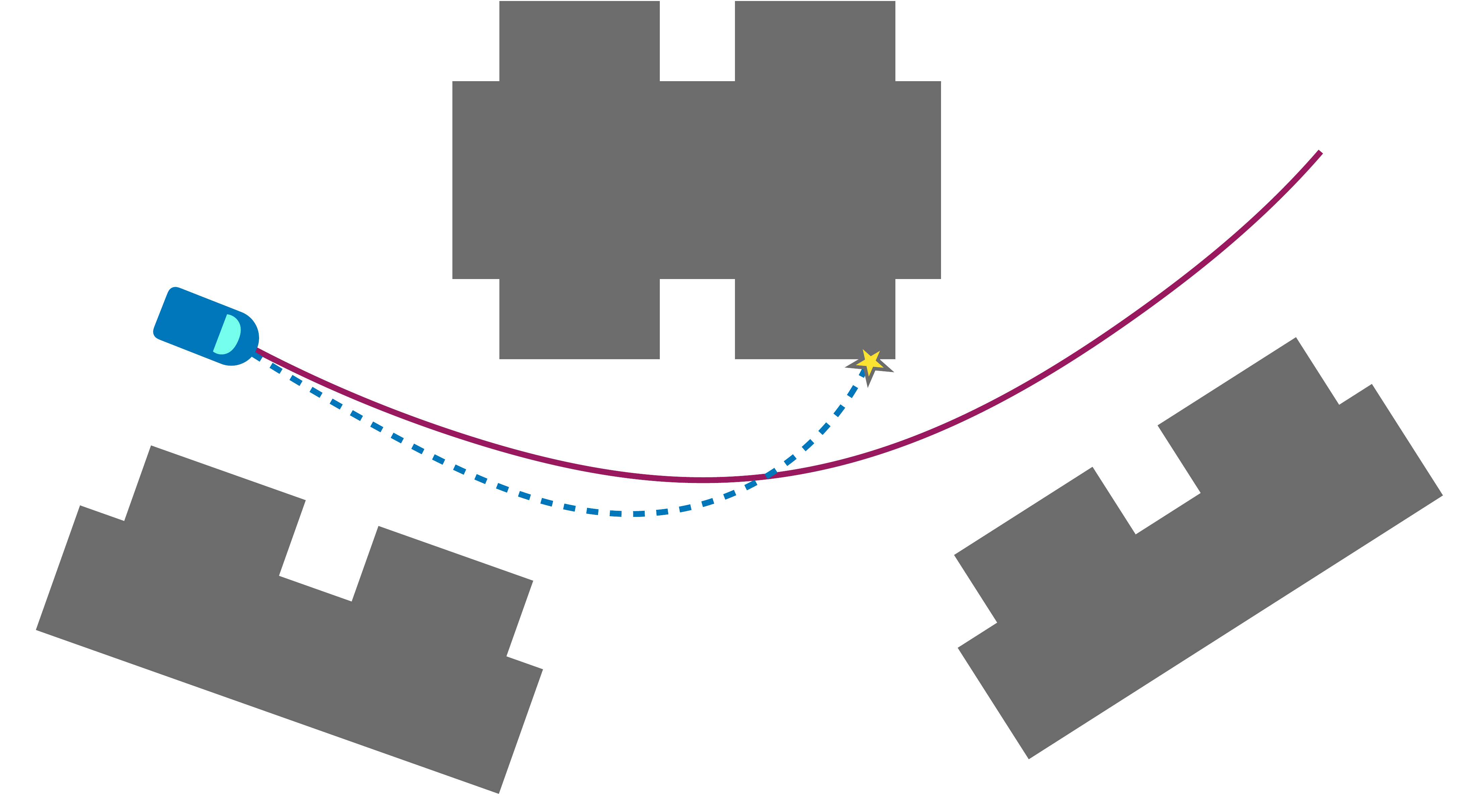}}
    \subfloat[]{\label{fig:intro_wide}\includegraphics[width=0.33\columnwidth]{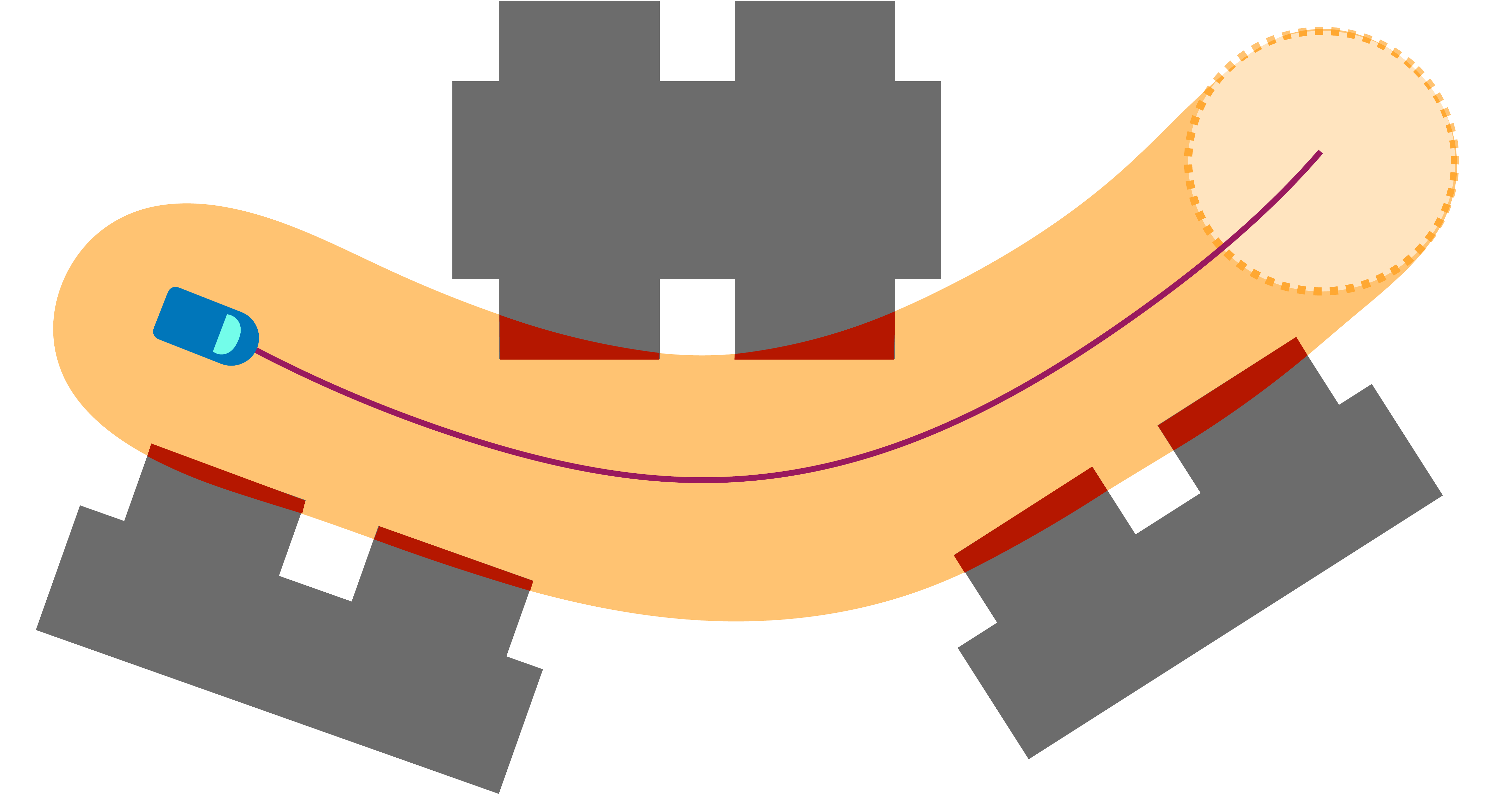}}
    \subfloat[]{\label{fig:intro_small}\includegraphics[width=0.33\columnwidth]{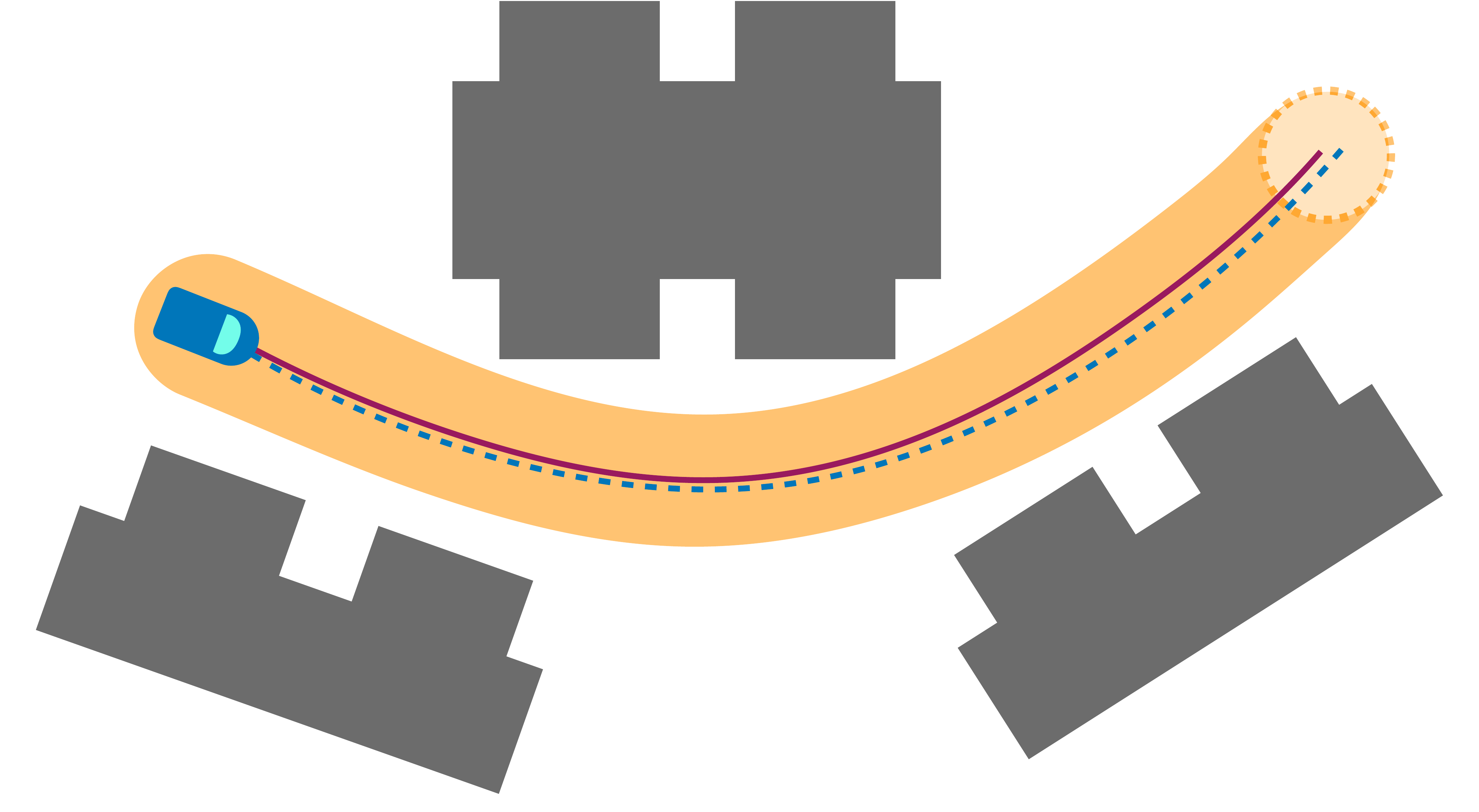}}
    \caption{Although the planned path is collision free (purple), the robot's actual trajectory (dashed-blue) might lead to a collision with the obstacles (gray) in the environment due to model discrepancies or external disturbances. (b) A feedback policy ensures that the robot stays inside of the (orange) tube which is too wide to pass between the obstacles without colliding (c) The safe feedback controller proposed in this paper guarantees that the robot's trajectory  never escapes the tube, which itself is also collision-free.}
    \label{fig:intro}
\end{figure}

Contraction theory-based approaches \cite{manchester2017control,manchester2018unifying} bridge the gap between classical and optimization-based methods, and provide a constructive control design procedure for nonlinear systems. In \cite{lohmiller1998contraction}, the authors introduce contraction analysis as tool for studying stability  of nonlinear systems using differential geometry. In particular, the authors show that the `contracting' or convergent nature of solutions to nonlinear systems can be derived from the differential dynamics of the system. Since the differential dynamics for nonlinear systems are of  linear-time varying (LTV) form, all the results from linear systems theory can be leveraged for nonlinear systems through the contraction analysis framework. In \cite{manchester2017control}, constructive control design techniques from linear systems theory can be used to find a control contraction metric (CCM), which is analogous to CLFs in the differential framework. This is significantly easier than directly finding the CLFs for nonlinear systems, because the feasibility conditions for CCMs are represented as LMIs. In \cite{singh2019robust},  a design procedure for synthesizing CCM-based controllers is given, which induces fixed-width tubes in the presence of bounded external disturbances, excluding modeling uncertainties. However, as discussed before, fixed-width tubes might result in infeasibility of the problem and result in  more work for the planner to find a more conservative path that produces feasible tubes. More recently, in \cite{lopez2019contraction} a model reference control architecture in conjunction with CCM-based feedback is proposed for handling uncertainties in the system.

In this paper, we present an approach for safe feedback motion planning for control-affine nonlinear systems that relies on contraction theory-based solution for exponential stabilizability around trajectories and $\mathcal{L}_1$-adaptive control for handling uncertainties and providing guarantees for transient performance and robustness. In $\mathcal{L}_1$ control architecture, estimation is decoupled from control, thereby allowing for arbitrarily fast adaptation subject  only to hardware limitations, \cite{hovakimyan2010L1}. The $\mathcal{L}_1$ control has been successfully implemented on NASA's AirStar 5.5\% subscale generic transport aircraft model \cite{gregory2010flight}, Calspan's Learjet \cite{ackerman2017evaluation}, and unmmaned aerial vehicles  \cite{kaminer2010path,jafarnejadsani2017optimized}.
In \cite{wang2012l1}, the authors presented the analysis for the $\mathcal{L}_1$-adaptive control architecture  with nonlinear time-varying reference systems. However, the stabilizability of the nominal nonlinear model and  associated safety certificates were simply assumed. In this paper, we present a constructive design of feedback strategy for nonlinear systems using CCMs and $\mathcal{L}_1$-adaptive control that provides strong guarantees of transient performance and robustness for a large class of control-affine nonlinear systems. Furthermore, we show how this control architecture induces tubes that can be flexibly changed to ensure safety based on the uncertainty in the system and the environment. In particular, this flexibility is provided by the architecture of the $\mathcal{L}_1$-adaptive control by decoupling the control loop from the estimation loop~\cite{hovakimyan2010L1}. In this way, the width of the certifiable tubes can be adjusted allowing the safe operation of a robot in tight confines.

The manuscript is organized as follows. The problem statement and the assumptions are provided in Section~\ref{sec:problem_statement}. A brief introduction to contraction theory is provided in Section~\ref{sec:prelim}. The proposed controller in presented in Section~\ref{sec:Contractive_L1} and the stability analysis of the closed-loop system is provided in Section~\ref{sec:performance_analysis}. Finally, in Section~\ref{sec:sim} the results of numerical experimentation are provided.

%%%%%%%%%%%%%%%%%%%%%%%%%
\subsection*{Notation}
The notation in this paper follows typical conventions used in the controls and robotics communities, but we list a few operations and terms explicitly in this section. The set of non-negative reals is denoted by $\mathbb{R}_{\geq 0}$. The set of real matrices  is denoted by $\mathbb{R}^{m \times n}$ with dimensions $m,n \in \mathbb{N}$. The set of real symmetric matrices is denoted by $\mathbb{S}^n \subset \mathbb{R}^{n \times n}$.
We denote an $n \times n$ identity matrix by $\mathbb{I}_n$ and a $m$-column vector of ones by $\mathds{1}_m$. Given any matrix $R \in \mathbb{R}^{n \times n}$, $\Sym{R} = \left(R + R^\top\right)/2 \in \mathbb{S}^n$ denotes its symmetric part.
The largest and smallest eigenvalue for a square matrix $A$ is denoted by $\overline{\lambda}(A)$ and $\underline{\lambda}(A)$ respectively. The positive definiteness of a square matrix $A$ is given by the inequality $A \succ 0$. The smallest non-zero singular value for a matrix $B \in \mathbb{R}^{n \times m}$ is denoted by $\usigma_{> 0}(B)$. We denote the Pontryagin set difference between two sets $\mathcal{A}, \mathcal{B} \subseteq \mathbb{R}^n$ as $\mathcal{A} \ominus \mathcal{B}$. The $\abs{\cdot}$ represents the absolute value of a scalar and $\norm{\cdot}$ represents the Euclidean norm for vectors and induced vector norm for matrices, unless otherwise denoted. The Laplace transform and its inverse of a function $f(t)$ is denoted by $\mathscr{L}[f(t)]$ and $\mathscr{L}^{-1}[f(t)]$ respectively. Given a matrix-valued function $M(x) \in \mathbb{R}^{n \times n}$ and a vector-valued function $f(x) \in \mathbb{R}^n$, we denote the directional derivative of $M(x)$ with respect to $f(x)$ by $\partial_f M(x) = \sum_{i = 1}^n (\partial M(x) / \partial {x_i}) f_i(x)$, where $f_i(x)$ is the $i^{th}$ element of $f(x)$. We denote by $\mathcal{L}_\infty(\mathcal{S})$ the set of functions $f:\mathcal{S} \subset \mathbb{R} \rightarrow \mathbb{R}^n$ which satisfy
$
\norm{f}_{\mathcal{L}_\infty} = \sup_{x \in \mathcal{S}} \norm{f(x)} < \infty,
$
where $\norm{\cdot}$ is a Euclidean norm. Additionally, for any $g \in \mathbb{R} \rightarrow \mathbb{R}^n$ we define the truncated $\mathcal{L}_\infty$-norm as
$
\norm{g}_{\mathcal{L}_\infty}^{[0, \tau]}  = \sup_{t \in [0, \tau]} \norm{g(t)},
$ for some finite $\tau > 0$.
We denote by $\mathcal{L}_1(\mathcal{S})$ the set of functions $f:\mathcal{S} \subset \mathbb{R} \rightarrow \mathbb{R}^n$ satisfying
$
\norm{f}_{\mathcal{L}_1} = \int_{\mathcal{S}} \norm{f(x)}dx < \infty.
$ For any $g \in \mathcal{L}_1(\mathbb{R}_{\ge 0})$, the truncated $\mathcal{L}_1$ norm is defined analogously.

%%%%%%%%%%%%%%%%%%%%%%%%%%%%%%%%%%%%%%%%%%%%%%%%%%%%%%%%%%%%%%%%%%%%%

%%%%%%%%%%%%%%%%%%%%%%%%%%%%%%%%%%%%%%%%%%%%%%%%%%%%%%%%%%%%%%%%%%%%%
\section{Problem Statement}\label{sec:problem_statement}

We consider systems for which the evolution of dynamics can be represented as
\begin{subequations}\label{eqn:proset:dynamics}
\begin{align}
    \dot{x}(t) = & F(x(t),u(t))  \\
    = & f(x(t)) + B(x(t))(u(t) + h(t,x(t))),
\end{align}
\end{subequations}
with initial condition $x(0) = x_0$, where $x(t) \in \mathbb{R}^n$ is the system state and $u(t) \in \mathbb{R}^m$ is the control input. The functions $f(x) \in \mathbb{R}^n$ and $B(x) \in \mathbb{R}^{n \times m}$ are known, and $h(t,x) \in \mathbb{R}^m$ represents the uncertainties.
The \textit{unperturbed/nominal dynamics} ($h \equiv 0$) are therefore represented as
\begin{subequations}\label{eqn:proset:unperturbed_dynamics}
\begin{align}
    \dot{x}(t) = & \bar{F}(x(t),u(t)) \\
    = & f(x(t)) + B(x(t))u(t), \quad x(0) = x_0.
\end{align}
\end{subequations}
Consider a \textit{desired control trajectory} $u^\star(t) \in \mathbb{R}^m$ and the induced \textit{desired state trajectory} $x^\star(t) \in \mathbb{R}^n$ from any planner based on unperturbed/nominal dynamics
\begin{equation}\label{eqn:proset:desired_trajectory}
    \dot{x}^\star(t) = \bar{F}(x^\star(t),u^\star(t)), \quad x^\star(0) = x^\star_0.
\end{equation} Together, $(x^\star(t),u^\star(t))$ is referred to as the \textit{desired state-input trajectory pair}. The planner ensures that the desired state-trajectory $x^\star(t)$ remains in a compact \textit{safe set} $\mathcal{X} \subset \mathbb{R}^n$, for all $t \geq 0$.

The goal is to design a control input $u(t)$ so that the state $x(t)$ of the uncertain system in~\eqref{eqn:proset:dynamics} remains `close' to the desired trajectory $x^\star(t)$ while also ensuring $x(t) \in \mathcal{X}$, for all $t \geq 0$. In order to rigorously define the notion of `closeness', we need the following definition:
%%%%%%%%%%%%%%%%%%%%%%%%%%%%%%%%%%

\begin{definition}\label{def:robust_adaptive_tube}
Given a positive scalar $\rho$ and the desired state trajectory $x^\star(t)$, $\Omega(\rho,x^\star(t))$ denotes the $\rho$-norm ball around $x^\star(t)$, i.e.
\begin{equation}\label{eqn:omega_norm_ball}
\Omega(\rho,x^\star(t)) := \{y \in \mathbb{R}^n~|~\norm{y - x^\star(t)} \leq \rho\}.
\end{equation}
Clearly $\Omega(\rho,x^\star(t))$ induces a \textit{tube} centered around $x^\star(t)$, where the tube is given by
\begin{equation}
    \label{eqn:proset:full_tube}
    \mathcal{O}(\rho) := \bigcup_{t \geq 0} \Omega(\rho,x^\star(t)),
\end{equation}
with $\rho > 0$ as the radius.
\end{definition}

The problem under consideration can now be stated as follows: Given the desired trajectory $x^\star(t) \in \mathcal{X}$ and a positive scalar $\rho$, design a control input $u(t)$ such that the state of the uncertain system~\eqref{eqn:proset:dynamics} satisfies:
\[
x(t) \in \Omega(\rho,x^\star(t)) \subset \mathcal{X}, \quad \forall t \geq 0.
\] Note the condition that $\Omega(\rho,x^\star(t)) \subset \mathcal{X}$ is dependent on the desired trajectory $x^\star(t)$ (given by the planner) and the tube width $\rho$ (chosen by the user).
To ensure that this control-independent condition is satisfied, we place the following assumption.
%%%%%%%%%%%%%%%%%%%%%%%%%%%
\begin{assumption}\label{assmp:desired_traj_tube}
Given the positive scalar $\rho$, the desired state trajectory satisfies $x^\star(t) \in \mathcal{X}_\rho$, for all $t \geq 0$, where
\begin{equation}\label{eqn:potryagin_set}
\mathcal{X}_\rho = \mathcal{X} \ominus \mathcal{B}(\rho), \quad \mathcal{B}(\rho) := \{y \in \mathbb{R}^n~|~\|y\| \leq \rho\}. \end{equation}
\end{assumption}

\begin{remark}\label{rem:potryagin_difference}
The implication of \cref{assmp:desired_traj_tube} is that if the state trajectory satisfies $x(t) - x^\star(t) \in \mathcal{B}(\rho)$ and $x^\star(t) \in \mathcal{X}_\rho$, for all $t \geq 0$, then the definition of the Pontryagin set difference implies that $x(t) \in \Omega(\rho,x^\star(t)) \subset \mathcal{X}$, for all $t \geq 0$.
\end{remark}

\begin{assumption}\label{assmp:desired_control}
The desired control/input trajectory satisfies
\[
\norm{u^\star(t)} \leq \Delta_{u^\star}, \quad \forall t \geq 0,
\] with the upper bound $\Delta_{u^\star}$ known.
\end{assumption} Note that the bound $\Delta_{u^\star}$ is obtained from the planner, which provides the desired state-input trajectory in~\eqref{eqn:proset:desired_trajectory}.
Next, we place assumptions on the boundedness and continuity properties of the system functions and uncertainties.
%%%%%%%%%%%%%%%%%%%%%%%%%
\begin{assumption}\label{assmp:bounds_known}
The known functions  $f(x) \in \mathbb{R}^n$ and $B(x) \in \mathbb{R}^{n \times m}$ are bounded and continuously differentiable with bounded derivatives, satisfying
\begin{align*}
\|f(x)\| \leq \Delta_f, \quad \left\| \pdv{f(x)}{x}  \right\| \leq \Delta_{f_x}, \quad \|B(x)\| \leq \Delta_B,\quad
\sum_{i = 1}^n \left\|   \pdv{B(x)}{x_i}  \right\| \leq \Delta_{B_x}, \quad \sum_{j = 1}^m \left\|   \pdv{b_j(x)}{x}  \right\| \leq \Delta_{b_x}
\end{align*} for all $x \in \mathcal{O(\rho)}$, where $b_j(x)$ is the $j^\textrm{th}$ column of $B(x)$ and the bounds are assumed to be known.
\end{assumption}
%%%%%%%%%%%%%%%%%%%%%%%%%%
\begin{assumption}\label{assmp:bounds_uncertainty}
The uncertainty $h(t,x)$ is bounded and continuously differentiable in both $x$ and $t$ with bounded derivatives, satisfying
\[
\|h(t,x)\| \leq \Delta_h, \quad  \norm{\pdv{h(t,x)}{x}} \leq \Delta_{h_x}, \quad  \norm{\pdv{h(t,x)}{t}} \leq \Delta_{h_t},
\] for all $x \in \mathcal{O(\rho)}$ and $t \geq 0$, where the bounds are assumed to be known.
\end{assumption}
%%%%%%%%%%%%%%%%%%%%%%%%%%%%%%%
\begin{assumption}\label{assmp:moore_penrose}
The input gain matrix $B(x)$ has full column rank. Furthermore, the Moore-Penrose inverse of $B(x)$ defined as $B^\dagger(x) = \left(B^\top (x) B(x)  \right)^{-1}B^\top(x)$ satisfies the following bounds
\[
\norm{B^\dagger(x)} \leq \Delta_{B^\dagger}, \quad \sum_{i=1}^n  \norm{\pdv{B^\dagger(x)}{x_i}} \leq \Delta_{B_x^\dagger}, \quad \forall x \in \mathcal{O(\rho)}.
\]
\end{assumption}

%%%%%%%%%%%%%%%%%%%%%%%%%%%%%%%%%%%%%%%%%%%%%%%%%%%%%%%%%%%%%%%%%%%%%%%%%%%%%

%%%%%%%%%%%%%%%%%%%%%%%%%%%%%%%%%%%%%%%%%%%%%%%%%%%%%%%%%%%%%%%%%%%%%%%%%%%%%
\section{Preliminaries on Contraction Theory}\label{sec:prelim}

Contraction theory allows  to synthesize feedback laws so that, in the absence of uncertainties, the state of the unperturbed/nominal dynamics in~\eqref{eqn:proset:unperturbed_dynamics} tracks a feasible desired trajectory $x^\star(t)$. We begin with the notion of universal exponential stabilizability.
\begin{definition}[\cite{singh2019robust}]\label{def:UES}
Consider a desired state-input trajectory pair $(x^\star(t),u^\star(t))$ satisfying~\cref{eqn:proset:desired_trajectory}. Suppose there exist scalars $\lambda,R > 0$ and a feedback operator $k_c: \mathbb{R}^n \times \mathbb{R}^n \rightarrow \mathbb{R}^m$ can be constructed such that the trajectory $x(t)$ of the unperturbed dynamics $\dot{x}(t) = \bar{F}(x(t),u_c(t))$ with control $u_c(t) = u^\star(t) + k_c(x^\star(t),x(t))$ satisfies
\[
\norm{x^\star(t) - x(t)} \leq R e^{-\lambda t} \norm{x^\star(0) - x(0)}, \quad \forall t \geq 0.
\] Then, the system with the unperturbed dynamics is said to be \textit{Universally Exponentially Stabilizable (UES)} with rate $\lambda$ and overshoot $R$.
\end{definition}

With the notion of UES defined, we now proceed to examine how UES may be established for a given system.
For the compact safe set $\mathcal{X} \subset \mathcal{R}^n$ defined in Section~\ref{sec:problem_statement}, let $T_x \mathcal{X}$ be the tangent space of $\mathcal{X}$ at $x \in \mathcal{X}$. Consequently, we denote by $T \mathcal{X} = \dot{\bigcup}_{x \in \mathcal{X}} T_x \mathcal{X}$ the tangent bundle of $\mathcal{X}$, where $\dot{\bigcup}$ denotes the disjoint union. Details on differential geometric notions used in the manuscript may be found in~\cite{bullo2019geometric}. The variational dynamics of  the unperturbed/nominal system in~\eqref{eqn:proset:unperturbed_dynamics}  may be written as~\cite[Chapter~3]{crouch1987variational}
\begin{align}\label{eqn:prelim:variational_dynamics}
    \dot{\delta}_x %= A(x,u) \delta_x + B(x) \delta_u
    = \left( \pdv{f(x)}{x} + \sum_{j = 1}^m u[j] \pdv{b_j(x)}{x} \right) \delta_x + B(x) \delta_u,
\end{align} with $\delta_x(0) = x_0$, where we have dropped the temporal dependencies for brevity. Here, $\delta_x(t) \in T_{x(t)} \mathcal{X}$, $\delta_u(t) \in T_{u(t)}\mathbb{R}^m$, $u[j](t)$ is the $j^{th}$ element of the control vector and $b_j(x) \in \mathbb{R}^n$ is the $j^{th}$ column of $B(x)$.

\begin{definition}\label{def:CCM}
Consider the differential dynamics in~\eqref{eqn:prelim:variational_dynamics}.  Suppose there exist positive scalars $\lambda$, $\underline{\alpha}$, $\overline{\alpha}$, $0 < \underline{\alpha} < \overline{\alpha} < \infty$, and a smooth\footnote{Throughout the manuscript, by smooth we mean the class $\mathcal{C}^\infty$ of functions defined on appropriate domains.} function $M: \mathbb{R}^n \rightarrow \mathbb{S}^n$ such that for all $(x,\delta_x) \in T \mathcal{X}$ one has
\begin{subequations}\label{eqn:prelim:CCM_conditions}
\begin{align}
   &\label{eqn:prelim:CCM_1}\underline{\alpha} \mathbb{I}_n \preceq M(x) \preceq \mathbb{I}_n \overline{\alpha}, \\
   &\delta_x^\top M(x)B(x) = 0 \Rightarrow  \notag \\
   &\label{eqn:prelim:CCM_2} \delta_x^\top \left(  \partial_f M(x) \hspace{-1mm} + \Sym{M(x) \pdv{f(x)}{x}} + 2 \lambda M(x) \right) \delta_x \leq 0,\\
   &\label{eqn:prelim:CCM_3} \partial_{b_j}M(x) + \Sym{M(x)\pdv{b_j(x)}{x}} = 0, \quad j \in \{1,\dots,m\}.
\end{align}
\end{subequations} Then, the function $M(x)$ is defined to be the \textit{Control Contraction Metric (CCM)} for the nominal/unperturbed dynamics~\eqref{eqn:proset:unperturbed_dynamics}.
\end{definition}

\begin{theorem}[\cite{singh2019robust,manchester2017control}]\label{thm:contraction_core}
Given positive scalars $\lambda$, and $\underline{\alpha} \leq \overline{\alpha} < \infty$, suppose there exists a CCM $M(x)$ for the nominal/unperturbed dynamics in~\cref{eqn:proset:unperturbed_dynamics}.
Then, given any desired state-input trajectory $(x^\star(t),u^\star(t))$ as in~\cref{eqn:proset:desired_trajectory}, there exists a feedback operator $k_c: \mathbb{R}^n \times \mathbb{R}^n \rightarrow \mathbb{R}^m$ such that the trajectory $x(t)$ of the unperturbed dynamics $\dot{x}(t) = \bar{F}(x(t),u_c(t))$ with control $u(t) = u^\star(t) + k_c(x^\star(t),x(t))$ is UES with respect to $x^\star(t)$ with the overshoot of $R = \oalpha/\ualpha$ in the sense of~\cref{def:UES}.
\end{theorem}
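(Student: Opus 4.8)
The plan is to exploit the Riemannian structure that the metric $M(x)$ places on $\mathcal{X}$ and to reduce the Euclidean tracking bound of \cref{def:UES} to a contraction estimate on a geodesic energy. At each time $t$, let $\gamma(\cdot,t):[0,1]\to\mathcal{X}$ denote a minimizing geodesic (in the metric $M$) joining the desired and actual states, with $\gamma(0,t)=x^\star(t)$ and $\gamma(1,t)=x(t)$, and write $\gamma_s=\partial\gamma/\partial s$. Tracking the Riemannian energy
\[
E(t)=\int_0^1 \gamma_s^\top M(\gamma(s,t))\,\gamma_s\,ds = d\bigl(x^\star(t),x(t)\bigr)^2 ,
\]
the goal is to design $k_c$ so that $\dot{E}(t)\le -2\lambda E(t)$, and then convert this into an exponential Euclidean bound using the uniform bounds in \eqref{eqn:prelim:CCM_1}.

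The feedback is built from a differential controller. The CCM conditions guarantee that at every $(x,\delta_x)\in T\mathcal{X}$ there is a differential input $\delta_u=k(x,\delta_x)$ rendering the variational dynamics \eqref{eqn:prelim:variational_dynamics} contracting in the metric, namely
\[
\frac{d}{dt}\bigl(\delta_x^\top M(x)\,\delta_x\bigr) \le -2\lambda\,\delta_x^\top M(x)\,\delta_x .
\]
Indeed, in the subspace where $\delta_x^\top M(x)B(x)=0$ the implication \eqref{eqn:prelim:CCM_2} already supplies this decay with $\delta_u=0$, while the equality \eqref{eqn:prelim:CCM_3} cancels the $B$-dependent drift so that the complementary component is controllable by a suitable $\delta_u$; the strong (equality) form of \eqref{eqn:prelim:CCM_3} is exactly what makes the construction path-independent. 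The physical feedback is then obtained by integrating the differential law along the geodesic,
\[
k_c\bigl(x^\star(t),x(t)\bigr) = \int_0^1 k\bigl(\gamma(s,t),\gamma_s(s,t)\bigr)\,ds ,
\]
which vanishes when $x(t)=x^\star(t)$ so that $u=u^\star$ on the desired trajectory.

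Next I would differentiate $E(t)$ along the closed-loop flow. Because $\gamma(\cdot,t)$ is a minimizing geodesic, the first variation of the energy annihilates the interior terms and leaves only the boundary contributions at $s=0,1$ driven by the endpoint velocities $\dot{x}^\star=\bar{F}(x^\star,u^\star)$ and $\dot{x}=\bar{F}(x,u^\star+k_c)$; rewriting these through the variational dynamics along the geodesic and invoking the pointwise contraction inequality above yields $\dot{E}(t)\le -2\lambda E(t)$. The comparison lemma then gives $E(t)\le e^{-2\lambda t}E(0)$. Finally, \eqref{eqn:prelim:CCM_1} bounds the Riemannian distance by the Euclidean one from both sides, $\ualpha\,\norm{x(t)-x^\star(t)}^2 \le E(t)$ and $E(0)\le \oalpha\,\norm{x(0)-x^\star(0)}^2$ (the latter by evaluating the energy on the straight-line path), so that
\[
\norm{x(t)-x^\star(t)} \le \sqrt{\oalpha/\ualpha}\; e^{-\lambda t}\,\norm{x(0)-x^\star(0)} .
\]
Since $\oalpha/\ualpha\ge 1$, this is no larger than $(\oalpha/\ualpha)\,e^{-\lambda t}\norm{x(0)-x^\star(0)}$, establishing UES with overshoot $R=\oalpha/\ualpha$.

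The main obstacle is the energy-derivative step. Computing $\dot{E}$ rigorously requires the first-variation formula for the energy of a time-varying minimizing geodesic, and care that the differential contraction certificate survives the path integral so that the pointwise inequality transfers to $E$. This in turn rests on regularity that I would establish first: smoothness and uniform definiteness of $M$ guarantee existence and differentiability of minimizing geodesics on the relevant region, and the equality \eqref{eqn:prelim:CCM_3} is what guarantees that the integrated feedback $k_c$ is well-defined independent of the chosen connecting path. Once these points are in place, the remaining computations are routine.
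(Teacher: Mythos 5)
The paper never proves this theorem itself---it cites \cite{singh2019robust,manchester2017control} and only states the ``central idea''---but your argument is precisely that idea carried out: the differential-Lyapunov/Riemannian-energy contraction argument with the geodesic-integrated feedback, matching the machinery the paper deploys in its appendix (\cref{lem:riemannbounds} for the $\ualpha,\oalpha$ energy bounds and the first-variation boundary-term computation reused in the proofs of Lemmas~\ref{lem:refbound}--\ref{lem:realbound}). The overshoot $\sqrt{\oalpha/\ualpha}$ you obtain is in fact sharper than the stated $R=\oalpha/\ualpha$ (and is the constant the paper actually uses in \cref{eqn:conditions:rho_r}), and you correctly note that it implies the claimed bound since $\oalpha/\ualpha \geq 1$.
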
 The central idea to this result is that the function $V(x,\delta_x) := \delta_x^\top M(x) \delta_x$ can be interpreted as a differential Lyapunov function and the conditions in~\cref{eqn:prelim:CCM_conditions} ensure that $\dot{V}(x,\delta_x) \leq -  2\lambda V(x,\delta_x)$ for all $(x,\delta_x) \in T \mathcal{X}$.
We place the following assumption on the known/unperturbed dynamics.
\begin{assumption}\label{assmp:CCM}
The nominal/unperturbed dynamics in~\cref{eqn:proset:unperturbed_dynamics} admit a CCM $M(x)$ for all $x \in \mathcal{X}$ with positive scalars $\lambda$, $\underline{\alpha}$, and $\overline{\alpha}$, as in~\cref{def:CCM}.
\end{assumption} Using~\cref{thm:contraction_core} it is straightforward to conclude that the consequence of this assumption is that any desired state-input trajectory can be tracked by the nominal/unperturbed dynamics in the sense of~\cref{def:UES} with rate $\lambda$ and overshoot $R = \overline{\alpha}/\underline{\alpha}$. 
Let $\Xi(p,q)$ be the set of smooth curves connecting any two points $p,q \in \mathcal{X}$. Then using the Riemannian metric $M$, the length of any curve $\gamma \in \Xi(p,q)$ is given by the following expression
\begin{equation}
\label{eqn:prelim:length}
l(\gamma) := \int_0^1 \sqrt{\gamma_s(s)M(\gamma(s))\gamma_s(s)}\diff s,
\end{equation}
where $\gamma_s(s) = \partial \gamma(s) / \partial s$. By definition, the minimizing geodesic $\ogamma :[0,1] \rightarrow \mathcal{X}$ satisfies the following relationship
\begin{equation}
\label{eqn:prelim:distance}
d(p,q) := l(\ogamma) = \inf_{\gamma \in \Xi(p,q)}l(\gamma),
\end{equation}
where $d(p,q)$ refers to the Riemannian distance between the two points $p$ and $q$. Existence of the minimizing geodesic is guaranteed by the Hopf-Rinow theorem. The Riemannian energy between the two points is defined using the Riemannian distance as the following quantity
\begin{equation}
\label{eqn:prelim:energy}
\mathcal{E}(p,q) := d(p,q)^2.
\end{equation}
Further details on Riemannian geometry may be found in~\cite{carmo1992riemannian}. A direct and straightforward consequence of~\cref{assmp:CCM} is that
\begin{equation}\label{eqn:Riemannian_energy_bound}
    \underline{\alpha}\norm{p - q}^2 \leq \mathcal{E}(p,q) \leq \overline{\alpha} \norm{p - q}^2, \quad \forall p, q \in \mathcal{X}.
\end{equation} 
The proof for this relationship can be found in \cref{lem:riemannbounds}. We will rely on the Riemannian energy's interpretation as a control Lyapunov function for the presented methodology. This interpretation was initially presented in~\cite{manchester2017control}.

\begin{remark}
Thus far we have only established the existence of feedback control operators and not constructed any. In fact, as explained in~\cite[Sec.~VI.A]{manchester2017control}, any controller may be chosen as long as the following set membership is established
\[
\mathcal{U} = \{ u \in \mathbb{R}^m \ | \ \dot{\mathcal{E}}(x^\star(t), x(t)) \le -2\lambda \mathcal{E}(x^\star(t), x(t))\}.
\]
The precise choice of the controller we use will be presented later in the manuscript.
\end{remark}

%%%%%%%%%%%%%%%%%%%%%%%%%%%%%%%%%%%%%%%%%%%%%%%%%%%%%%%%%%%%%%%%%%%%%%%%%%%%%

%%%%%%%%%%%%%%%%%%%%%%%%%%%%%%%%%%%%%%%%%%%%%%%%%%%%%%%%%%%%%%%%%%%%%%%%%%%%%
\section{Contraction Theory Based $\mathcal{L}_1$-Adaptive Control}\label{sec:Contractive_L1}

In this section we introduce the structure of the proposed controller for the uncertain nonlinear system in~\cref{eqn:proset:dynamics}. Consider the following feedback decomposition
\begin{equation}\label{eqn:control_def:overall_control}
    u(t) = u_c(t) + u_a(t),
\end{equation} where $u_c: \mathbb{R}_{\geq 0} \rightarrow \mathbb{R}^m$ is the contraction theory based control designed to guarantee UES (\cref{def:UES}) of the nominal dynamics in~\cref{eqn:proset:unperturbed_dynamics}, and $u_a: \mathbb{R}_{\geq 0} \rightarrow \mathbb{R}^m$ is the $\mathcal{L}_1$ control signal. The overall architecture of the proposed feedback is illustrated in~\cref{fig:arch}. We refer to the uncertain system in~\cref{eqn:proset:dynamics} with the feedback law~\cref{eqn:control_def:overall_control} as the $\mathcal{L}_1$ closed-loop system. Before we proceed with the description of the individual components of the controller, we introduce the following list of constants that are of importance for the results and analyses presented in this paper:
\begin{align}
\label{eqn:bounds:M_x}
\Delta_{M_x} &:= \sup_{x \in \mathcal{O}(\rho)} \sum_{i=1}^n \norm{\pdv{M}{x_i}(x)}, \\
\label{eqn:bounds:psi_x}
\Delta_{\Psi_x} &:= 2\Delta_{B_x} + \frac{\Delta_B \Delta_{M_x}}{\ualpha},\\
\label{eqn:bounds:delta_u}
\Delta_{\delta_u} &:= \frac{1}{2}\sup_{x \in \mathcal{O}(\rho)}\left(\frac{\olambda(L^{-\top}(x)F(x)L^{-1}(x))}{\usigma_{> 0}(B^\top(x)L^{-1}(x))}\right),\\
 \label{eqn:bounds:dx_r}
\Delta_{\dot{x}_r} &:= \Delta_f + \Delta_B(\norm{\mathbb{I}_m - C(s)}_{\mathcal{L}_1}\Delta_h + \Delta_{u^\star} + \rho\Delta_{\delta_u}), \\
\label{eqn:bounds:dx}
\Delta_{\dot{x}} &:= \Delta_f + \Delta_B(2\Delta_h + \Delta_{u^\star} + \rho\Delta_{\delta_u}),\\
\label{eqn:bounds:tx}
\Delta_{\tilde{x}} &:= \sqrt{\frac{4\olambda(P)\Delta_h(\Delta_{h_t} + \Delta_{h_x}\Delta_{\dot{x}})}{\ulambda(P)\underline{\lambda}(Q)} + \frac{4\Delta_h^2}{\underline{\lambda}(P)}}, \\
\label{eqn:bounds:teta}
\Delta_{\tilde{\eta}} &:= \left(\Delta_{B^\dagger_x} \Delta_{\dot{x}} + (\norm{sC(s)}_{\mathcal{L}_1} + \norm{A_m}) \Delta_{B^\dagger} \right)\Delta_{\tilde{x}},\\
\label{eqn:bounds:theta}
\Delta_\theta &:= \frac{\Delta_B \overline{\alpha}  \Delta_{\tilde{\eta}}}{\lambda},\\
\label{eqn:bounds:dpsi}
\Delta_{\dot{\Psi}} &:= \oalpha\left(
    \Delta_B \Delta_{\dot{\ogamma}_s} + \frac{\Delta_B \Delta_{M_x} \Delta_{\dot{x}}}{\sqrt{\oalpha\ualpha}} + \Delta_{B_x} \Delta_{\dot{x}}\right), \\
\label{eqn:bounds:dgamma_s}
\Delta_{\dot{\ogamma}_s} &:= \sqrt{\frac{\oalpha}{\ualpha}}\left(\Delta_{f_x} + (\Delta_h + \Delta_{u^\star} + \rho \Delta_{\delta_u})\Delta_{b_x} + \left(\Delta_{h_x} + \frac{\sqrt{\ualpha}\Delta_{\delta_u}}{\sqrt{\oalpha}}\right) \Delta_B \right),
\end{align}
where $\mathcal{O}(\rho)$ is defined in \cref{eqn:proset:full_tube}; $\Delta_{u^\star}$ is defined in \cref{assmp:desired_control}; $\Delta_f$, $\Delta_{f_x}$, $\Delta_B$, $\Delta_{B_x}$, $\Delta_{b_x}$, are defined in \cref{assmp:bounds_known}; $\Delta_h$, $\Delta_{h_t}$, $\Delta_{h_x}$ are defined in \cref{assmp:bounds_uncertainty}; $\Delta_{B^\dagger}$ and $\Delta_{B_x^\dagger}$ are defined in \cref{assmp:moore_penrose}; $\oalpha$ and $\ualpha$ are defined in \cref{assmp:CCM}; and $F(x)$ is defined as
\[
F(x):=-\partial_{f} W(x)+ 2\Sym{\pdv{f}{x}(x)W(x)} + 2 \lambda W(x),
\]
where $W(x) = M(x)^{-1}$ is referred to as the dual metric and $L(x)^\top L(x) = W(x)$. 

%%%% tikz stuff
\definecolor{lcolor}{RGB}{86,193,255}
\definecolor{bcolor}{RGB}{252,184,203}
\definecolor{ccolor}{RGB}{234,208,255}
\tikzstyle{sqblock} = [draw, fill=bcolor!20, rectangle,
    minimum height=3em, minimum width=3em]
\tikzstyle{block} = [draw, fill=bcolor!20, rectangle,
    minimum height=3em, minimum width=6em]
\tikzstyle{sum} = [draw, fill=bcolor!20, circle, node distance=1cm]
\tikzstyle{input} = [coordinate]
\tikzstyle{output} = [coordinate]
\tikzstyle{phantom} = [coordinate]
%%%% end of tikz stuff

\begin{figure}
  \centering
    \begin{tikzpicture}[align=center]
    	\filldraw[fill=ccolor!20,draw=ccolor!20] (1.1, -2.1) rectangle (4.5,-1.1) node[pos=.5] {Contraction theory\\ based controller};
    	\filldraw[fill=lcolor!20,draw=lcolor!20] (4.9, -5.6) rectangle (12.2,-4.6) node[pos=.5] {$\mathcal{L}_1$-adaptive controller};
    	\filldraw[fill=ccolor!70, densely dotted] (1.1, -1.1) rectangle (4.5,1.2);
    	\filldraw[fill=lcolor!50, densely dotted] (4.9, -4.6) rectangle (12.2,-1.4);
        % We start by placing the blocks
        \node [input, name=input] {};
        \node [block] (ccm) [right=1.5cm of input] {CCM Feedback};
        \node [sum] (sum) [right=1.5cm of ccm] {};
        \node [block] (sys) [right=1.5cm of sum] {Uncertain System};
        \node [sqblock] (filt) [below=1.5cm of sum] {$C(s)$};
        \node [block] (pred) [right=1.5cm of filt] {State Predictor};
        \node [block] (adap) [below=0.5cm of pred] {Adaptation Law};
        \node [sum] (diff) [right=1.0cm of pred] {};
        \node [phantom, right=0.75cm of filt] (p1) {};
        \node [phantom] (p2) at (sys -| diff) {};
        \node [phantom, above=1.0cm of p2] (p3) {};

        \draw[-Latex] (input) -- node [near start,above] {$x^\star, u^\star$} (ccm);
        \draw[-Latex] (ccm) -- node [midway,above] {$u_c$} (sum);
        \draw[-Latex] (sum) -- node [midway,above] {$u$} (sys);
        \draw[-Latex] (filt) -- node [midway,left] {$u_a$} node[pos=0.99,left] {$+$} (sum);
        \draw (adap) -| node [near end,left] {$\hat{\sigma}$}(p1);
        \draw[-Latex] (p1) -- (pred);
        \draw[-Latex] (p1) -- (filt);
        \draw[-Latex] (pred) -- node [midway,above] {$\hat{x}$} (diff);
        \draw[-Latex] (diff) |- node [near start,right] {$\tilde{x}$} (adap);
        \draw[-Latex] (p2) -- node[pos=0.99,right] {$-$} (diff);
        \draw (p2) -- (p3);
        \draw (sys) -- node [midway,above] {$x$} (p2);
        \draw[-Latex] (p3) -| (ccm);
        \node [output, right=1.0cm of p2] (output) {};
        \draw[-Latex] (p2) -- (output);
    \end{tikzpicture}
  \caption{Architecture of CCM-based $\mathcal{L}_1$-adaptive control}
  \label{fig:arch}
\end{figure}
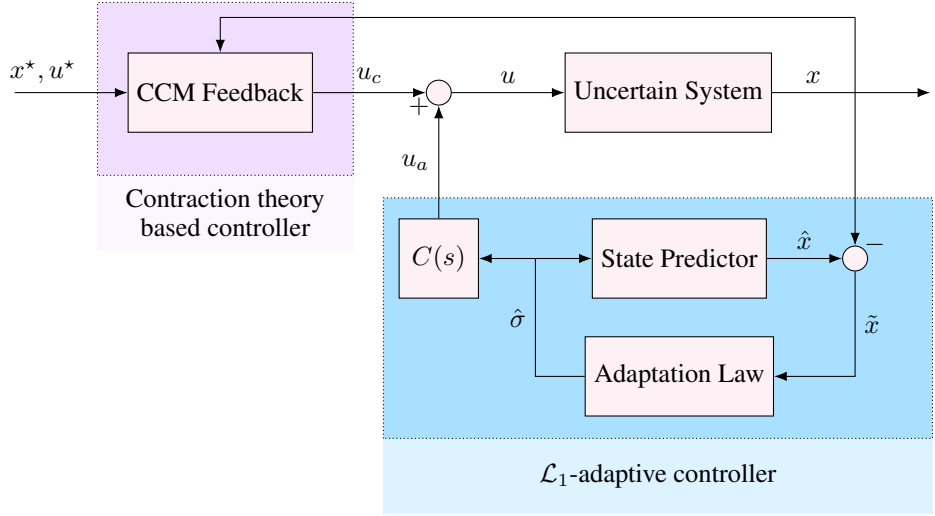

\subsection{Contraction theory based control:~$u_c(t)$}

As mentioned in~\cref{sec:prelim}, under \cref{assmp:CCM},~\cref{thm:contraction_core} guarantees the existence of a feedback law which renders the nominal dynamics in~\cref{eqn:proset:unperturbed_dynamics} UES. In particular, we propose the following law
\begin{equation}\label{eqn:control_def:u_c}
u_c(t) = u^\star(t) + k_c(x^\star(t),x(t)),
\end{equation} where, for the the feedback term, we use the law constructed in~\cite[Sec.~5.1]{singh2019robust}, which is the solution to the following quadratic program:
\begin{subequations}\label{eqn:control_def:u_c_QP}
\begin{align}
     & k_c(x^\star(t),x(t)) =  \argmin_{k \in \mathbb{R}^m} \ \norm{k}^2, \\
     &\text{s.t. }~2\ogamma_s^\top (1,t)M(x(t))\dot{x}_k(t) - 2 \ogamma_s^\top (0,t) M(x^\star(t)) \dot{x}^\star(t)
     \leq -2 \lambda \mathcal{E}(x^\star(t),x(t)),
\end{align}
\end{subequations} in which $M(\cdot)$ is the CCM (\cref{def:CCM}), $\ogamma(s,t)$, $s \in [0,1]$, is the minimizing geodesic with $\ogamma(1,t) = x_k(t)$ and $\ogamma(0,t) = x^\star(t)$. As previously defined, the desired state-input pair satisfies $\dot{x}^\star(t) = \bar{F}(x^\star(t),u^\star(t))$ with the nominal dynamics defined in~\cref{eqn:proset:unperturbed_dynamics}. Additionally, $\dot{x}_k(t) = \bar{F}(x(t),u^\star(t) + k)$.
\begin{remark}
As explained by the authors in~\cite[Sec.~5.1]{singh2019robust}, the solution to the quadratic program in~\eqref{eqn:control_def:u_c_QP} can be obtained analytically given the minimizing geodesic $\ogamma(\cdot,t)$. Alternatively, one may use the differential controller proposed in~\cite{manchester2017control}, albeit at the expense of an increase in the control effort.
\end{remark}
%%%%%%%%%%%%%%%%%%%%%%%%%%%%%%%%%%%%%%%
\subsection{$\mathcal{L}_1$-adaptive control:~$u_a(t)$}
\label{subsec:l1}

The computation of the signal $u_a(t)$ depends on three components illustrated in~\cref{fig:arch}, namely, the state-predictor, the adaptation law, and a low-pass filter. Similar to~\cite{wang2017adaptive}, we define the \textit{state-predictor} as
\begin{equation}\label{eqn:control_def:state_predictor}
 \dot{\hat{x}}(t) = \bar{F}(x(t),u_c(t) + u_a(t) + \hat{\sigma}(t)) + A_m \tilde{x}(t),
\end{equation} with $\hat{x}(0) = x_0$, and where $\hat{x}(t) \in \mathbb{R}^n$ is the state of the predictor, $\tilde{x}(t) = \hat{x}(t) - x(t)$ is the state prediction error, and $A_m \in \mathbb{R}^{n \times n}$ is an arbitrary Hurwitz matrix.

The \textit{uncertainty estimate} $\hat{\sigma}(t)$ in~\cref{eqn:control_def:state_predictor} is governed by the following \textit{adaptation law}
\begin{equation}\label{eqn:control_def:adaptation_law}
  \dot{\hat{\sigma}}(t) =  \Gamma \Proj_{\mathcal{H}}(\hat{\sigma}(t),-B(x)^\top P \tilde{x}(t)), \quad \hat{\sigma}(0) \in \mathcal{H},
\end{equation} where $\Gamma > 0$ is the adaptation rate, $\mathcal{H} = \{y \in \mathbb{R}^m~|~\norm{y} \leq \Delta_h\}$ is the set to which the uncertainty estimate is restricted to remain in with $\Delta_h$ defined in~\cref{assmp:bounds_uncertainty}. Furthermore, $\mathbb{S}^n \ni P \succ 0$, is the solution to the Lyapunov equation $A_m^\top P + P A_m = - Q$, for some $\mathbb{S}^n \ni Q \succ 0$. Moreover, $\Proj_{\mathcal{H}}(\cdot,\cdot)$ is the projection operator standard in adaptive control literature~\cite{lavretsky2011projection},~\cite{pomet1992adaptive}.

Finally, the control law $u_a(t)$ is defined as the following Laplace transform
\begin{equation}\label{eqn:control_def:control_law}
    u_a(s) = -C(s)\hat{\sigma}(s),
\end{equation} where $C(s)$ is a low-pass filter with bandwidth $\omega$ and satisfies $C(0) = \mathbb{I}_m$. Note that there is an abuse of notation when we denote both the geodesic interval parameter and the Laplace variable by $s$. The delineation between the two is clear from the context.

%%%%%%%%%%%%%%%%%%%%%%%%%%%%%%%%%%%%%%%
\subsection{Filter bandwidth and adaptation rate}\label{subsec:bandwidth_rate}

The design of the $\mathcal{L}_1$-adaptive controller involves the design of a strictly proper and stable low-pass filter $C(s)$ with $C(0) = \mathbb{I}_m$. Let the bandwidth of this filter be $\omega$. In the manuscript, for the sake of simplicity, we choose $C(s) = \frac{\omega}{s + \omega} \mathbb{I}_m$.
As we will see in~\cref{sec:performance_analysis}, the bandwidth $\omega$ of the low-pass filter $C(s)$ in~\cref{eqn:control_def:control_law} and the adaptation rate $\Gamma$ in~\cref{eqn:control_def:adaptation_law} are design parameters which can be thought of as `tuning-knobs'. However, these entities need to satisfy a few conditions mentioned below. The reasoning behind these conditions will be made clear in the subsequent section.

Suppose that~\cref{assmp:CCM} holds. Then, for arbitrarily chosen positive scalars $\epsilon$ and $\rho_a$, define
\begin{align}
\label{eqn:conditions:rho_r}
\rho_r &= \sqrt{\frac{\overline{\alpha}}{\underline{\alpha}}} \norm{x_0^\star - x_0} + \epsilon,\\
\label{eqn:conditions:rho}
\rho &= \rho_r + \rho_a.
\end{align}
Furthermore, suppose that Assumptions~\ref{assmp:desired_traj_tube}-\ref{assmp:moore_penrose} hold. Define
\begin{subequations}\label{eqn:conditions:zeta_definitions}
\begin{align}
    \zeta_1(\omega)
    =&\label{eqn:conditions:zeta_1} 2 \rho  \Delta_B\frac{\overline{\alpha}}{\ualpha} \left(\frac{\Delta_h}{\abs{2 \lambda - \omega}} + \frac{\Delta_{h_t} + \Delta_{h_x} \Delta_{\dot{x}_r}}{2 \lambda \omega}  \right),     \\
    \zeta_2(\omega) = &\label{eqn:conditions:zeta_2}  \overline{\alpha} \Delta_{\Psi_x}  \left(\frac{\Delta_h}{\abs{2 \lambda - \omega}} + \frac{\Delta_{h_t} + \Delta_{h_x} \Delta_{\dot{x}_r}}{2 \lambda \omega}  \right), \\
    \zeta_3(\omega) = &\label{eqn:conditions:zeta_3} \overline{\alpha} \Delta_{h_x} \left( \frac{4 \lambda \Delta_B + \Delta_{\dot{\Psi}}}{\lambda \omega} \right),
\end{align} where $\Delta_{\dot{x}_r}$, $\Delta_{\Psi_x}$, and $\Delta_{\dot{\Psi}}$,
are known positive scalars defined in \cref{eqn:bounds:dx_r,eqn:bounds:psi_x,eqn:bounds:dpsi} respectively.
% defined in~\eqref{eqn:bounds:dx_r},~\eqref{eqn:Delta_Psi_x}, and~\cref{lem:dynbound}, respectively.
\end{subequations} Then, the bandwidth $\omega$ of the low-pass filter $C(s)$ and the adaptation rate need to verify the following conditions
\begin{subequations}\label{eqn:bandwidth_rate_conditions}
\begin{align}
    \rho_r^2 \geq & \label{eqn:filter_condition_1} \frac{\mathcal{E}(x_0^\star,x_0)}{\underline{\alpha}} + \zeta_1(\omega),\\
    \underline{\alpha} > & \label{eqn:filter_condition_2} \zeta_2(\omega) + \zeta_3(\omega),\\
    \sqrt{\Gamma} >& \label{eqn:adaptation_rate_condition} \frac{\Delta_\theta}{ \rho_a (\underline{\alpha} - \zeta_2(\omega) - \zeta_3(\omega) )},
\end{align}
\end{subequations} where $\Delta_\theta$ is another known positive scalar defined in \cref{eqn:bounds:theta}.
\begin{remark}\label{rem:bandwidth_conditions}
Based on the definition of $\rho_r$ in~\cref{eqn:conditions:rho} and the bounds on the Riemannian energy $\mathcal{E}(x^\star(t),x(t))$ in~\cref{eqn:Riemannian_energy_bound}, the inequality $\rho_r^2 > \mathcal{E}(x_0^\star,x_0)/\underline{\alpha}$ holds. Furthermore, since $\zeta_1(\omega)$, $\zeta_2(\omega)$, and $\zeta_3(\omega)$, all converge to zero as $\omega$ increases, the bandwidth conditions in~\eqref{eqn:filter_condition_1}-\eqref{eqn:filter_condition_2} can always be satisfied by choosing a large enough $\omega$.
\end{remark}

%%%%%%%%%%%%%%%%%%%%%%%%%%%%%%%%%%%%%%%%%%%%%%%%%%%%%%%%%%%%%%%%%%%%%%%%%%%%%

%%%%%%%%%%%%%%%%%%%%%%%%%%%%%%%%%%%%%%%%%%%%%%%%%%%%%%%%%%%%%%%%%%%%%%%%%%%%%
\section{Performance Analysis}\label{sec:performance_analysis}

In this section we analyze the performance of the uncertain system in~\cref{eqn:proset:dynamics} with the $\mathcal{L}_1$ control feedback $u(t)$ defined in~\cref{eqn:control_def:overall_control}. As in~\cite{wang2017adaptive}, to derive the bounds between the desired trajectory $x^\star(t)$ and the state $x(t)$ of the uncertain system, we first introduce the following intermediate system, which we refer to as the \textit{reference system}:
\begin{subequations}\label{eqn:reference_system}
\begin{align}
    \dot{x}_r(t) =& F(x_r(t),-\eta_r(t)) = f(x_r(t)) + B(x_r(t))(u_{c,r}(t) -\eta_r(t) + h(t,x_r(t))), \\
    u_{c,r}(t) = & u^\star(t) + k_c(x^\star(t),x_r(t)),\\
    \eta_r(s) = & C(s) \mathscr{L}[h(t,x_r(t))], \quad x_r(0) = x_0,
\end{align}
\end{subequations} where $k_c$ is defined in \cref{eqn:control_def:u_c_QP} using $x_r$ in place of $x$.
The main feature of the reference system is that it defines the {\em the best achievable performance}, given the perfect knowledge of uncertainty, i.e. it reflects that the cancellation of the uncertainty $h(t,x_r(t))$ can happen only within the bandwidth of the low-pass filter.

The analysis consists of two parts: we first derive bounds between the desired trajectory and the reference system $\|x^\star(t) - x_r(t)\|$. Then we derive the bounds between the states of the reference system and the actual system $\norm{x_r(t) - x(t)}$. Recall that  we refer to the actual system as the $\mathcal{L}_1$ closed loop system, which is given by~\cref{eqn:proset:dynamics} with the control law in~\cref{eqn:control_def:overall_control}.
Finally, the triangle inequality  produces the desired bound on $\norm{x^\star(t) - x(t)}$. In this way, the reference system behaves as an `anchor system' for the analysis. These bounds are illustrated in~\cref{fig:analysis_bounds}. Furthermore, we provide the justification of treating the bandwidth $\omega$ of $C(s)$ and the adaptation rate $\Gamma$ as tuning-knobs. Indeed, the upcoming analysis will show that we can ensure that $x(t) \in \Omega(\rho,x^\star(t))$ (see~\cref{eqn:omega_norm_ball}) for all $t \geq 0$.

We begin with the bound between the reference system state and desired state trajectory. This corresponds to the green tube in~\cref{fig:analysis_bounds}. The proofs for all the claims in this section are provided in Appendix~\ref{app:main}.
\begin{lemma}\label{lem:refbound}
Let all the assumptions hold and let $\rho_r$ be as defined in~\cref{eqn:conditions:rho}. If the conditions in~\eqref{eqn:filter_condition_1}-\eqref{eqn:filter_condition_2} hold, then for any desired state trajectory $x^\star(t)$ the state $x_r(t)$ of the reference system in~\eqref{eqn:reference_system} satisfies
\begin{equation}\label{eqn:ref:uniform_tube}
x_r(t) \in \Omega(\rho_r,x^\star(t)), \quad \forall t \geq 0,
\end{equation}
and is uniformly ultimately bounded as
\begin{equation}\label{eqn:ref:ultimate_tube}
    x_r(t) \in \Omega(\mu(\omega,T),x^\star(t))~ \subset~ \Omega(\rho_r,x^\star(t)), \quad \forall t \geq T > 0,
\end{equation} where the ultimate bound is defined as
\begin{equation}\label{eqn:ref:ultimate_bound}
    \mu(\omega, T) := \sqrt{\frac{e^{-2\lambda T}\mathcal{E}(x^\star_0,x_0)}{\underline{\alpha}} + \zeta_1(\omega)}.
\end{equation}
\end{lemma}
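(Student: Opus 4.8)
The plan is to treat the squared Riemannian distance $\mathcal{E}(x^\star(t),x_r(t))$ as a Lyapunov-like function and to track its evolution along the reference system~\eqref{eqn:reference_system}. First I would differentiate $\mathcal{E}$ in time using the first variation of the Riemannian energy; because $\ogamma(\cdot,t)$ is a minimizing geodesic, the interior terms vanish and only the endpoint (boundary) contributions survive,
\[
\dot{\mathcal{E}} = 2\ogamma_s^\top(1,t)M(x_r(t))\dot{x}_r(t) - 2\ogamma_s^\top(0,t)M(x^\star(t))\dot{x}^\star(t).
\]
Writing $\dot{x}_r = f(x_r)+B(x_r)u_{c,r} + B(x_r)(h(t,x_r)-\eta_r)$ and recognizing that $f(x_r)+B(x_r)u_{c,r}$ is exactly the nominal closed-loop field $\dot{x}_k$ appearing in the QP constraint~\eqref{eqn:control_def:u_c_QP}, I would split $\dot{\mathcal{E}}$ into the nominal part, which $k_c$ forces to satisfy $\le -2\lambda\mathcal{E}$, plus the residual disturbance term $2\ogamma_s^\top(1,t)M(x_r)B(x_r)(h-\eta_r)$.

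The central observation is that the idealized $\mathcal{L}_1$ signal cancels $h$ only inside the filter bandwidth, so the effective disturbance is $h(t,x_r)-\eta_r = (\mathbb{I}_m - C(s))[h(t,x_r)]$. I would bound the residual term by Cauchy--Schwarz in the metric $M$: the constant metric-speed property of the geodesic gives $\ogamma_s^\top(1,t)M(x_r)\ogamma_s(1,t)=\mathcal{E}$, while~\eqref{eqn:prelim:CCM_1} and~\cref{assmp:bounds_known} bound the metric length of $B(x_r)(h-\eta_r)$ by $\sqrt{\oalpha}\,\Delta_B\,\|h-\eta_r\|$. This yields a bound $2\sqrt{\mathcal{E}}\sqrt{\oalpha}\Delta_B\|(\mathbb{I}_m-C(s))[h]\|$, and using~\eqref{eqn:Riemannian_energy_bound} together with the a priori membership $x_r(t)\in\Omega(\rho,x^\star(t))$ to replace $\sqrt{\mathcal{E}}\le\sqrt{\oalpha}\rho$ produces the scalar comparison inequality $\dot{\mathcal{E}} \le -2\lambda\mathcal{E} + 2\oalpha\rho\Delta_B\,\|(\mathbb{I}_m-C(s))[h(t,x_r)]\|$.

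Next I would estimate the filtered disturbance. With $C(s)=\tfrac{\omega}{s+\omega}\mathbb{I}_m$ one has $\mathbb{I}_m-C(s)=\tfrac{s}{s+\omega}\mathbb{I}_m$, so in the time domain $(\mathbb{I}_m-C(s))[h] = e^{-\omega t}h(0,x_0)+\int_0^t e^{-\omega(t-\tau)}\tfrac{d}{d\tau}h(\tau,x_r(\tau))\,d\tau$; bounding $\|h\|\le\Delta_h$ and $\|\tfrac{d}{d\tau}h\|\le \Delta_{h_t}+\Delta_{h_x}\Delta_{\dot{x}_r}$ (via~\cref{assmp:bounds_uncertainty} and the velocity bound~\eqref{eqn:bounds:dx_r}) gives a pointwise estimate. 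Applying the comparison lemma and evaluating the double convolution $\int_0^t e^{-2\lambda(t-\tau)}\|(\mathbb{I}_m-C(s))[h]\|\,d\tau$ against $\tfrac{\Delta_h}{|2\lambda-\omega|}+\tfrac{\Delta_{h_t}+\Delta_{h_x}\Delta_{\dot{x}_r}}{2\lambda\omega}$ would yield the master estimate $\mathcal{E}(x^\star(t),x_r(t)) \le e^{-2\lambda t}\mathcal{E}(x_0^\star,x_0)+\ualpha\,\zeta_1(\omega)$. Dividing by $\ualpha$ and invoking~\eqref{eqn:Riemannian_energy_bound} once more gives $\|x_r(t)-x^\star(t)\|^2 \le e^{-2\lambda t}\mathcal{E}(x_0^\star,x_0)/\ualpha+\zeta_1(\omega)$; taking $t\ge 0$ with~\eqref{eqn:filter_condition_1} yields~\eqref{eqn:ref:uniform_tube}, while $t\ge T$ with $e^{-2\lambda t}\le e^{-2\lambda T}$ yields the ultimate bound~\eqref{eqn:ref:ultimate_bound}, with $\mu\le\rho_r$ again a consequence of~\eqref{eqn:filter_condition_1}.

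I expect the main obstacle to be discharging the a priori assumption $x_r(t)\in\Omega(\rho,x^\star(t))$, on which every boundedness constant (valid only on $\mathcal{O}(\rho)$) and the step $\sqrt{\mathcal{E}}\le\sqrt{\oalpha}\rho$ silently rely, since this makes the estimate circular. I would close it by a first-exit-time/continuity argument: with $t^\star=\inf\{t:\|x_r(t)-x^\star(t)\|>\rho\}$, the master estimate is valid on $[0,t^\star)$ and, through~\eqref{eqn:filter_condition_1}, forces $\|x_r-x^\star\|\le\rho_r<\rho$ there, so by continuity $t^\star=\infty$, the strict gap $\rho=\rho_r+\rho_a>\rho_r$ preventing any escape. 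A secondary technicality is justifying differentiability of $\mathcal{E}(x^\star(t),x_r(t))$ and the first-variation formula (smooth dependence of the minimizing geodesic on its endpoints within a geodesically convex tube), which I would invoke from the CCM/Riemannian setup rather than reprove.
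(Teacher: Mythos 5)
Your proposal is correct and follows essentially the same route as the paper: the first-variation/endpoint formula for $\dot{\mathcal{E}}$, the split into a contracting nominal part plus the filtered residual $(\mathbb{I}_m - C(s))[h]$, the bound $\mathcal{E} \le e^{-2\lambda t}\mathcal{E}(x_0^\star,x_0) + \ualpha\zeta_1(\omega)$, and a continuity/first-exit bootstrap to discharge the a priori membership in the tube (the paper phrases this as a contradiction at the first time the distance reaches $\rho_r$, and packages your explicit convolution estimate of the filtered disturbance into its auxiliary scalar-system Lemma~\ref{lem:scalar}). These are presentational differences only; the argument is the same.
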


Next, we compute the bounds between the reference system in~\cref{eqn:reference_system} and the $\mathcal{L}_1$ closed-loop system (\cref{eqn:proset:dynamics} with~\cref{eqn:control_def:overall_control}).
\begin{lemma}
\label{lem:realbound}
Suppose that the stated assumptions and the conditions in~\cref{eqn:bandwidth_rate_conditions} hold. Additionally, assume that the trajectory of the $\mathcal{L}_1$ closed-loop system satisfies $x(t) \in \Omega(\rho,x^\star(t))$, for all $t \in [0, \tau]$, for some $\tau > 0$, with $\Omega(\rho,x^\star(t))$ and $\rho$ defined in~\cref{eqn:omega_norm_ball} and~\cref{eqn:conditions:rho}, respectively.
Then,
\[
\norm{x_r - x}_{\mathcal{L}_\infty}^{[0,\tau]} < \rho_a,
\] where $\rho_a$ is given in~\cref{eqn:conditions:rho}.
\end{lemma}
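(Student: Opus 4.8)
The plan is to compare the two trajectories through the Riemannian energy $\mathcal{E}(x_r(t),x(t))$, viewing the actual closed-loop system as the reference system perturbed by a small uncertainty-estimation error. Since both $x_r$ and $x$ are driven by the contraction feedback $k_c(x^\star,\cdot)$ relative to the \emph{common} desired trajectory, the only discrepancy in their right-hand sides is an input-channel disturbance that I will show is of order $1/\sqrt{\Gamma}$; the contraction conferred by the CCM then transfers this smallness to $\norm{x_r-x}$.

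\textbf{Step 1 (prediction-error bound).} First I would bound $\tilde{x}=\hat{x}-x$. Subtracting \cref{eqn:proset:dynamics} (with $u=u_c+u_a$) from the predictor \cref{eqn:control_def:state_predictor} gives $\dot{\tilde{x}}=A_m\tilde{x}+B(x)\tilde{\sigma}$ with $\tilde{\sigma}:=\hat{\sigma}-h(t,x)$. Using $V=\tilde{x}^\top P\tilde{x}+\Gamma^{-1}\tilde{\sigma}^\top\tilde{\sigma}$ together with the defining property of the projection operator in \cref{eqn:control_def:adaptation_law}, the cross terms cancel and $\dot V\le -\ulambda(Q)\norm{\tilde x}^2+2\Gamma^{-1}\norm{\tilde\sigma}\,\norm{\dot h}$. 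Here $\norm{\tilde\sigma}\le 2\Delta_h$ (both $\hat\sigma$ and $h$ lie in $\mathcal{H}$), and since the standing hypothesis $x(t)\in\Omega(\rho,x^\star(t))$ on $[0,\tau]$ lets me invoke \cref{assmp:bounds_known,assmp:bounds_uncertainty} to get $\norm{\dot x}\le\Delta_{\dot x}$ from \cref{eqn:bounds:dx}, I have $\norm{\dot h}\le\Delta_{h_t}+\Delta_{h_x}\Delta_{\dot x}$. As $\tilde x(0)=0$, a standard invariant-set argument gives $\norm{\tilde x(t)}\le\Delta_{\tilde x}/\sqrt{\Gamma}$ with $\Delta_{\tilde x}$ as in \cref{eqn:bounds:tx}. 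Inverting the error dynamics as $\tilde\sigma=B^\dagger(x)(\dot{\tilde x}-A_m\tilde x)$ and pushing it through the filter, the bounds on $C(s)$, $sC(s)$, $A_m$ and on $B^\dagger,\partial B^\dagger/\partial x_i$ from \cref{assmp:moore_penrose} then yield a bound of the form $\Delta_{\tilde\eta}/\sqrt{\Gamma}$, cf.\ \cref{eqn:bounds:teta}, on the filtered estimation-error term.

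\textbf{Step 2 (contraction estimate, the core step).} Writing the compensated uncertainty of the actual system as $h(t,x)-C(s)\hat\sigma=(\mathbb{I}_m-C(s))h(t,x)-C(s)\tilde\sigma$ and that of the reference as $(\mathbb{I}_m-C(s))h(t,x_r)$, the two flows coincide except for the input-channel terms $(\mathbb{I}_m-C(s))(h(t,x)-h(t,x_r))$ and $-C(s)\tilde\sigma$. Let $\gamma(s,t)$ be the minimizing geodesic from $x_r(t)$ to $x(t)$; differentiating \cref{eqn:prelim:energy} along the flow gives the first-variation identity $\tfrac12\dot{\mathcal E}=\gamma_s(1,t)^\top M(x)\dot x-\gamma_s(0,t)^\top M(x_r)\dot x_r$. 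Substituting the vector fields and applying the CCM conditions \cref{eqn:prelim:CCM_conditions} — the metric/contraction inequality and, crucially, the Killing-type identity \cref{eqn:prelim:CCM_3}, which makes $\gamma_s^\top M b_j$ invariant along the geodesic and thereby lets me project the input-channel perturbations cleanly — I would obtain an inequality of the form $\tfrac{d}{dt}\sqrt{\mathcal E}\le-\lambda\sqrt{\mathcal E}+(\text{self-referential terms})+\sqrt{\oalpha}\,\Delta_B(\Delta_{\tilde\eta}/\sqrt{\Gamma})$. The perturbation from $(\mathbb{I}_m-C(s))(h(t,x)-h(t,x_r))$ is proportional to $\norm{x_r-x}$ (through $\Delta_{h_x}$ and the metric/geodesic sensitivities collected in $\Delta_{\Psi_x},\Delta_{\dot\Psi}$), so it can be moved to the left, degrading the effective gain from $\ualpha$ to $\ualpha-\zeta_2(\omega)-\zeta_3(\omega)$, kept positive by \cref{eqn:filter_condition_2}. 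Integrating and using $\ualpha\norm{x_r-x}^2\le\mathcal E(x_r,x)$ from \cref{eqn:Riemannian_energy_bound} gives $\norm{x_r-x}_{\mathcal L_\infty}^{[0,\tau]}\le \Delta_\theta/\bigl(\sqrt{\Gamma}\,(\ualpha-\zeta_2(\omega)-\zeta_3(\omega))\bigr)$ with $\Delta_\theta$ as in \cref{eqn:bounds:theta}; the adaptation-rate condition \cref{eqn:adaptation_rate_condition} makes this strictly less than $\rho_a$, proving the claim.

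\textbf{Main obstacle.} I expect Step 2 to be the crux: executing the first-variation computation so that the input-channel disturbances are correctly handled via \cref{eqn:prelim:CCM_3}, and bounding the self-referential $\Delta_{h_x}$ contribution tightly enough that it closes against $\ualpha$ rather than overwhelming the contraction. The delicate point is that $k_c(x^\star,\cdot)$ is defined relative to $x^\star$ and not along the $x_r$-to-$x$ geodesic; showing that the $-\lambda\sqrt{\mathcal E}$ decay nonetheless survives, with the residual mismatch absorbed precisely by the $\zeta_2,\zeta_3$ margins, is where the careful bookkeeping of \cref{eqn:bounds:psi_x,eqn:bounds:dpsi} enters.
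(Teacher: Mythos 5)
Your overall route is the paper's route: bound $\tilde{x}$ and the filtered estimation error $\tilde{\eta}$ by $O(1/\sqrt{\Gamma})$ via the projection-based Lyapunov argument, differentiate the Riemannian energy $\mathcal{E}(x_r,x)$ along the two flows, use the CCM to contract the nominal parts, split the residual into a $\Psi$-difference term, a $\Delta_{h_x}$ term, and a $\tilde{\eta}$ term, and close with $\ualpha-\zeta_2(\omega)-\zeta_3(\omega)>0$ and the adaptation-rate condition. The final bound $\Delta_\theta/\bigl(\sqrt{\Gamma}(\ualpha-\zeta_2-\zeta_3)\bigr)<\rho_a$ is exactly the paper's conclusion.

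There is, however, one genuine gap in the mechanism you describe for the core step. You propose to treat the self-referential perturbations as instantaneous terms proportional to $\norm{x_r-x}$ and ``move them to the left'' of a differential inequality in $\sqrt{\mathcal{E}}$, degrading the gain to $\ualpha-\zeta_2(\omega)-\zeta_3(\omega)$. Taken literally, this cannot produce the $1/\omega$ factors in $\zeta_2,\zeta_3$: a pointwise bound on $\mathscr{L}^{-1}[(\mathbb{I}_m-C(s))\mathscr{L}[h(t,x_r)-h(t,x)]]$ gives only $\norm{\mathbb{I}_m-C(s)}_{\mathcal{L}_1}\Delta_{h_x}\norm{x_r-x}$, whose coefficient does not shrink as $\omega\to\infty$, so \cref{eqn:filter_condition_2} could not be met by raising the bandwidth. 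The smallness in $\omega$ only appears after you keep the convolution $\int_0^t e^{-2\lambda(t-\nu)}\Psi^\top\,\xi\,\mathrm{d}\nu$ with $\xi=(\mathbb{I}_m-C(s))\sigma$ intact, view it as the output of a cascaded scalar system, and integrate by parts against the filter (the paper's Lemmas~\ref{lem:scalar} and~\ref{lem:scalar2}, which is precisely where $\Delta_{\dot{\Psi}}$ and $\Delta_{h_t}+\Delta_{h_x}\Delta_{\dot{x}_r}$ enter). Relatedly, the $\Psi$-difference term is \emph{quadratic} in $\norm{x_r-x}$ (via \cref{lem:roadblock}), so it cannot be absorbed into a linear contraction rate at all; the paper linearizes it with a first-exit-time contradiction argument ($\norm{x_r(t)-x(t)}<\rho_a$ on $[0,\tau^\star)$ with equality at $\tau^\star$), which supplies the a priori bound $\rho_a$ needed to evaluate the integrated residuals as $\tfrac12\zeta_2\rho_a^2+\tfrac12\zeta_3\rho_a^2+\Delta_\theta\rho_a/(2\sqrt{\Gamma})$ and reach the algebraic contradiction with \cref{eqn:adaptation_rate_condition}. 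You correctly flagged this as the crux; the missing ingredients are the integration-by-parts treatment of the filtered residuals and the bootstrap that makes the quadratic term tractable.
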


\begin{figure}
  \centering
    \includegraphics[width=0.8\textwidth]{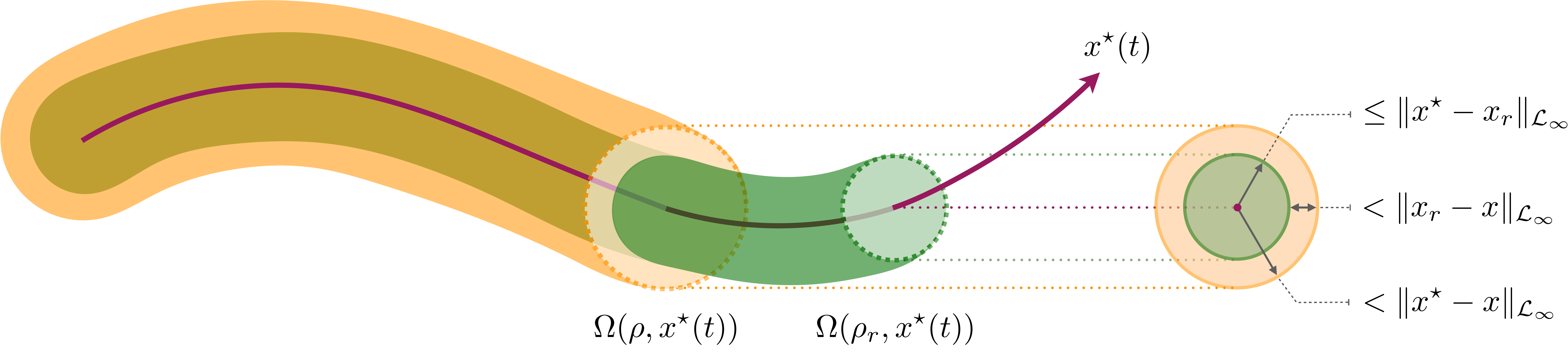}
  \caption{The bounds/tubes for the analysis of the CCM based $\mathcal{L}_1$-adaptive controller.}
  \label{fig:analysis_bounds}
\end{figure}

We now use Lemmas~\ref{lem:refbound}-\ref{lem:realbound} to state the main result of the paper.
\begin{theorem}\label{thm:main_theorem}
Suppose that the stated assumptions and conditions in~\cref{eqn:bandwidth_rate_conditions} hold. Consider a desired state trajectory~$x^\star(t)$ as in~\eqref{eqn:proset:desired_trajectory} and the state of the $\mathcal{L}_1$ closed-loop system defined via~\eqref{eqn:proset:dynamics} and~\eqref{eqn:control_def:overall_control}. Then we have
\begin{equation}\label{eqn:main:uniform_tube}
    x(t) \in \Omega (\rho,x^\star(t)), \quad \forall t \geq 0,
\end{equation} and is uniformly ultimately bounded as
\begin{equation}\label{eqn:main:ultimate_tube}
    x(t) \in \Omega (\delta(\omega,T),x^\star(t))~ \subset ~ \Omega (\rho,x^\star(t)), \quad \forall t \geq T > 0.
\end{equation} Here, the ultimate bound is defined as
\begin{equation}\label{eqn:main:ultimate_bound}
    \delta(\omega, T) := \mu(\omega, T) + \rho_a,
\end{equation}  where the positive scalars $\rho$ and $\rho_a$ are defined in~\eqref{eqn:conditions:rho}, and $\mu(\omega, T)$ is defined in~\cref{lem:refbound}.
\end{theorem}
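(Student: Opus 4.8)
The plan is to stitch together \cref{lem:refbound} and \cref{lem:realbound} with the triangle inequality, using a standard continuity (bootstrapping) argument to discharge the conditional hypothesis of \cref{lem:realbound}. The difficulty is that \cref{lem:realbound} assumes $x(t)\in\Omega(\rho,x^\star(t))$ on $[0,\tau]$ in order to conclude $\|x_r-x\|_{\mathcal{L}_\infty}^{[0,\tau]}<\rho_a$ --- yet this tube membership is exactly what we want to prove. Breaking this apparent circularity is the main obstacle, and it is resolved by a first-exit-time contradiction that crucially exploits the \emph{strict} inequality $<\rho_a$ furnished by \cref{lem:realbound}.

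First I would verify that the initial state lies strictly inside the target tube. Since $0<\ualpha<\oalpha$ gives $\sqrt{\oalpha/\ualpha}>1$, the definitions in \cref{eqn:conditions:rho_r,eqn:conditions:rho} yield $\rho\ge\rho_r>\|x_0^\star-x_0\|=\|x^\star(0)-x(0)\|$, so $x(0)$ is in the interior of $\Omega(\rho,x^\star(0))$.

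For the uniform bound \cref{eqn:main:uniform_tube}, let $g(t):=\|x(t)-x^\star(t)\|$, which is continuous. Suppose for contradiction that the escape set $\{t\ge 0 : g(t)>\rho\}$ is nonempty, and let $\tau$ be its infimum. Continuity together with $g(0)<\rho$ forces $\tau>0$, $g(t)\le\rho$ (equivalently $x(t)\in\Omega(\rho,x^\star(t))$) for all $t\in[0,\tau]$, and $g(\tau)=\rho$. The hypothesis of \cref{lem:realbound} thus holds on $[0,\tau]$, giving $\|x_r(\tau)-x(\tau)\|\le\|x_r-x\|_{\mathcal{L}_\infty}^{[0,\tau]}<\rho_a$, while \cref{lem:refbound} gives $\|x^\star(\tau)-x_r(\tau)\|\le\rho_r$. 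The triangle inequality then produces
\[
g(\tau)=\|x(\tau)-x^\star(\tau)\|\le\|x^\star(\tau)-x_r(\tau)\|+\|x_r(\tau)-x(\tau)\|<\rho_r+\rho_a=\rho,
\]
contradicting $g(\tau)=\rho$. Hence the escape set is empty and $x(t)\in\Omega(\rho,x^\star(t))$ for all $t\ge 0$.

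With the tube membership now established for all $t\ge0$, the hypothesis of \cref{lem:realbound} holds on every interval $[0,\tau]$, so $\|x_r(t)-x(t)\|<\rho_a$ uniformly in $t$. For the ultimate bound I would invoke the second conclusion of \cref{lem:refbound}, namely $\|x^\star(t)-x_r(t)\|\le\mu(\omega,T)$ for all $t\ge T$, and apply the triangle inequality once more to obtain
\[
\|x(t)-x^\star(t)\|\le\mu(\omega,T)+\rho_a=\delta(\omega,T),\quad\forall t\ge T,
\]
which is exactly \cref{eqn:main:ultimate_bound}, yielding \cref{eqn:main:ultimate_tube}. Apart from the bootstrapping step, everything reduces to bookkeeping with the triangle inequality and the two lemmas.
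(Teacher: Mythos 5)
Your proposal is correct and follows essentially the same route as the paper's own proof: a first-exit-time contradiction that discharges the conditional hypothesis of \cref{lem:realbound} on $[0,\tau^\star]$, followed by the triangle inequality with \cref{lem:refbound}, and the same two-lemma combination for the ultimate bound. Your treatment of the bootstrapping step is, if anything, slightly more explicit than the paper's about why the tube-membership hypothesis of \cref{lem:realbound} holds up to and including the exit time.
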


%%%%%%%%%%%%%%%%%%%%%%%%%%%%%%%%%
\subsection*{Discussion}

A few critical comments are in order for the performance analysis. The main result in Theorem~\ref{thm:main_theorem} provides  uniform  ultimate bounds. Let us first discuss the implication of the uniform bound $\rho$ in~\cref{eqn:main:uniform_tube}. As per the definition in~\cref{eqn:conditions:rho}, $\rho = \rho_r + \rho_a$. It is evident from the definition that $\rho$ is lower bounded by the initial condition difference $\norm{x^\star_0 - x_0}$ and the positive scalars $\underline{\alpha}$ and $\overline{\alpha}$ which are associated with the CCM $M(x)$ of the nominal dynamics. Furthermore, as per the proof of Lemma~\ref{lem:realbound}, since $\rho_a \propto 1/\sqrt{\Gamma}$, the adaptation rate $\Gamma$ can be increased to the maximum value allowable by the computation hardware to guarantee the smallest $\rho_a$, and thus, the smallest uniform bound $\rho$. However, the fact remains that the uniform bound $\rho$ guaranteed by the $\mathcal{L}_1$-controller for the tracking remains lower bounded by $\norm{x^\star_0 - x_0}$. The only way this bound can be further reduced is if the underlying planner which provides the desired state-input pair $(x^\star(t),u^\star(t))$ can minimize $\norm{x^\star_0 - x_0}$.

 Theorem~\ref{thm:main_theorem} also provides the (uniform) ultimate bound via $\delta(\omega,T)$ defined in \cref{eqn:main:ultimate_bound}. As already mentioned, $\rho_a \propto 1 / \sqrt{\Gamma}$. Furthermore, from the definition of $\zeta_1(\omega)$ in~\cref{eqn:conditions:zeta_1}, it is evident that by choosing a large enough $\omega$, there will always exist a known $0 < T < \infty$ such that $\delta(\omega,t) \leq \bar{\rho}$, for all $t \geq T$, for any chosen $\bar{\delta}>0$. Therefore, we can always arbitrarily shrink the tube $\mathcal{O}(\bar{\delta})$ by choosing appropriate bandwidth $\omega$ and rate of adaptation $\Gamma$. This feature of the CCM-based $\mathcal{L}_1$-controller is very advantageous, since, for example, this capability will allow the safe navigation of a robot through tight and cluttered environments. This improved performance, however, comes at the cost of reduced robustness. There exists a trade-off between performance and robustness that should be taken into consideration. As aforementioned, performance (radius of tubes around $x^\star(t)$) depends on $\Gamma$ and $\omega$. The rate of adaptation $\Gamma$ is obviously limited by the available computational hardware. More importantly, the role of the low-pass filter $C(s)$ in the $\mathcal{L}_1$-control architecture (\cref{fig:arch}) is to decouple the control loop from the estimation loop~\cite{hovakimyan2010L1}. Thus, increasing the bandwidth $\omega$ of $C(s)$ in order to get a tighter tube will lead to the $u_a(t)$ component of the $\mathcal{L}_1$-input to behave as a high-gain signal, thus possibly sacrificing desired robustness levels~\cite{cao2010stability}. Therefore, this trade-off must always be taken into account during the planning phase.

%%%%%%%%%%%%%%%%%%%%%%%%%%%%%%%%%%%%%%%%%%%%%%%%%%%%%%%%%%%%%%%%%%%%%%%%%%%%%

%%%%%%%%%%%%%%%%%%%%%%%%%%%%%%%%%%%%%%%%%%%%%%%%%%%%%%%%%%%%%%%%%%%%%%%%%%%%%
\section{Simulation Results}\label{sec:sim}
We provide two illustrative examples. In the first example, we consider the non-feedback linearizable system from \cite{manchester2017control} and synthesize the controller to ensure safe regulation around the equilibrium point.  We also show the effect of uniform ultimate bounds discussed in the previous section, if the system were to start far away from the equilibrium. In the second example, we consider the system from \cite{singh2017robust} and ensure safety in a motion planning context during trajectory tracking. In particular we show how altering the tube parameters affects the choice in the filter bandwidth and adaptation rate.

\subsection{Non-feedback Linearizable Systems}
Consider the system with the structure defined in \cref{eqn:proset:dynamics} and the system functions given by
\[
f(x) = \begin{bmatrix}
-x_1(t) + x_3(t) \\
x_1^2(t) - 2x_1(t)x_3(t) - x_2(t)  + x_3(t) \\
-x_2(t)
\end{bmatrix},
B = \begin{bmatrix}
0 \\ 0 \\ 1
\end{bmatrix},
\]
where the state $x(t) = [x_1(t) \ x_2(t) \ x_3(t)]^\top$. The dual metric $W(x) = M(x)^{-1}$ satisfying the conditions in \cref{eqn:prelim:CCM_1,eqn:prelim:CCM_2,eqn:prelim:CCM_3} was found using the sum-of-squares programming toolbox SumOfSquares.jl \cite{julia2020sos}, optimization software JuMP \cite{dunning2017jump}, and the optimization solver \cite{julia2020mosek}, as
\[
W(x) = 
\begin{bmatrix}
0.2 & -0.41x_1(t) & -0.01\\
-0.41x_1(t) & 0.81x_1(t)^2 + 0.22 & 0.01x_1(t) - 0.01\\
-0.01 & 0.01x_1(t) - 0.01 & 0.07x_1(t) + 0.22
\end{bmatrix}.
% W(x) = 
% \begin{bmatrix}
% 0 & 0 & 0\\
% 0 & 0.81 & 0\\
% 0 & 0 & 0
% \end{bmatrix}x_1(t)^2 +
% \begin{bmatrix}
% 0 & -0.41 & 0\\
% -0.41 & 0 & 0.01\\
% 0 & 0.01 & 0.07
% \end{bmatrix}x_1(t) + 
% \begin{bmatrix}
% 0.2 & 0 & -0.01\\
% 0 & 0.22 & -0.01\\
% -0.01 & -0.01 & 0.22
% \end{bmatrix}.
\]
The metric satisfies a convergence rate $\lambda = 1.0$ and is uniformly bounded in the set $\mathcal{X} = \{y \in \mathbb{R}^3 \ | \ \norm{y}_\infty \le 0.1\}$ with $\oalpha = 5.88$ and $\ualpha = 3.85$. Now, suppose that the system is experiencing sinusoidal disturbances of the form: $h(t) = 0.1\sin(2t)$. We chose the initial condition of the system as $x_0 = [1 \ -1\ 1]^\top \times 10^{-2}$ and the desired state as $x^\star = [0 \ 0 \ 0]^\top$. Incidentally, the desired state is also the equilibrium point of the system which means that the desired control is $u^\star(t) \equiv 0$. 
\begin{figure}[ht]
    \centering
    \subfloat[]{\label{fig:ex1ccm}\includegraphics[width=0.5\textwidth]{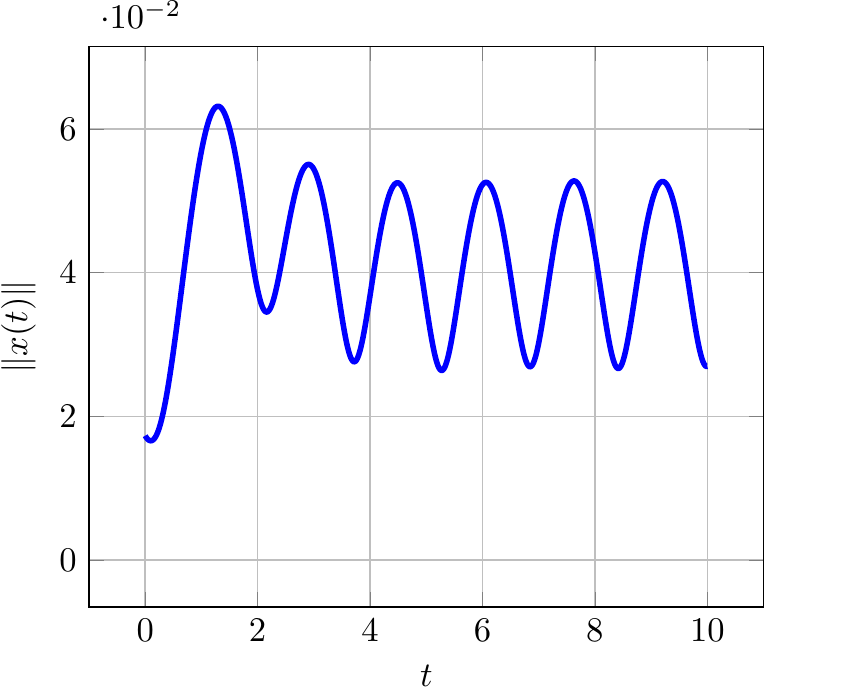}}
    \subfloat[]{\label{fig:ex1l1}\includegraphics[width=0.5\textwidth]{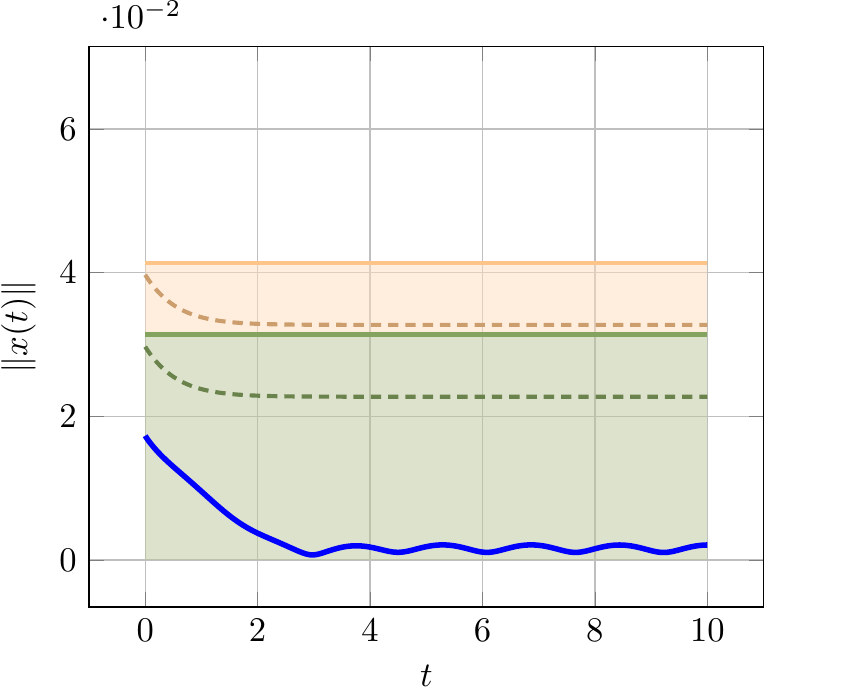}}
    \caption{Comparison of controller performance between (a) pure CCM-based feedback, and a (b) CCM-based $\mathcal{L}_1$ architecture. The green and orange shaded regions signify the induced $\Omega(\rho_r, x^\star)$ and $\Omega(\rho,x^\star)$ tubes respectively. The dashed green and orange lines signify the uniform ultimate bounds $\mu(\omega, T)$ and $\delta(\omega, T)$ evaluated at every timestep.}
    \label{fig:example1}
\end{figure}
A pure CCM-based feedback strategy produces the oscillatory behavior, seen in \cref{fig:ex1ccm}. A CCM-based $\mathcal{L}_1$-adaptive controller is designed in \cref{fig:ex1l1} for tube widths $\epsilon = 0.01$ and $\rho_a = 0.01$. The filter bandwidth and adaptation rate required to achieve this level of performance were chosen as $\omega = 50$ and $\Gamma = 5 \times 10^6$ respectively by satisfying the conditions in \cref{eqn:bandwidth_rate_conditions}. Notice that the bounds are far more conservative than the actual behavior of the system. In fact, the error in tracking is uniformly bounded as $\norm{x}_{\mathcal{L}_\infty} < 0.02$. Additionally, notice that the uniform ultimate bounds of the reference system tube from \cref{eqn:ref:ultimate_bound} and the actual system tube from \cref{eqn:main:ultimate_bound} shrink with time and are essentially `forgetting' the initial conditions of the system.

\subsection{Safe Tubes for Motion Planning}
Consider the system with the structure defined in \cref{eqn:proset:dynamics} and the system functions given by
\[
f(x) = \begin{bmatrix}
-x_1(t) + 2x_2(t) \\
-0.25x_2^3(t) -3x_1(t) + 4x_2(t)
\end{bmatrix}, \quad
B = \begin{bmatrix}
0.5 \\ -2
\end{bmatrix},
\]
where the state $x(t) = [x_1(t) \ x_2(t)]^\top$. Since this particular system is feedback linearizable, it admits a constant (or flat) dual metric for all $x \in \mathbb{R}^2$ \cite{manchester2018unifying}. The value of the dual metric and the associated convergence parameter is computed in \cite{singh2017robust} and provided here for completeness:
\[
W = \begin{bmatrix}
4.26 & -0.93\\
-0.93 & 3.77
\end{bmatrix},  \quad \lambda = 1.74.
\]
Similar to \cite{singh2017robust}, we chose the initial condition of the system as $x_0 = [3.4 \ -2.4]^\top$ and the target state as as $x^\star = [0 \ 0]^\top$. The desired state and control trajectory pair was computed using the iterative LQR solver provided by \cite{howell2020traj} with the parameters $Q = 0.5\mathbb{I}_2$ and $R = 1.0$. Suppose the system is affected by uncertainties of the form: $h(t,x) = - 2\sin(2t) - 0.1\norm{x(t)}$, consisting of both time and state dependent terms. Depending on the desired level of tracking performance or closeness to obstacles in the environment, the user will pick the tube parameters $\epsilon$ and $\rho_a$ as defined in \cref{eqn:conditions:rho}.  In \cref{fig:example2-bounds}, we illustrate the trade-offs between choosing a tighter $\rho_a$ (\cref{fig:ex2left}) versus a tighter $\epsilon$ (\cref{fig:ex2right}) for this system.

\begin{figure}[ht]
    \centering
    \subfloat[][$\epsilon=0.6,\rho_a=0.01$,\\$\omega=51, \Gamma=2\times10^9$]{\label{fig:ex2left}\includegraphics[width=0.33\textwidth]{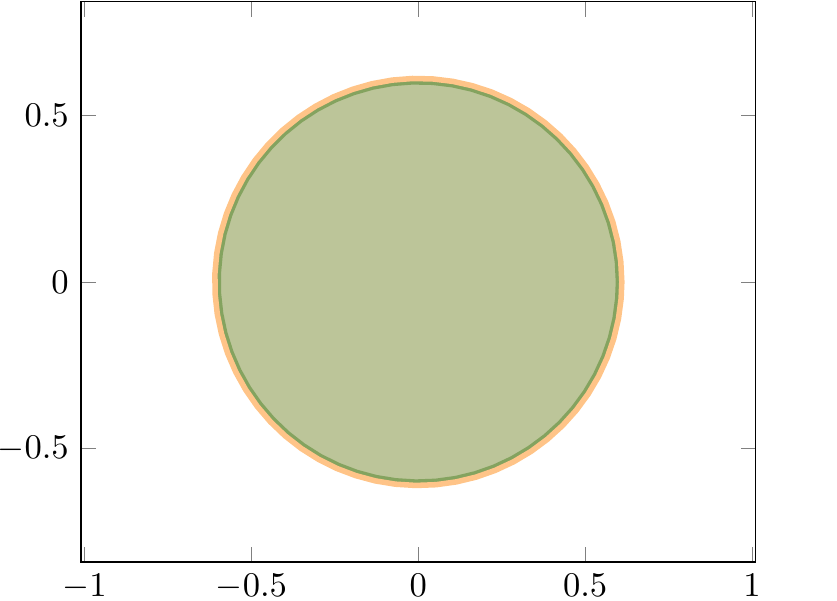}} \subfloat[(b)][$\epsilon=0.6,\rho_a=0.1$,\\$\omega=60, \Gamma=2\times10^7$]{\label{fig:ex2middle}\includegraphics[width=0.33\textwidth]{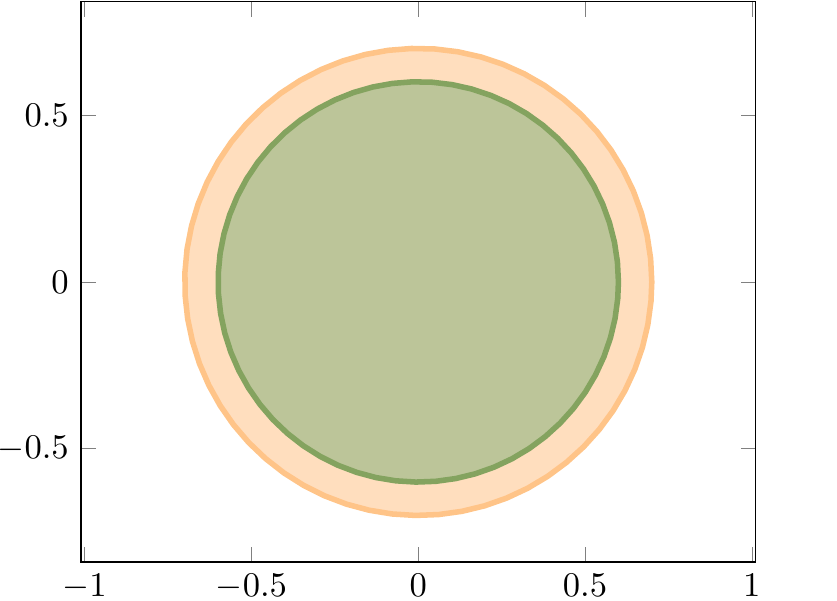}} 
    \subfloat[(c)][$\epsilon=0.4,\rho_a=0.1$,\\$\omega=90, \Gamma=4\times10^7$]{\label{fig:ex2right}\includegraphics[width=0.33\textwidth]{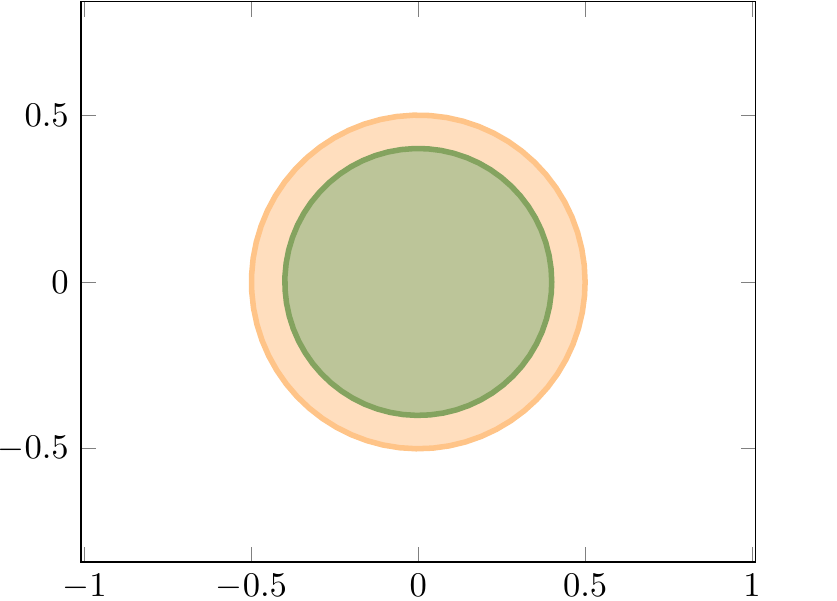}}
    \caption{Relationship between the choice of tube parameters $\epsilon$ and $\rho_a$  and the controller parameters $\omega$ and $\Gamma$ through the conditions defined in \cref{eqn:bandwidth_rate_conditions}. For clarity the initial conditions for the desired trajectory and the actual system in this illustration are assumed to be the same.}
    \label{fig:example2-bounds}
\end{figure}

In \cref{fig:ex2l1}, we observe the performance and robustness benefits of using CCM-based $\mathcal{L}_1$-adaptive control. Not only does the system track the desired trajectory closely, but also avoids colliding with obstacles (unlike in \cref{fig:ex2ccm}) through an appropriate choice of tube parameters. 

\begin{figure}[ht]
    \centering
    \subfloat[]{\label{fig:ex2ccm}\includegraphics[width=0.5\textwidth]{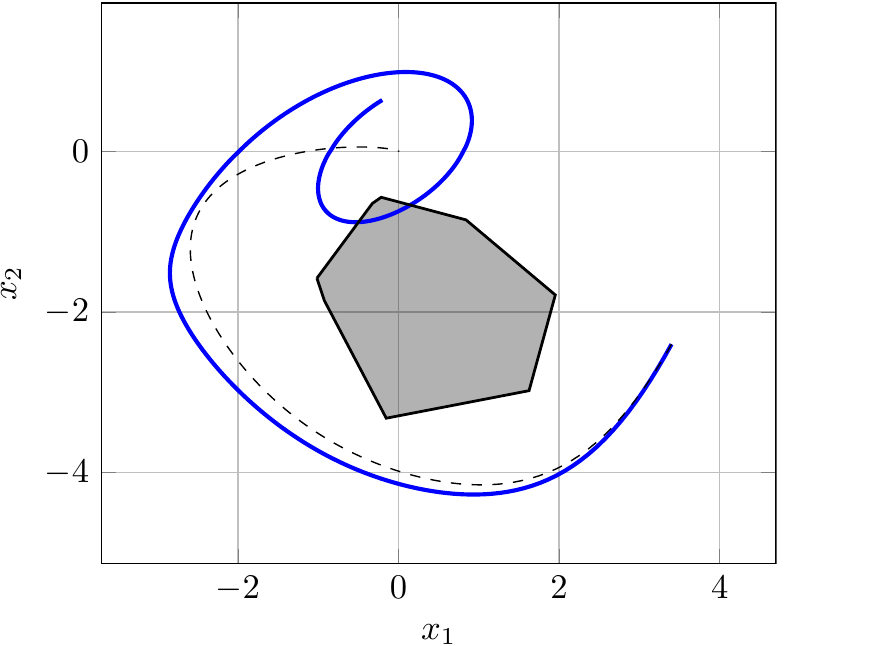}}
    \subfloat[]{\label{fig:ex2l1}\includegraphics[width=0.5\textwidth]{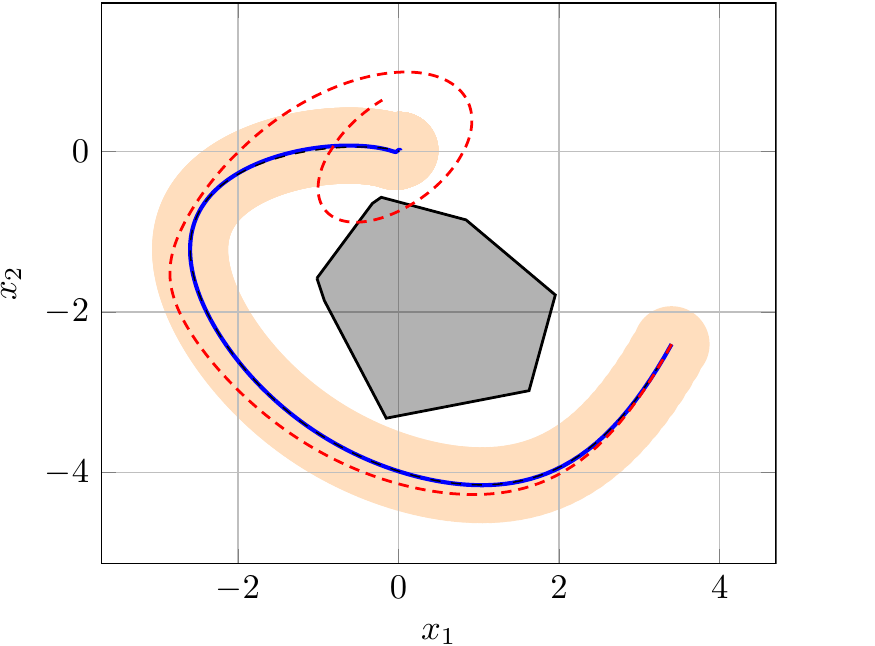}}
    \caption{Comparison of performance and robustness between (a) pure CCM-based feedback, and (a) CCM-based $\mathcal{L}_1$ architecture with tube parameters $\epsilon = 0.4$ and $\rho_a = 0.1$. The dashed black line shows the desired trajectory designed by a planner, the gray polygon is an obstacle, and the orange shaded region is the safe tube given by $\Omega(\rho,x^\star(t))$. The behavior of the system under pure CCM-based feedback has been overlaid as a dashed red line in (b) for clarity.}
    \label{fig:example2}
\end{figure}

%%%%%%%%%%%%%%%%%%%%%%%%%%%%%%%%%%%%%%%%%%%%%%%%%%%%%%%%%%%%%%%%%%%%%%%%%%%%%

%%%%%%%%%%%%%%%%%%%%%%%%%%%%%%%%%%%%%%%%%%%%%%%%%%%%%%%%%%%%%%%%%%%%%%%%%%%%%
\section{Conclusion and Future Work}\label{sec:numexp}
We present a control methodology to enable safe and guaranteed feedback motion planning. The presented work relies on differential geometric contraction theory and $\mathcal{L}_1$-adaptive control. The proposed controller enables the apriori computation of uniform and ultimate-bounds which act as safety-certificates. These safety certificates induce `tubes' which can be taken into account by any planner of choice. In this way, the safety of the system/robot is always guaranteed in the presence of model and environmental uncertainties. Furthermore, by using the control law's filter bandwidth and rate of adaptation as tuning knobs, the width of the safety tubes can be adjusted.
Future research will deal with the incorporation of learning into the proposed control framework. This will lead to improved performance since the underlying planner will have access to improved model knowledge. More importantly, the controller proposed in this work will keep the learning process safe because of the apriori specified and guaranteed bounds during the transient phase. Further investigations will be undertaken to extend this framework to enable safe planning and control using a reduced-order model to enable fast planning.

%%%%%%%%%%%%%%%%%%%%%%%%%%%%%%%%%%%%%%%%%%%%%%%%%%%%%%%%%%%%%%%%%%%%%%%%%%%%%

%%%%%%%%%%%%%%%%%%%%%%%%%%%%%%%%%%%%%%%%%%%%%%%%%%%%%%%%%%%%%%%%%%%%%%%%%%%%%
\section*{Acknowledgements}
This work is financially supported by Air Force Office of Scientific Research (AFOSR), National Aeronautics and Space Administration (NASA) and National Science Foundation’s National Robotics Institute (NRI) and Cyber Physical Systems (CPS) awards \#1830639 and \#1932529. We would also like to thank Dr.~Riccardo Bonalli for fruitful discussions regarding the proof for \cref{lem:roadblock}, Dr.~Evangelos Theodorou for introducing us to world of stochastic contraction theory \cite{bouvrie2019wasserstein} which will be especially useful in future work, and Dr.~Xiaofeng Wang for his valuable comments during the preparation of this manuscript.

\renewcommand{\thesection}{\Alph{section}}
\renewcommand{\theequation}{\Alph{section}.\arabic{equation}}
\setcounter{section}{0}
\setcounter{equation}{0}
\section*{\LARGE Appendix}
\section{Technical Results}
\begin{lemma}
\label{lem:scalar}
Consider a scalar system with vector inputs
\begin{align*}
\dot{z}(t) &= -az(t) + b(t)^\top\xi(t), \quad z(0) = 0, \\
\xi(s) &= (\mathbb{I}_m - C(s))\sigma(s),
\end{align*}
where $z(t) \in \mathbb{R}$ is the state, $\sigma(t) \in \mathbb{R}^m$ is the control input with a column vector of transfer functions $\sigma(s)$, $a > 0$ is a scalar, $b(t) \in \mathbb{R}^m$ is differentiable, and $C(s)$ is a low-pass filter of the form $\frac{\omega}{s+\omega} \mathbb{I}_m$ with bandwidth $\omega > 0$. If the following bounds hold
\[
\quad \norm{b}_{\mathcal{L}_\infty}^{[0,\tau]} \le \Delta_b, \quad \norm{\dot{\sigma}}_{\mathcal{L}_\infty}^{[0,\tau]} \le \Delta_{\sigma_t},
\]
for some $\tau>0$, then the following inequality holds
\[
\norm{z}_{\mathcal{L}_\infty}^{[0,\tau]} \le  \Delta_b \left(\frac{\norm{\sigma(0)}}{\abs{a - \omega}}  + \frac{\Delta_{\sigma_t}}{a\omega} \right).
\]
\end{lemma}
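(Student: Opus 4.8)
The plan is to convert both the filter and the scalar dynamics into explicit convolution integrals, exploiting the first-order structure of $C(s)$, and then to estimate separately the two physically distinct contributions to $\xi$: one generated by the initial value $\sigma(0)$ and one by the rate $\dot\sigma$. First I would rewrite the filter complement. Since $C(s) = \frac{\omega}{s+\omega}\mathbb{I}_m$, one has $\mathbb{I}_m - C(s) = \frac{s}{s+\omega}\mathbb{I}_m$, hence $\xi(s) = \frac{s}{s+\omega}\sigma(s)$. Substituting $s\sigma(s) = \mathscr{L}[\dot\sigma(t)] + \sigma(0)$ and inverting the Laplace transform yields the time-domain representation
\[
\xi(t) = e^{-\omega t}\sigma(0) + \int_0^t e^{-\omega(t-r)}\dot\sigma(r)\,dr,
\]
which isolates the two driving terms.

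Next I would solve the scalar equation for $z$ by variation of constants, writing $z(t) = \int_0^t e^{-a(t-r)} b(r)^\top \xi(r)\,dr$, and split $z = z_1 + z_2$ according to the two terms of $\xi$. For the initial-value part, $z_1(t) = \int_0^t e^{-a(t-r)} e^{-\omega r}\, b(r)^\top\sigma(0)\,dr$, I would use $\norm{b(r)} \le \Delta_b$ together with the closed form $\int_0^t e^{-a(t-r)}e^{-\omega r}\,dr = \frac{e^{-\omega t}-e^{-at}}{a-\omega}$ (valid for $a \neq \omega$). The decisive estimate is that, since $e^{-\omega t}, e^{-at} \in (0,1]$, nonnegativity gives $\abs{e^{-\omega t}-e^{-at}} \le \max(e^{-\omega t}, e^{-at}) \le 1$, so that $\abs{z_1(t)} \le \frac{\Delta_b\norm{\sigma(0)}}{\abs{a-\omega}}$.

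For the rate part I would proceed in two stages. First bound the inner convolution uniformly: $\norm{\int_0^r e^{-\omega(r-r')}\dot\sigma(r')\,dr'} \le \Delta_{\sigma_t}\,\frac{1-e^{-\omega r}}{\omega} \le \frac{\Delta_{\sigma_t}}{\omega}$. Propagating this constant bound through the outer convolution defining $z_2$ then gives $\abs{z_2(t)} \le \Delta_b\,\frac{\Delta_{\sigma_t}}{\omega}\int_0^t e^{-a(t-r)}\,dr \le \frac{\Delta_b\Delta_{\sigma_t}}{a\omega}$. Adding the two estimates by the triangle inequality and taking the supremum over $t \in [0,\tau]$ reproduces the claimed bound.

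The main obstacle is that $b(t)$ is time-varying, so the map $\sigma \mapsto z$ is not a single LTI transfer function and one cannot invoke a product of transfer-function norms directly; this forces the explicit two-stage convolution estimate above. Secondary care is needed in bounding the mixed exponential integral uniformly in $t$ — the inequality $\abs{e^{-\omega t}-e^{-at}} \le 1$ is exactly what produces the $1/\abs{a-\omega}$ factor — and in the degenerate case $a=\omega$, which is implicitly excluded here since the stated bound contains $1/\abs{a-\omega}$, and where the integral would instead evaluate to $t\,e^{-at}$.
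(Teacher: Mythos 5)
Your proposal is correct and follows essentially the same route as the paper's proof: both convert $\xi$ into the time-domain form $\xi(t)=e^{-\omega t}\sigma(0)+\int_0^t e^{-\omega(t-r)}\dot\sigma(r)\,dr$ (you via $\mathbb{I}_m-C(s)=\tfrac{s}{s+\omega}\mathbb{I}_m$, the paper via the equivalent ODE $\dot\xi=-\omega\xi+\dot\sigma$ with $\xi(0)=\sigma(0)$), substitute into the variation-of-constants formula for $z$, and bound the two resulting convolution terms by $\Delta_b\norm{\sigma(0)}/\abs{a-\omega}$ and $\Delta_b\Delta_{\sigma_t}/(a\omega)$ respectively. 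Your explicit remarks on the $\abs{e^{-\omega t}-e^{-at}}\le 1$ step and on the excluded case $a=\omega$ are consistent with what the paper implicitly assumes.
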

\begin{proof}
The scalar system can be equivalently written as
\begin{align}
\label{eqn:app:scalar1}
\dot{z}(t) &= -az(t) + b(t)^\top\xi(t)  &&z(0) = 0  \\
\label{eqn:app:scalar2}
\dot{\xi}(t) &= -\omega \xi(t) + \dot{\sigma}(t), &&\xi(0) = \sigma(0).
\end{align}
The solution to the differential equation in \cref{eqn:app:scalar2} is
\[
\xi(\lambda) = e^{-\omega \lambda} \sigma(0) + \int_{0}^\lambda e^{-\omega (\lambda - \nu)} \dot{\sigma}(\nu) \diff \nu,
\]
and the solution to \cref{eqn:app:scalar1} is
\[
z(t) = \int_{0}^t e^{-a (t - \lambda)} b(\lambda)^\top \xi(\lambda) \diff \lambda.
\]
Combining the equations above, we have
\begin{align*}    \norm{z(t)} \le  \Delta_b \frac{\norm{\sigma(0)}(e^{-\omega t} - e^{-a t})}{a - \omega} + \Delta_b \frac{\Delta_{\sigma_t}
    }{\omega}\left( \frac{1 - e^{-a t}}{a} - \frac{e^{-\omega t} - e^{-at}}{a - \omega}\right), \quad \forall t \in [0,\tau].
\end{align*}
This expression can be further bounded as
\begin{align*}
    \norm{z(t)} \le  \Delta_b \left(\frac{\norm{\sigma(0)}}{\abs{a - \omega}}  + \frac{\Delta_{\sigma_t}}{a \omega}\right),
\end{align*}
for all $t \in [0, \tau]$.
\end{proof}

\begin{lemma}
\label{lem:scalar2}
Consider a scalar system with vector inputs
\begin{align*}
\dot{z}(t) &= -az(t) + b(t)^\top\xi(t), \quad z(0) = 0, \\
\xi(s) &= (\mathbb{I}_m - C(s))\sigma(s),
\end{align*}
where $z(t) \in \mathbb{R}$ is the state,  $\sigma(t) \in \mathbb{R}^m$ is the control input with a column vector of transfer functions $\sigma(s)$, $a > 0$ is a scalar, $b(t) \in \mathbb{R}^m$ is differentiable, and $C(s)$ is a low-pass filter of the form $\frac{\omega}{s+\omega} \mathbb{I}_m$ with bandwidth $\omega > 0$. If the following bounds hold
\[
\quad \norm{b}_{\mathcal{L}_\infty}^{[0,\tau]} \le \Delta_b,  \quad \|\dot{b}\|_{\mathcal{L}_\infty}^{[0,\tau]} \le \Delta_{b_t}, \quad \norm{\sigma}_{\mathcal{L}_\infty}^{[0,\tau]} \le \Delta_{\sigma},
\]
for some $\tau>0$, then the following inequality holds
\[
\norm{z}_{\mathcal{L}_\infty}^{[0,\tau]} \le  \Delta_\sigma \frac{2 a \Delta_b + \Delta_{b_t}}{a\omega}.
\]
\end{lemma}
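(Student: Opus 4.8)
The plan is to mirror the structure of the proof of \cref{lem:scalar}, but to relocate the single time-derivative away from $\sigma$ (whose derivative is no longer controlled here) and onto $b$ and the exponential kernel, whose derivatives now \emph{are} bounded. First I would use the identity $\mathbb{I}_m - C(s) = \frac{s}{s+\omega}\mathbb{I}_m$ and introduce the low-pass component $\phi(s) = C(s)\sigma(s)$, whose time-domain representation is $\phi(\lambda) = \omega\int_0^\lambda e^{-\omega(\lambda-\nu)}\sigma(\nu)\diff\nu$ with $\phi(0)=0$, so that $\xi = \sigma - \phi$. The crux of the argument, which I expect to be the only genuinely delicate step, is the observation that $\phi$ obeys $\dot{\phi}(t) = -\omega\phi(t) + \omega\sigma(t) = \omega\bigl(\sigma(t)-\phi(t)\bigr) = \omega\,\xi(t)$, i.e. $\xi = \tfrac{1}{\omega}\dot{\phi}$. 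This converts the explicit $s$-factor of $(\mathbb{I}_m - C(s))$ into a genuine time-derivative of a \emph{bounded} signal, which is exactly what integration by parts can exploit.

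Next I would substitute $\xi = \tfrac{1}{\omega}\dot{\phi}$ into the solution formula $z(t) = \int_0^t e^{-a(t-\lambda)}b(\lambda)^\top\xi(\lambda)\diff\lambda$ inherited from the $z$-dynamics with $z(0)=0$, yielding $z(t) = \tfrac{1}{\omega}\int_0^t e^{-a(t-\lambda)}b(\lambda)^\top\dot{\phi}(\lambda)\diff\lambda$. Integrating by parts with $v = \phi$ against $u(\lambda) = e^{-a(t-\lambda)}b(\lambda)$, the boundary contribution at $\lambda = t$ is $b(t)^\top\phi(t)$ while the one at $\lambda = 0$ vanishes because $\phi(0)=0$; the remaining integral carries the derivative $\tfrac{d}{d\lambda}\bigl(e^{-a(t-\lambda)}b(\lambda)\bigr) = e^{-a(t-\lambda)}\bigl(a\,b(\lambda) + \dot{b}(\lambda)\bigr)$. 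This is precisely where the two terms $a\Delta_b$ and $\Delta_{b_t}$ in the claimed bound originate, and it is the reason no bound on $\dot{\sigma}$ is ever required.

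Finally I would majorize each piece uniformly in $t \in [0,\tau]$. Since $\phi$ is $\sigma$ passed through a unit-DC-gain low-pass filter, estimating its convolution representation gives $\norm{\phi(\lambda)} \le \Delta_\sigma(1 - e^{-\omega\lambda}) \le \Delta_\sigma$ for all $\lambda$. Using $\norm{b} \le \Delta_b$, $\norm{\dot{b}} \le \Delta_{b_t}$, $\norm{\phi} \le \Delta_\sigma$, and $\int_0^t e^{-a(t-\lambda)}\diff\lambda \le 1/a$, the boundary term contributes $\tfrac{1}{\omega}\Delta_b\Delta_\sigma$ and the integral contributes $\tfrac{1}{a\omega}(a\Delta_b + \Delta_{b_t})\Delta_\sigma$; summing gives $\tfrac{\Delta_\sigma}{\omega}\bigl(2\Delta_b + \Delta_{b_t}/a\bigr) = \Delta_\sigma\,\tfrac{2a\Delta_b + \Delta_{b_t}}{a\omega}$, which is the asserted inequality. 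Once the identity $\xi = \tfrac{1}{\omega}\dot{\phi}$ is recognized and the derivative transferred by parts, the remainder is routine estimation.
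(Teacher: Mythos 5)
Your proof is correct and reaches exactly the stated bound, but it takes a genuinely different route from the paper's. The paper follows the standard $\mathcal{L}_1$-adaptive-control recipe: it views the map from $\sigma$ to $z$ as a time-varying convolution operator $\mathcal{Y}$, invokes $\norm{z}_{\mathcal{L}_\infty}^{[0,\tau]} \le \norm{\mathcal{Y}}_{\mathcal{L}_1}^{[0,\tau]}\norm{\sigma}_{\mathcal{L}_\infty}^{[0,\tau]}$ (as in \cite[Lemma A.7.1]{hovakimyan2010L1}), writes out the impulse response $y(t)$ explicitly by pushing a Dirac delta through the filter, integrates by parts inside $y(t)$, and then bounds $\int_0^\tau \norm{y(t)}\diff t$. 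You instead argue pathwise: you realize the high-pass component through the state-space identity $\dot{\phi}(t) = \omega(\sigma(t)-\phi(t)) = \omega\,\xi(t)$ for $\phi = \mathscr{L}^{-1}[C(s)\sigma(s)]$ with $\phi(0)=0$, substitute $\xi = \dot{\phi}/\omega$ into the variation-of-constants formula for $z$, and integrate by parts there. The essential mechanism is the same in both proofs --- the single derivative produced by $\mathbb{I}_m - C(s) = \frac{s}{s+\omega}\mathbb{I}_m$ is transferred onto $b$ and the kernel $e^{-a(t-\lambda)}$, which is precisely why only $\dot{b}$ and not $\dot{\sigma}$ must be bounded (the key contrast with \cref{lem:scalar}) --- and your two contributions, $\Delta_b\Delta_\sigma/\omega$ from the boundary term and $(a\Delta_b+\Delta_{b_t})\Delta_\sigma/(a\omega)$ from the integral, match the paper's estimate term by term. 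Your version is the more elementary of the two: it avoids defining an impulse response for a time-varying system and the attendant delta-function bookkeeping, needing only the unit-DC-gain bound $\norm{\phi}\le\Delta_\sigma$. What the paper's operator formulation buys is a bound on the operator norm itself, uniform over all inputs with the same sup-norm, and notational consistency with the $\mathcal{L}_1$-norm machinery used throughout the rest of the analysis.
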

\begin{proof}
We closely follow the analysis in \cite[Lemma 1]{wang2017adaptive}, but show that the inequality holds for vector inputs as well. Since the input is bounded in truncation, the norm of the system solution is bounded as
\[
\norm{z}_{\mathcal{L}_\infty}^{[0, \tau]}  \le \norm{\mathcal{Y}}_{\mathcal{L}_1}^{[0, \tau]} \norm{\sigma}_{\mathcal{L}_\infty}^{[0, \tau]},
\]
where $\mathcal{Y}$ is the equivalent of a transfer function between $\sigma$ and $z$ \cite[Lemma A.7.1]{hovakimyan2010L1}. The impulse response $y(t)$ of $\mathcal{Y}$ to the Dirac delta function $\delta(t)$ is characterized as
\begin{align*}
    y(t) &= \int_0^t e^{-a(t - \lambda)}b(\lambda)^\top \int_0^\lambda \mathds{1}_m (\delta(\nu) - \omega e^{-\omega \nu}) \delta(\lambda - \nu) \diff\nu \diff\lambda \\
    &= \int_0^t e^{-a(t - \lambda)}b(\lambda)^\top \mathds{1}_m (\delta(\lambda) - \omega e^{-\omega \lambda}) \diff\lambda \\
    &= e^{-at} b(0)^\top \mathds{1}_m -  \int_0^t e^{-a(t - \lambda)}b(\lambda)^\top \mathds{1}_m \omega e^{-\omega \lambda} \diff\lambda.
\end{align*}
Integrating the second term by parts we obtain the solution of the system as
\begin{align*}
    y(t)&= e^{-\omega t} b(t)^\top \mathds{1}_m - \int_0^t \left(e^{-a(t - \lambda)}\dot{b}(\lambda) -ae^{-a(t - \lambda)}b(\lambda)\right)^\top \mathds{1}_m e^{-\omega \lambda} \diff\lambda.
\end{align*}
The $\mathcal{L}_1$ norm of $\mathcal{Y}$ is simply the norm of the impulse response, which is given by
\begin{align*}
    \mathcal{Y}_{\mathcal{L}_1}^{[0, \tau]} &= \norm{y}_{\mathcal{L}_1}^{[0, \tau]} = \int_0^\tau \norm{y(t)} \diff t \\
    &= \int_0^\tau \norm{e^{-\omega t} b(t)^\top \mathds{1}_m - \int_0^t \left(e^{-a(t - \lambda)}\dot{b}(\lambda) -ae^{-a(t - \lambda)}b(\lambda)\right)^\top \mathds{1}_m e^{-\omega \lambda} \diff \lambda}\diff t\\
    &\le \Delta_b \frac{1 - e^{-\omega \tau}}{\omega} + \int_0^\tau\norm{\int_0^t \left(e^{-a(t - \lambda)}\dot{b}(\lambda) -ae^{-a(t - \lambda)}b(\lambda)\right)^\top \mathds{1}_m e^{-\omega \lambda} \diff \lambda} \diff t \\
    &\le \Delta_b \frac{1 - e^{-\omega \tau}}{\omega} + (\Delta_{b_t} + a\Delta_b) \int_0^\tau\int_0^t e^{-a(t - \lambda)} e^{-\omega \lambda} \diff \lambda \diff t \\
    &\le \Delta_b \frac{1 - e^{-\omega \tau}}{\omega} + (\Delta_{b_t} + a\Delta_b) \left( \frac{1}{a\omega} - \frac{a e^{-\omega \tau} - \omega e^{-a\tau}}{a\omega(a-\omega)} \right).
\end{align*}
This expression can be further bounded as
\begin{align*}
    \mathcal{Y}_{\mathcal{L}_1}^{[0, \tau]} &\le \frac{2a\Delta_{b} + \Delta_{b_t}}{a\omega}
\end{align*}
Therefore, the solution of the system $z(t)$ is bounded as
\begin{align*}
    z(t) &\le \Delta_\sigma \frac{2a\Delta_{b} + \Delta_{b_t}}{a\omega},
\end{align*}
for all $t \in [0, \tau]$.
\end{proof}

\begin{lemma}
\label{lem:riemannbounds}
Consider the Riemannian manifold $(\mathcal{X},M)$ where the metric satisfies \cref{assmp:CCM}, i.e the following uniform bounds hold
\[
\ualpha \mathbb{I}_n \preceq M(x) \preceq \oalpha \mathbb{I}_n, \quad \forall x \in \mathcal{X},
\]
where $\oalpha \ge \ualpha > 0$. Then the Riemannian energy satisfies the following uniform bounds
\[
 \underline{\alpha} \norm{p - q}^2 \leq \mathcal{E}(p,q) \leq \oalpha \norm{p - q}^2, \quad \forall p,q \in \mathcal{X}.
\] 
\end{lemma}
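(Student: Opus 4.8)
The plan is to establish both inequalities by relating the Riemannian length of curves to the Euclidean distance, exploiting the pointwise bounds $\ualpha \mathbb{I}_n \preceq M(x) \preceq \oalpha \mathbb{I}_n$. Recall from \cref{eqn:prelim:length,eqn:prelim:distance,eqn:prelim:energy} that $\mathcal{E}(p,q) = d(p,q)^2$ where $d(p,q) = \inf_{\gamma \in \Xi(p,q)} l(\gamma)$ and $l(\gamma) = \int_0^1 \sqrt{\gamma_s^\top M(\gamma)\gamma_s}\,\diff s$. The key observation is that the pointwise metric bounds translate directly into pointwise bounds on the integrand: for any curve $\gamma$ and any $s \in [0,1]$, one has $\ualpha \norm{\gamma_s(s)}^2 \leq \gamma_s(s)^\top M(\gamma(s)) \gamma_s(s) \leq \oalpha \norm{\gamma_s(s)}^2$, which upon taking square roots and integrating gives $\sqrt{\ualpha}\, l_{\mathrm{euc}}(\gamma) \leq l(\gamma) \leq \sqrt{\oalpha}\, l_{\mathrm{euc}}(\gamma)$, where $l_{\mathrm{euc}}(\gamma) = \int_0^1 \norm{\gamma_s(s)}\,\diff s$ is the ordinary Euclidean arc length.

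For the \textbf{upper bound}, I would use the fact that the infimum defining $d(p,q)$ is over all curves, so it suffices to exhibit one good curve. Taking the straight-line segment $\gamma(s) = p + s(q-p)$, which is a valid element of $\Xi(p,q)$ (assuming $\mathcal{X}$ is such that this segment stays in the domain, or more precisely that the bounds hold along it), its Euclidean length is exactly $\norm{p-q}$, so $d(p,q) \leq l(\gamma) \leq \sqrt{\oalpha}\norm{p-q}$. Squaring yields $\mathcal{E}(p,q) \leq \oalpha\norm{p-q}^2$. For the \textbf{lower bound}, I would apply the left inequality to the minimizing geodesic $\ogamma$: since $l(\ogamma) \geq \sqrt{\ualpha}\, l_{\mathrm{euc}}(\ogamma)$ and the Euclidean length of any curve connecting $p$ and $q$ is at least the straight-line distance $\norm{p-q}$ (triangle inequality / the straight line minimizes Euclidean length), we obtain $d(p,q) = l(\ogamma) \geq \sqrt{\ualpha}\norm{p-q}$, and squaring gives $\ualpha\norm{p-q}^2 \leq \mathcal{E}(p,q)$.

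The main obstacle I anticipate is \emph{not} the core estimates, which are routine, but the \emph{domain/geodesic subtleties}: one must ensure the straight-line segment used for the upper bound actually lies in $\mathcal{X}$ so that the metric bounds apply along it (this requires convexity of $\mathcal{X}$, or at least that the segment remains in the region where $\ualpha \mathbb{I}_n \preceq M \preceq \oalpha \mathbb{I}_n$ holds), and that the minimizing geodesic $\ogamma$ exists and stays in $\mathcal{X}$ (guaranteed here by the Hopf--Rinow theorem invoked after \cref{eqn:prelim:distance}). If $\mathcal{X}$ is not convex, the upper-bound argument needs adjustment — one could instead argue that the Euclidean length of \emph{any} admissible curve dominates $\norm{p-q}$ for the lower bound (which is unconditional) while the upper bound genuinely requires an admissible curve of controlled Euclidean length. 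I would flag this convexity requirement explicitly, since the cleanest version of the proof presumes the straight segment is admissible.
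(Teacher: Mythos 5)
Your proof is correct, but it takes a somewhat different route from the paper's. The paper first establishes the identity $\mathcal{E}(p,q) = \inf_{\gamma \in \Xi(p,q)} E(\gamma)$, where $E(\gamma) = \int_0^1 \gamma_s^\top M(\gamma)\gamma_s\,\diff s$ is the energy functional; this requires the Cauchy--Schwarz inequality $l(\gamma)^2 \le E(\gamma)$ together with the constant-speed property of minimizing geodesics (which forces equality $l(\ogamma)^2 = E(\ogamma)$). It then sandwiches $E$ between $\ualpha$ and $\oalpha$ times the flat energy and evaluates the flat infimum as $\norm{p-q}^2$. You instead sandwich the \emph{length} functional pointwise, $\sqrt{\ualpha}\,l_{\mathrm{euc}}(\gamma) \le l(\gamma) \le \sqrt{\oalpha}\,l_{\mathrm{euc}}(\gamma)$, and square at the end; this bypasses both the constant-speed lemma and the identity $d^2 = \inf E$ entirely (indeed, for your lower bound you do not even need the minimizer to exist, since every admissible curve already satisfies $l(\gamma) \ge \sqrt{\ualpha}\norm{p-q}$ and you may take the infimum directly). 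What the paper's longer route buys is the identity $\mathcal{E}(p,q) = \inf_\gamma E(\gamma)$ itself, which is reused later (the proof of \cref{lem:roadblock} needs $\ogamma$ to minimize the energy functional in order to invoke the Euler--Lagrange equations), so it is not wasted effort in context. Your explicit flag about the straight segment needing to lie in $\mathcal{X}$ for the upper bound is a fair observation: the paper's proof relies on exactly the same admissibility of $\gamma(s) = (1-s)p + sq$ when it asserts that the flat infimum equals $\norm{p-q}^2$, and silently assumes it; the lower bound is unconditional in both arguments.
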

%%%%%%%%%%%%%%%%%%%%%%%%%
\begin{proof}
Let $\Xi(p,q)$ be the set of smooth curves connecting $p,q \in \mathcal{X}$. Applying the Cauchy-Schwarz inequality to the definition of the length of curve given in \cref{eqn:prelim:length} produces
\begin{equation}\label{eqn:length_squared_UB1}
    l(\gamma)^2 \leq \int_0^1 \gamma_s(s)M(\gamma(s))\gamma_s(s)\diff s \triangleq E(\gamma), \quad \forall \gamma \in \Xi(p,q).
\end{equation} Furthermore, since the minimizing geodesic $\ogamma \in \Xi(p,q)$ has the property that
$\sqrt{\gamma_s(s)M(\gamma(s))\gamma_s(s)}$ is a constant for all $s \in [0,1]$~\cite[Lemma~5.5]{lee2006riemannian}, the application of Cauchy-Schwarz also produces
\begin{equation}\label{eqn:length_squared_UB2}
    l(\ogamma)^2 = \int_0^1 \ogamma_s(s)M(\ogamma(s))\ogamma_s(s)\diff s = E(\ogamma).
\end{equation} Since, by definition $\ogamma$ is a minimizer of $l(\gamma)$ and $l(\gamma)\ge 0$ always,  it is also a minimizer of $l^2(\gamma)$. Using Eqs.~\eqref{eqn:length_squared_UB1}-\eqref{eqn:length_squared_UB2}, we conclude that
\[
\inf_{\gamma \in \Xi(p,q)} l(\gamma)^2 = l(\ogamma)^2 = E(\ogamma) = \inf_{\gamma \in \Xi(p,q)} E(\gamma). 
\] By the definition of the Riemannian energy given in \cref{eqn:prelim:energy}, the following chain of equalities are satisfied
\[
\mathcal{E}(p,q) = d(p,q)^2 = \left(\inf_{\gamma \in \Xi(p,q)} l(\gamma)  \right)^2
= \inf_{\gamma \in \Xi(p,q)}  l(\gamma)^2.
\] Using the previous two equalities, we conclude that
\begin{equation}\label{eqn:Riemannian_enrgy_minimizer_energy}
\mathcal{E}(p,q) = \inf_{\gamma \in \Xi(p,q)} E(\gamma).    
\end{equation} Using the bounds on $M(x)$, we further obtain
\begin{equation}\label{eqn:Riemannian_energy_int_1}
\ualpha \inf_{\gamma \in \Xi(p,q)} \int_0^1 \gamma_s(s)^\top \gamma_s(s) \diff s \leq   \mathcal{E}(p,q) \leq \oalpha \inf_{\gamma \in \Xi(p,q)} \int_0^1 \gamma_s(s)^\top \gamma_s(s) \diff s.
\end{equation} The expression $\inf_{\gamma \in \Xi(p,q)} \int_0^1 \gamma_s(s)^\top \gamma_s(s) \diff s$ represents the Riemannian energy over a manifold with the flat (i.e. constant) metric $\mathbb{I}_n$. Under the flat metrics, geodesics are straight lines, i.e., $\gamma(s) = (1-s)p + sq$, $s \in [0,1]$. Thus,
\[
\inf_{\gamma \in \Xi(p,q)} \int_0^1 \gamma_s(s)^\top \gamma_s(s)\diff s = \norm{p -q}^2.
\] Substituting into~\eqref{eqn:Riemannian_energy_int_1} then completes the result.
\end{proof}

\begin{lemma}
\label{lem:geobound}
Consider a minimizing geodesic $\ogamma: [0,1] \to \mathcal{X}$ under the metric $M(x)$, $x \in \mathcal{X}$, between points $p,q \in \mathcal{X}$ such that $\ogamma(0) = p$ and $\ogamma(1) = q$. If the metric $M(x)$ satisfies Assumption~\ref{assmp:CCM}, then the following inequalities are satisfied
\begin{align}
    \label{eqn:app:geobound1}\norm{M(\ogamma(s))\ogamma_s(s)} &\le \oalpha \norm{p - q}, \\
    \label{eqn:app:geobound2}\norm{\ogamma_s(s)} &\le \sqrt{\frac{\oalpha}{\ualpha}} \norm{p - q},
\end{align}
for all $s \in [0, 1]$,  where $\Theta(x)^\top \Theta(x) = M(x)$. 
% Further, there exists $\oalpha \ge \ualpha > 0$ such that $\oalpha \mathbb{I}_n \succeq M(x) \succeq \ualpha \mathbb{I}_n$.
\end{lemma}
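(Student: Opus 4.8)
The plan is to exploit the constant-speed parametrization of the minimizing geodesic, which upgrades the integral energy identity established in the proof of \cref{lem:riemannbounds} to a \emph{pointwise} statement. Recall from that proof that for the minimizing geodesic $\ogamma$ the quantity $\sqrt{\ogamma_s(s)^\top M(\ogamma(s)) \ogamma_s(s)}$ is a constant $c$ independent of $s \in [0,1]$ (by \cite[Lemma~5.5]{lee2006riemannian}). Integrating over $[0,1]$ gives $c = l(\ogamma) = d(p,q) = \sqrt{\mathcal{E}(p,q)}$, so that
\begin{equation*}
\ogamma_s(s)^\top M(\ogamma(s)) \ogamma_s(s) = \mathcal{E}(p,q), \quad \forall s \in [0,1].
\end{equation*}
Combined with the upper energy bound $\mathcal{E}(p,q) \le \oalpha \norm{p-q}^2$ from \cref{lem:riemannbounds}, this single identity is what both inequalities will follow from.

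For the second inequality I would apply the lower bound $\ualpha \mathbb{I}_n \preceq M(\ogamma(s))$ from \cref{assmp:CCM} to the quadratic form above, obtaining $\ualpha \norm{\ogamma_s(s)}^2 \le \ogamma_s(s)^\top M(\ogamma(s)) \ogamma_s(s) = \mathcal{E}(p,q) \le \oalpha \norm{p-q}^2$. Dividing by $\ualpha > 0$ and taking square roots yields $\norm{\ogamma_s(s)} \le \sqrt{\oalpha/\ualpha}\,\norm{p-q}$, which is \cref{eqn:app:geobound2}.

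For the first inequality the naive submultiplicative estimate $\norm{M \ogamma_s} \le \norm{M}\,\norm{\ogamma_s} \le \oalpha \sqrt{\oalpha/\ualpha}\,\norm{p-q}$ is too weak, so the main obstacle is recovering the tight constant $\oalpha$ rather than $\oalpha\sqrt{\oalpha/\ualpha}$. I would instead write $\norm{M(\ogamma(s)) \ogamma_s(s)}^2 = \ogamma_s(s)^\top M(\ogamma(s))^2 \ogamma_s(s)$ using symmetry of $M$, and establish the matrix inequality $M^2 \preceq \oalpha M$. This follows by conjugating $M \preceq \oalpha \mathbb{I}_n$ by the symmetric positive-definite square root $M^{1/2}$: since congruence preserves the semidefinite order, $M^2 = M^{1/2} M M^{1/2} \preceq \oalpha M^{1/2} M^{1/2} = \oalpha M$. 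Consequently $\ogamma_s^\top M^2 \ogamma_s \le \oalpha\, \ogamma_s^\top M \ogamma_s = \oalpha\, \mathcal{E}(p,q) \le \oalpha^2 \norm{p-q}^2$, and taking square roots gives $\norm{M(\ogamma(s)) \ogamma_s(s)} \le \oalpha \norm{p-q}$, i.e. \cref{eqn:app:geobound1}. The factorization $M = \Theta^\top \Theta$ named in the statement could replace $M^{1/2}$, but the symmetric square root makes the congruence step most transparent.
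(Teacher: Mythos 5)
Your proposal is correct and follows essentially the same route as the paper: both rest on the constant-speed property of the minimizing geodesic giving $\ogamma_s^\top M \ogamma_s = \mathcal{E}(p,q) \le \oalpha\norm{p-q}^2$ pointwise, and both recover the tight constant $\oalpha$ in \cref{eqn:app:geobound1} by splitting $M$ through a square-root factor (the paper uses $\norm{M\ogamma_s} \le \norm{\Theta^\top}\,\norm{\Theta\ogamma_s}$ with $M=\Theta^\top\Theta$, which is algebraically identical to your congruence argument $M^2 \preceq \oalpha M$). No gaps.
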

\begin{proof}
From the proof of Lemma~\ref{lem:riemannbounds}, the Riemannian distance $d(p,q)$ is bounded as
\[
\sqrt{\ualpha} \norm{p - q} \le d(p, q) \le \sqrt{\oalpha} \norm{p - q},
\]
since the metric $M(x)$ satisfies Assumption~\ref{assmp:CCM}. Geodesics exhibit the special property \cite[Lemma 5.5]{lee2006riemannian} that $\overline{\gamma}_s(s)^\top M(\overline{\gamma}_s(s)) \overline{\gamma}_s(s) = c > 0$ for all $s \in [0, 1]$, i.e. they are of constant speed. Since $s \in [0, 1]$, using the proof of Lemma~\ref{lem:riemannbounds}, it can be shown that the speed of a geodesic is also the length of the geodesic
\begin{equation}
\label{eqn:app:geobound3}
\sqrt{\overline{\gamma}_s(s)^\top M(\overline{\gamma}(s)) \overline{\gamma}_s(s)} = d(p, q) \le \sqrt{\oalpha} \norm{p - q},
\end{equation}
for all $s \in [0, 1]$. Therefore given the factorization $M(x) = \Theta(x)^\top \Theta(x)$, we obtain the following inequality
\[
\norm{\Theta(\ogamma(s))\ogamma_s(s)} \le \sqrt{\oalpha} \norm{p - q}.
\]
Since $\norm{M(\ogamma(s))\ogamma_s(s)} \le \norm{\Theta(\ogamma(s))^\top}\norm{\Theta(\ogamma(s))\ogamma_s(s)}$ (because of the submultiplicative property of induced matrix norms and $\norm{\Theta(x)} \le \sqrt{\alpha}$ for all $x \in \mathcal{X}$), we arrive at \cref{eqn:app:geobound1}. Additionally, since $M(x)$ is uniformly bounded, the following inequality holds
\[
\ualpha \ogamma_s(s)^\top \mathbb{I}_n \ogamma_s(s) \le
\ogamma_s(s)^\top M(\ogamma(s)) \ogamma_s(s),
\]
for all $s \in [0, 1]$. Combining the above inequality with \cref{eqn:app:geobound3}, we obtain the result in \cref{eqn:app:geobound2}.
\end{proof}
%%%%%%%%%%%%%%%%%%%%%%%%%%%%%%%%%%%%%%%%%%%%%%%%%%%%%%%%%%%%%%%%%%%%%%%%%%%%%%%%%%%%%%%

%%%%%%%%%%%%%%%%%%%%%%%%%%%%%%%%%%%%%%%%%%%%%%%%%%%%%%%%%%%%%%%%%%%%%%%%%%%%%%%%%%%%%%%
\begin{lemma}
\label{lem:roadblock}
Consider two smooth trajectories $x_0$ and $x_1$ such that $x_0(t),x_1(t) \in \Omega(\rho, x^\star(t))$, for all $t \in [0, \tau]$, for some $\tau>0$, and a minimizing geodesic $\overline{\gamma}(\cdot, t): [0, 1] \to \mathcal{X}$ under the metric $M(x)$ such that $\overline{\gamma}(0, t) = x_0(t)$ and $\overline{\gamma}(1, t) = x_1(t)$. Then, if $M(x)$ satisfies Assumption~\ref{assmp:CCM}, we have
\[
\norm{\ogamma_s(1,t)^\top M(\ogamma(1,t))-\ogamma_s(0,t)^\top M(\ogamma(0,t))}  \le \frac{\overline{\alpha}}{2\ualpha}\Delta_{M_x}\norm{x_0(t) - x_1(t)}^2, \quad  \forall t \in [0, \tau],
\]
where $\Delta_{M_x}$ is defined in \cref{eqn:bounds:M_x}.
\end{lemma}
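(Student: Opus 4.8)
The plan is to reduce the claim to a first-order estimate along the geodesic via the fundamental theorem of calculus, and then to control the resulting integrand using the geodesic equation. Since $M$ is symmetric, the row vector in question can be written as $\ogamma_s(s,t)^\top M(\ogamma(s,t)) = \left(M(\ogamma(s,t))\ogamma_s(s,t)\right)^\top$, so that (suppressing the temporal argument)
\[
\ogamma_s(1,t)^\top M(\ogamma(1,t)) - \ogamma_s(0,t)^\top M(\ogamma(0,t)) = \int_0^1 \frac{\diff}{\diff s}\left(M(\ogamma(s))\ogamma_s(s)\right)^\top \diff s,
\]
and hence $\norm{\ogamma_s(1,t)^\top M(\ogamma(1,t)) - \ogamma_s(0,t)^\top M(\ogamma(0,t))} \le \int_0^1 \norm{\frac{\diff}{\diff s}\left(M(\ogamma)\ogamma_s\right)}\diff s$.

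The crucial step, and the one I expect to be the main obstacle, is controlling $\frac{\diff}{\diff s}(M(\ogamma)\ogamma_s)$: expanding this naively produces a second-derivative term $\ogamma_{ss}$ which is not bounded by any of the stated hypotheses. The resolution is to invoke the geodesic (Euler--Lagrange) equation associated with the energy functional $E(\gamma) = \int_0^1 \gamma_s^\top M(\gamma)\gamma_s \diff s$, whose minimizer is $\ogamma$ (as established en route to \cref{lem:riemannbounds}). Writing the Lagrangian componentwise, the stationarity condition gives, for each coordinate $j$,
\[
\frac{\diff}{\diff s}\left(M(\ogamma)\ogamma_s\right)_j = \frac{1}{2}\,\ogamma_s^\top \pdv{M}{x_j}(\ogamma)\,\ogamma_s,
\]
which trades the undesirable $\ogamma_{ss}$ term for first-order data expressed through the partial derivatives of $M$ — precisely the quantities controlled by $\Delta_{M_x}$.

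With this identity in hand the remaining steps are routine estimates. Bounding each component by $\frac{1}{2}\norm{\pdv{M}{x_j}(\ogamma)}\norm{\ogamma_s}^2$ and using $\sqrt{\sum_j a_j^2}\le\sum_j a_j$ for nonnegative $a_j$, I would obtain $\norm{\frac{\diff}{\diff s}(M(\ogamma)\ogamma_s)} \le \tfrac{1}{2}\norm{\ogamma_s}^2 \sum_{j=1}^n \norm{\pdv{M}{x_j}(\ogamma)} \le \tfrac{1}{2}\Delta_{M_x}\norm{\ogamma_s}^2$, where the final inequality invokes the definition of $\Delta_{M_x}$ in \cref{eqn:bounds:M_x} together with the fact that the geodesic between two points of $\Omega(\rho,x^\star(t))$ lies in $\mathcal{O}(\rho)$. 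Next I would apply \cref{eqn:app:geobound2} of \cref{lem:geobound}, namely $\norm{\ogamma_s(s,t)} \le \sqrt{\oalpha/\ualpha}\,\norm{x_0(t) - x_1(t)}$, to conclude $\norm{\ogamma_s}^2 \le (\oalpha/\ualpha)\norm{x_0(t) - x_1(t)}^2$. Because this bound is uniform in $s$, integrating over $[0,1]$ collapses the integral and yields
\[
\norm{\ogamma_s(1,t)^\top M(\ogamma(1,t)) - \ogamma_s(0,t)^\top M(\ogamma(0,t))} \le \frac{\oalpha}{2\ualpha}\Delta_{M_x}\norm{x_0(t) - x_1(t)}^2,
\]
which is the desired inequality for every $t \in [0,\tau]$.
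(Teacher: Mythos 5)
Your proof is correct and follows essentially the same route as the paper's: both invoke the Euler--Lagrange equation for the energy functional to convert $\frac{\diff}{\diff s}\left(\ogamma_s^\top M(\ogamma)\right)$ into $\frac{1}{2}\ogamma_s^\top \pdv{M}{x}(\ogamma)\ogamma_s$, integrate over $s\in[0,1]$ via the fundamental theorem of calculus, and then apply the bounds $\sum_i\norm{\pdv{M}{x_i}}\le\Delta_{M_x}$ and $\norm{\ogamma_s}^2\le(\oalpha/\ualpha)\norm{x_0-x_1}^2$ from \cref{eqn:bounds:M_x} and \cref{lem:geobound}. The only difference is cosmetic: you bound the integrand before integrating, while the paper integrates the identity first and then bounds.
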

%%%%%%%%%%%%%%%%%%%%%%%%%%%%%
\begin{proof}
Recall that the \textit{differential Lyapunov function} is defined as $V(x, \delta x) := \delta x^\top M(x) \delta x$. Then, along the geodesic $\ogamma$, the differential Lyapunov function can be written as
\[
V(\ogamma(s,t),\ogamma_s(s,t)) = \ogamma_s(s,t)^\top M(\ogamma(s,t))\ogamma_s(s,t), \quad s \in [0,1].
\] Thus, we obtain the expressions
\begin{align*}
    \pdv{V}{x}(x,\delta x) = & 
    \begin{bmatrix}  
    \delta x^\top \pdv{M}{x_1}(x)\delta x & \cdots & \delta x^\top \pdv{M}{x_n}(x)\delta x
    \end{bmatrix} := \delta_x^\top \pdv{M}{x}(x) \delta x \in \mathbb{R}^{1 \times n},\\
    \pdv{V}{\delta x}(x,\delta x) = & 2 \delta x^\top M(x) \in \mathbb{R}^{1 \times n},
\end{align*} which evaluated at the geodesic $\ogamma$ are given by
\begin{align}
    \pdv{V}{x}(\ogamma(s,t),\ogamma_s(s,t)) = & 
    \begin{bmatrix}  
    \ogamma_s(s,t)^\top \pdv{M}{x_1}(\ogamma(s,t))\ogamma_s(s,t) & \cdots & \ogamma_s(s,t)^\top \pdv{M}{x_n}(\ogamma(s,t))\ogamma_s(s,t) 
    \end{bmatrix} \notag \\
    = & \label{eqn:roadblock:V_diff_x} \ogamma_s(s,t)^\top \pdv{M}{x}(\ogamma(s,t)) \ogamma_s(s,t),\\
    \pdv{V}{\delta x}(\ogamma(s,t),\ogamma_s(s,t)) = & \label{eqn:roadblock:V_diff_delta_x} 2 \ogamma_s(s,t)^\top M(\ogamma(s,t)),
\end{align} for $s \in [0,1]$.

Now, $\ogamma(s,t)$ minimizes the energy functional $E(\gamma(s,t))$ defined in the proof of Lemma~\ref{lem:riemannbounds}, where $\gamma(\cdot,t)$ is any smooth curve connecting $x_0(t),x_1(t) \in \Omega(\rho,x^\star(t)) \subset \mathcal{X}$. In other words, $\ogamma(s,t)$ minimizes the functional
\[
\int_0^1 \gamma_s(s,t)^\top M(\gamma(s,t))\gamma_s(s,t)\diff s = \int_0^1 V(\ogamma(s,t),\ogamma_s(s,t))\diff s. 
\]
Therefore, $\ogamma(s,t)$ satisfies the Euler-Lagrange equations~\cite[Appendix D]{bishop2006pattern} given by
\begin{align*}
\frac{\diff}{\diff s} \pdv{V}{\delta_x} (\ogamma(s,t),\ogamma_s(s,t)) = \pdv{V}{x}(\ogamma(s,t),\ogamma_s(s,t)), \quad s \in [0,1].
\end{align*} 
Using~\eqref{eqn:roadblock:V_diff_x}-\eqref{eqn:roadblock:V_diff_delta_x}, we get
\[
\frac{\diff}{\diff s} \ogamma_s(s,t)^\top M(\ogamma(s,t))  = \frac{1}{2} \ogamma_s(s,t)^\top \pdv{M}{x}(\ogamma(s,t)) \ogamma_s(s,t), \quad s \in [0,1].
\] 
Integrating both sides and applying the fundamental theorem of calculus \cite[Section 5.3]{apostol1966calculus} produces
\[
\left[\ogamma_s(s,t)^\top M(\ogamma(s,t))\right]_{s=0}^{s=1}  = \frac{1}{2} \int_0^1 \ogamma_s(s,t)^\top \pdv{M}{x}(\ogamma(s,t)) \ogamma_s(s,t)\diff s.
\] 
Component-wise, this expression can be written as
\begin{equation}\label{eqn:roadblock:EL}
\left[\ogamma_s(s,t)^\top M(\ogamma(s,t))\right]_{s=0}^{s=1}[i]  = \frac{1}{2} \int_0^1 \ogamma_s(s,t)^\top \pdv{M}{x_i}(\ogamma(s,t)) \ogamma_s(s,t)\diff s, \quad i \in \{1,\dots,n\}.
\end{equation}
From \cref{eqn:bounds:M_x} and \cref{lem:geobound} the following bounds hold
\[
\sum_{i=1}^n \norm{\pdv{M}{x_i}(x)} \le \Delta_{M_x},
\qquad
\norm{\ogamma_s(s,t)}^2 \le \frac{\overline{\alpha}}{\ualpha} \norm{x_0 - x_1}^2
\]
for all $x(t) \in \Omega(\rho, x^\star(t)) \in \mathcal{X}$, $t \in [0, \tau]$ and $s \in [0,1]$. Since $\ogamma(s,t) \in \mathcal{X}$, for all $s \in [0,1]$, using the aforementioned bounds and~\eqref{eqn:roadblock:EL}, we 
arrive at the result.
\end{proof}
%%%%%%%%%%%%%%%%%%%%%%%%%%%%%%%%%%%%%%%%%%%%%%%%%%%%%%%%%%%%%%%%%%%%%%%%%%%%%%%%%%%%%%%

\begin{lemma}
\label{lem:dynbound}
Let the state $x_r(t)$ of the reference system in~\cref{eqn:reference_system} and the state $x(t)$ of the real system in~\cref{eqn:proset:dynamics} with control input~\cref{eqn:control_def:overall_control} satisfy 
$x_r(t), x(t) \in \Omega(\rho, x^\star(t))$ for all $t \in [0, \tau]$, for some $\tau>0$. Additionally, let Assumptions~\ref{assmp:desired_control}-\ref{assmp:bounds_uncertainty} and~\ref{assmp:CCM} hold. Then, the following inequalities are satisfied
\begin{align}
\label{eqn:app:dx_r}
\norm{\dot{x}_r}_{\mathcal{L}_\infty}^{[0,\tau]} \le \Delta_{\dot{x}_r} , \\
\label{eqn:app:dx}
\norm{\dot{x}}_{\mathcal{L}_\infty}^{[0,\tau]} \le \Delta_{\dot{x}},
\end{align}
where $\Delta_{\dot{x}_r}$ and $\Delta_{\dot{x}}$ are defined in \cref{eqn:bounds:dx_r,eqn:bounds:dx} respectively.
\end{lemma}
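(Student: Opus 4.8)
The plan is to bound each of the two vector fields pointwise in $t$ by the triangle inequality and then control every constituent term using the standing bounds. Because the hypothesis places $x_r(t),x(t)\in\Omega(\rho,x^\star(t))\subset\mathcal{O}(\rho)$ on $[0,\tau]$, every state-dependent quantity lives on $\mathcal{O}(\rho)$, so the constants of \cref{assmp:bounds_known,assmp:bounds_uncertainty} apply directly. First I would split the reference dynamics as
\[
\norm{\dot{x}_r}\le\norm{f(x_r)}+\norm{B(x_r)}\Big(\norm{u_{c,r}}+\norm{h(t,x_r)-\eta_r}\Big),
\]
and the true dynamics as
\[
\norm{\dot{x}}\le\norm{f(x)}+\norm{B(x)}\Big(\norm{u_c}+\norm{u_a}+\norm{h(t,x)}\Big),
\]
which reduces the claim to estimating the individual pieces.

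The routine terms are immediate: \cref{assmp:bounds_known} gives $\norm{f(\cdot)}\le\Delta_f$ and $\norm{B(\cdot)}\le\Delta_B$ on $\mathcal{O}(\rho)$, \cref{assmp:bounds_uncertainty} gives $\norm{h(t,\cdot)}\le\Delta_h$, and \cref{assmp:desired_control} gives $\norm{u^\star}\le\Delta_{u^\star}$. Writing $u_{c,r}=u^\star+k_c(x^\star,x_r)$ and $u_c=u^\star+k_c(x^\star,x)$, the only terms requiring care are the feedback $\norm{k_c}$, the filtered-uncertainty mismatch $\norm{h(t,x_r)-\eta_r}$ in the reference system, and the adaptive signal $\norm{u_a}$ in the true system. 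For the two filtered terms I would invoke the $\mathcal{L}_1$-to-$\mathcal{L}_\infty$ gain estimate \cite[Lemma~A.7.1]{hovakimyan2010L1} already used in \cref{lem:scalar2}. Since $\eta_r(s)=C(s)\mathscr{L}[h(t,x_r)]$, the mismatch satisfies $h(t,x_r)-\eta_r=\mathscr{L}^{-1}\!\big[(\mathbb{I}_m-C(s))\mathscr{L}[h(t,x_r)]\big]$, hence $\norm{h(t,x_r)-\eta_r}_{\mathcal{L}_\infty}^{[0,\tau]}\le\norm{\mathbb{I}_m-C(s)}_{\mathcal{L}_1}\Delta_h$. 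Similarly, as $u_a(s)=-C(s)\hat{\sigma}(s)$ and the projection operator in \cref{eqn:control_def:adaptation_law} keeps $\hat{\sigma}(t)\in\mathcal{H}$, i.e. $\norm{\hat{\sigma}}\le\Delta_h$, while $\norm{C(s)}_{\mathcal{L}_1}=1$ for $C(s)=\tfrac{\omega}{s+\omega}\mathbb{I}_m$, one obtains $\norm{u_a}_{\mathcal{L}_\infty}^{[0,\tau]}\le\Delta_h$; together with $\norm{h(t,x)}\le\Delta_h$ this is exactly the coefficient $2\Delta_h$ appearing in $\Delta_{\dot{x}}$.

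The one genuinely non-routine ingredient, and the step I expect to be the main obstacle, is the pointwise feedback bound $\norm{k_c(x^\star,x_r)}\le\rho\,\Delta_{\delta_u}$ (and the same with $x_r$ replaced by $x$, legitimate since $\norm{x^\star-x_r}\le\rho$ and $\norm{x^\star-x}\le\rho$). This is where the contraction structure must be brought in: it follows from the differential-controller construction underlying the minimum-norm quadratic program in \cref{eqn:control_def:u_c_QP} (cf. \cite[Sec.~5.1]{singh2019robust}), in which the differential input required to enforce $\dot{\mathcal{E}}\le-2\lambda\mathcal{E}$ at each point of the minimizing geodesic is bounded, in the whitened coordinates $L(x)$, by the worst-case ratio of the contraction residual $F(x)$ to the input effectiveness $B^\top(x)L^{-1}(x)$ — precisely the supremum defining $\Delta_{\delta_u}$ in \cref{eqn:bounds:delta_u}. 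Integrating this differential control along the geodesic joining $x^\star$ to $x_r$, whose endpoints satisfy $\norm{x^\star-x_r}\le\rho$, produces the prefactor $\rho$.

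With the feedback bound in hand, I would substitute $\norm{u_{c,r}}\le\Delta_{u^\star}+\rho\Delta_{\delta_u}$ (respectively $\norm{u_c}\le\Delta_{u^\star}+\rho\Delta_{\delta_u}$) together with the filter estimates into the two decompositions; this reproduces exactly $\Delta_{\dot{x}_r}$ of \cref{eqn:bounds:dx_r} and $\Delta_{\dot{x}}$ of \cref{eqn:bounds:dx}, completing the proof. The remaining care is merely bookkeeping: all suprema are taken over $t\in[0,\tau]$, and using the untruncated $\mathcal{L}_1$ norm of the filter transfer functions only weakens the inequalities, so the truncated bounds follow.
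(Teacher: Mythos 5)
Your proposal is correct and follows essentially the same route as the paper's proof: the same triangle-inequality decomposition of $\dot{x}_r$ and $\dot{x}$, the same use of \cite[Lemma~A.7.1]{hovakimyan2010L1} for the filtered terms, the projection-operator bound $\norm{\hat{\sigma}}\le\Delta_h$ with $\norm{C(s)}_{\mathcal{L}_1}=1$ for $u_a$, and the feedback bound $\norm{k_c}\le\rho\Delta_{\delta_u}$ imported from the CCM controller construction of \cite{singh2019robust}. The paper simply cites that last bound directly rather than sketching its derivation, so no gap remains.
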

%%%%%%%%%%%%%%%%%%%%%%%%%%%%%%%%%
\begin{proof}
Using the dynamics of the reference system in~\cref{eqn:reference_system}, we obtain
\[
\norm{\dot{x}_r(t)} \leq \norm{f(x_r(t))} + \norm{B(x_r(t))}\left(\norm{u_{c,r}(t)} + \norm{\tilde{h}(t,x_r(t))}  \right), \quad \forall t \in [0,\tau],
\] 
where $\tilde{h}(t,x_r(t)) = h(t,x_r(t)) - \eta_r(t)$. Thus, using~\cref{assmp:bounds_known}, we obtain
\begin{equation}\label{eqn:B7:reference_system}
\norm{\dot{x}_r(t)} \leq \Delta_f + \Delta_B \left(\norm{u_{c,r}(t)} + \norm{\tilde{h}(t,x_r(t))}  \right), \quad \forall t \in [0,\tau].
\end{equation}
Using the definition of $u_{c,r}(t)$ in~\cref{eqn:reference_system} and~\cref{assmp:desired_control}, we get the following bound
\[
\norm{u_{c,r}(t)} \leq \Delta_{u^\star} + \norm{k(x^\star(t),x_r(t))}, \quad \forall t \in [0,\tau].
\] We can use the bound on the feedback term $\norm{k(x^\star(t),x_r(t))}$ as in \cite[Theorem 5.2 \& Eqn. (49)]{singh2019robust} to get
\begin{equation}\label{eqn:B7:reference_bound_1}
 \norm{u_{c,r}(t)} \leq \Delta_{u^\star} + \rho \Delta_{\delta_u}, \quad \forall t \in [0,\tau].    
\end{equation} 
Note that by definition $\mathscr{L}[\tilde{h}(t, x_r)] = \mathscr{L}[h(t, x_r) - \eta_r(t)] = (\mathbb{I}_m - C(s))\mathscr{L}[h(t, x_r)]$. Thus, using~\cite[Lemma A.7.1]{hovakimyan2010L1}, we have
\[
\norm{\tilde{h}(t,x_r(t))} = \norm{h(t,x_r(t)) - \eta_r(t)} \leq \norm{\mathbb{I}_m - C(s)}_{\mathcal{L}_1}\norm{h(\cdot,x_r(\cdot))}_{\mathcal{L_\infty}}^{[0,\tau]}, \quad \forall t \in [0,\tau].
\] 
Using the fact that $\norm{h(t,x_r)} \le \Delta_h$ for all $x_r \in \Omega(\rho, x^\star(t)$ and $t \in [0, \tau]$, from \cref{assmp:bounds_uncertainty} we obtain
\begin{equation}\label{eqn:B7:reference_bound_2}
\norm{\tilde{h}(t,x_r(t))} \leq \norm{\mathbb{I}_m - C(s)}_{\mathcal{L}_1}\Delta_{h}, \quad \forall t \in [0,\tau].    
\end{equation} Then, substituting~\eqref{eqn:B7:reference_bound_1}-\eqref{eqn:B7:reference_bound_2} into~\eqref{eqn:B7:reference_system} establishes the inequality in~\cref{eqn:app:dx_r}. Using the similar line of reasoning, we obtain the following bound on the state of the real system ~\cref{eqn:proset:dynamics} with control input~\cref{eqn:control_def:overall_control}
\begin{equation}\label{eqn:B7:real_system}
    \norm{\dot{x}(t)} \leq \Delta_f + \Delta_B \left(\Delta_h + \norm{u(t)}   \right), \quad \forall t \in [0,\tau].
\end{equation} Using the definition of $u_a(t)$ in~\cref{eqn:control_def:control_law} and again applying~\cite[Lemma A.7.1]{hovakimyan2010L1} we get
\[
\norm{u_a(t)} \leq \norm{C(s)}_{\mathcal{L}_1}\norm{\hat{\sigma}}_{\mathcal{L}_\infty}^{[0,\tau]}, \quad \forall t \in [0,\tau].
\] We have that $\norm{C(s)}_{\mathcal{L}_1} = 1$ for the chosen first-order low-pass filter. Additionally, since $\hat{\sigma}(t) \in \mathcal{H}$ due to the adaptation law defined in~\cref{eqn:control_def:adaptation_law}, we conclude that $\norm{\hat{\sigma}}_{\mathcal{L}_\infty}^{[0,\tau]} \leq \Delta_h$. Thus, $\norm{u_a(t)} \leq \Delta_h$ for all $t \in [0,\tau]$. And, since $x(t) \in \Omega(\rho, x^\star(t))$ for all $t \in [0, \tau]$, the upperbound on $\norm{u_c(t)}$ is equivalent to that of \cref{eqn:B7:reference_bound_1}. Therefore,
\begin{equation}
    \label{eqn:B7:controlbound}
    \norm{u(t)} \leq \norm{u_c(t)} + \norm{u_a(t)} \leq \Delta_u^\star + \rho \Delta_{\delta_u} + \Delta_h, \quad \forall t \in [0, \tau].
\end{equation}
Substituting into~\cref{eqn:B7:real_system} then establishes the inequality in~\cref{eqn:app:dx}.
\end{proof}
%%%%%%%%%%%%%%%%%%%%%%%%%%%%%%%%%%%%%%%%%%%%%%%%%%%%%%%%%%%%%%%%%%
\begin{lemma}
\label{lem:tildebound}
Let the state $x(t)$ of the actual system in~\cref{eqn:proset:dynamics} with control input~\cref{eqn:control_def:overall_control} satisfy 
$x(t) \in \Omega(\rho, x^\star(t))$ for all $t \in [0, \tau]$, for some $\tau>0$. Additionally, let Assumptions~\ref{assmp:desired_control}-\ref{assmp:moore_penrose} hold. 
Then the state prediction error $\tilde{x}(t)$ defined in~\cref{eqn:control_def:state_predictor} satisfies
\begin{align}
    \label{eqn:app:tildebound1}
    \norm{\tilde{x}}_{\mathcal{L}_\infty}^{[0,\tau]} &\le \frac{\Delta_{\tilde{x}}}{\sqrt{\Gamma}}.
\end{align}
Define $\tilde{\sigma}(t) := \hat{\sigma}(t) - h(t,x)$ and $\tilde{\eta}(s) := C(s)\tilde{\sigma}(s)$. Then the following inequality holds
\begin{align}
    \label{eqn:app:tildebound2}
    \norm{\tilde{\eta}}_{\mathcal{L}_\infty}^{[0,\tau]} &\le \frac{\Delta_{\tilde{\eta}}}{\sqrt{\Gamma}},
\end{align}
where $\Delta_{\tilde{x}}$ and $\Delta_{\tilde{\eta}}$ are defined in \cref{eqn:bounds:tx,eqn:bounds:teta} respectively.
\end{lemma}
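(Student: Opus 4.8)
The plan is to treat this as the standard two-stage estimation-error analysis of $\mathcal{L}_1$-adaptive control, modified to accommodate the state-dependent input matrix $B(x)$. First I would form the prediction-error dynamics: subtracting the real system~\eqref{eqn:proset:dynamics} (with $u=u_c+u_a$) from the predictor~\eqref{eqn:control_def:state_predictor}, the $f(x)$, $B(x)u_c$ and $B(x)u_a$ terms cancel, and with $\tilde{\sigma}:=\hat{\sigma}-h(t,x)$ one is left with the linear-in-$\tilde{x}$ system
\begin{equation*}
\dot{\tilde{x}}(t)=A_m\tilde{x}(t)+B(x(t))\tilde{\sigma}(t),\quad \tilde{x}(0)=0,
\end{equation*}
where $\tilde{x}(0)=0$ since $\hat{x}(0)=x_0=x(0)$, and $\norm{\tilde{\sigma}}\le 2\Delta_h$ because both $\hat{\sigma}$ and $h$ lie in $\mathcal{H}$.

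For~\eqref{eqn:app:tildebound1} I would take the Lyapunov candidate $V=\tilde{x}^\top P\tilde{x}+\Gamma^{-1}\tilde{\sigma}^\top\tilde{\sigma}$. Differentiating along the error dynamics and the adaptation law gives $\dot{V}=-\tilde{x}^\top Q\tilde{x}+2\tilde{\sigma}^\top B^\top P\tilde{x}+2\tilde{\sigma}^\top\Proj_{\mathcal{H}}(\hat{\sigma},-B^\top P\tilde{x})-\tfrac{2}{\Gamma}\tilde{\sigma}^\top\dot{h}$. The key step is the projection-operator inequality~\cite{lavretsky2011projection,pomet1992adaptive}: since $h(t,x)\in\mathcal{H}$, we have $\tilde{\sigma}^\top\Proj_{\mathcal{H}}(\hat{\sigma},-B^\top P\tilde{x})\le-\tilde{\sigma}^\top B^\top P\tilde{x}$, which cancels the indefinite cross term and yields $\dot{V}\le-\tilde{x}^\top Q\tilde{x}-\tfrac{2}{\Gamma}\tilde{\sigma}^\top\dot{h}$. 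Bounding $\dot{h}=\pdv{h}{t}+\pdv{h}{x}\dot{x}$ by $\Delta_{h_t}+\Delta_{h_x}\Delta_{\dot{x}}$ via \cref{assmp:bounds_uncertainty} and the derivative estimate $\norm{\dot{x}}\le\Delta_{\dot{x}}$ from \cref{lem:dynbound}, together with $\norm{\tilde{\sigma}}\le 2\Delta_h$, turns this into $\dot{V}\le-\tfrac{\ulambda(Q)}{\olambda(P)}\big(V-\tfrac{4\Delta_h^2}{\Gamma}\big)+\tfrac{4\Delta_h(\Delta_{h_t}+\Delta_{h_x}\Delta_{\dot{x}})}{\Gamma}$. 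A comparison-lemma argument started from $V(0)\le 4\Delta_h^2/\Gamma$ then gives a uniform $O(1/\Gamma)$ bound on $V$; since $\ulambda(P)\norm{\tilde{x}}^2\le V$, taking square roots reproduces $\Delta_{\tilde{x}}/\sqrt{\Gamma}$ exactly as in~\eqref{eqn:bounds:tx}.

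The bound~\eqref{eqn:app:tildebound2} is the delicate part, because $\tilde{\sigma}$ is only $O(1)$ and the $O(1/\sqrt{\Gamma})$ decay must come entirely from filtering. Using that $B(x)$ has full column rank (\cref{assmp:moore_penrose}), I would invert the error dynamics as $\tilde{\sigma}=B^\dagger(x)(\dot{\tilde{x}}-A_m\tilde{x})$, so $\tilde{\eta}=C(s)\mathscr{L}[B^\dagger(x)(\dot{\tilde{x}}-A_m\tilde{x})]$. The obstacle is the $\dot{\tilde{x}}$ term, which cannot be bounded pointwise by anything small. I would shift the derivative onto the filter by writing $B^\dagger(x)\dot{\tilde{x}}=\tfrac{d}{dt}\big(B^\dagger(x)\tilde{x}\big)-\big(\tfrac{d}{dt}B^\dagger(x)\big)\tilde{x}$ and invoking $\tilde{x}(0)=0$ so that $\mathscr{L}[\tfrac{d}{dt}(B^\dagger(x)\tilde{x})]=s\,\mathscr{L}[B^\dagger(x)\tilde{x}]$; the troublesome term becomes $sC(s)\mathscr{L}[B^\dagger(x)\tilde{x}]$, which is admissible since $C(s)$ is strictly proper and hence $\norm{sC(s)}_{\mathcal{L}_1}<\infty$. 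Applying \cite[Lemma~A.7.1]{hovakimyan2010L1} termwise, with $\norm{C(s)}_{\mathcal{L}_1}=1$, $\norm{B^\dagger}\le\Delta_{B^\dagger}$, $\norm{\tfrac{d}{dt}B^\dagger(x)}\le\Delta_{B^\dagger_x}\Delta_{\dot{x}}$ (from \cref{assmp:moore_penrose} and \cref{lem:dynbound}), and the just-proved $\norm{\tilde{x}}$ bound, collects the factor $\Delta_{B^\dagger_x}\Delta_{\dot{x}}+(\norm{sC(s)}_{\mathcal{L}_1}+\norm{A_m})\Delta_{B^\dagger}$ multiplying $\Delta_{\tilde{x}}/\sqrt{\Gamma}$, matching $\Delta_{\tilde{\eta}}$ in~\eqref{eqn:bounds:teta}.

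I expect the derivative-shift manipulation to be the main obstacle: one must verify that $B^\dagger(x(t))$ is differentiable in $t$ (from smoothness of $B$, the full-rank condition, and continuous differentiability of $x$), that the vanishing initial error $\tilde{x}(0)=0$ removes the boundary term in the Laplace differentiation rule, and that $sC(s)$ genuinely has finite $\mathcal{L}_1$ gain for the chosen first-order filter. The remaining elements --- the Lyapunov computation, the eigenvalue bounds, and the norm inequalities --- are routine once the projection inequality and this filtering trick are in hand.
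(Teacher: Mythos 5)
Your proposal is correct and follows essentially the same route as the paper's proof: the same error dynamics $\dot{\tilde{x}}=A_m\tilde{x}+B(x)\tilde{\sigma}$, the same Lyapunov function $V=\tilde{x}^\top P\tilde{x}+\Gamma^{-1}\tilde{\sigma}^\top\tilde{\sigma}$ with the projection inequality and the $\dot{h}$ bound, and the same Moore--Penrose inversion with the derivative shifted onto $sC(s)$ using $\tilde{x}(0)=0$. The only cosmetic difference is that you phrase the ultimate bound on $V$ via the comparison lemma, whereas the paper argues directly from the sign of $\dot{V}$ together with the sandwich $\ulambda(P)\norm{\tilde{x}}^2\le V\le\olambda(P)\norm{\tilde{x}}^2+4\Delta_h^2\Gamma^{-1}$; both yield the same constant $\Delta_{\tilde{x}}$.
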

%%%%%%%%%%%%%%%%%%%%%%%%%%%%%%%%%%%%%%%%%%%%%%%%%%%%%
\begin{proof}
The state predictor error dynamics are computed using \cref{eqn:proset:dynamics} and \cref{eqn:control_def:state_predictor} as
\[
\dot{\tilde{x}}(t) = \dot{\hat{x}}(t) - \dot{x}(t) = A_m \tilde{x}(t) + B(x)\tilde{\sigma}(t), \quad \tilde{x}(0) = 0,
\]
where $\tilde{\sigma}(t) = \hat{\sigma}(t) - h(t,x)$. Consider the Lyapunov function $V(\tilde{x}, \tilde{\sigma}) = \tilde{x}(t)^\top P \tilde{x}(t) + \tilde{\sigma}(t)^\top \Gamma^{-1} \tilde{\sigma}(t)$. Then its time derivative is given by
\[
\dot{V}(\tilde{x}, \tilde{\sigma}) = -\tilde{x}(t)^\top Q \tilde{x}(t) + 2\tilde{\sigma}(t)^\top B(x)^\top P \tilde{x}(t) +  2\tilde{\sigma}(t)^\top \Gamma^{-1} (\dot{\hat{\sigma}}(t) - \dot{h}(t, x)),
\]
where $P \succ 0$ and $Q \succ 0$ define the adaptation law in~\cref{eqn:control_def:adaptation_law}. Combing the adaptation law with the equation above, we obtain
\[
\dot{V}(\tilde{x}, \tilde{\sigma}) = -\tilde{x}(t)^\top Q \tilde{x}(t) + 2\tilde{\sigma}(t)^\top (B(x)^\top P \tilde{x}(t) + \Proj_{\mathcal{H}}(\hat{\sigma}(t), -B(x)^\top P \tilde{x}(t)) -  2\tilde{\sigma}(t)^\top \Gamma^{-1} \dot{h}(t, x).
\]
From \cite[Lemma 6]{lavretsky2011projection}, the projection operator ensures that $\tilde{\sigma}(t)\left(\Proj_{\mathcal{H}}(\hat{\sigma}(t), y) - y\right) \le 0$ for all $y \in \mathbb{R}^n$. Therefore, the equation above reduces to
\[
\dot{V}(\tilde{x}, \tilde{\sigma}) \le -\tilde{x}(t)^\top Q \tilde{x}(t) - 2\tilde{\sigma}(t)^\top \Gamma^{-1}\dot{h}(t, x).
\]
It is easy to show that $\norm{\tilde{\sigma}(t)} \le 2\Delta_h$. Since $\norm{\dot{x}}_{\mathcal{L}_\infty}^{[0, \tau]} \le \Delta_{\dot{x}}$ from \cref{eqn:bounds:dx}, we have the following bound for the time-derivative of the uncertainty
\[
\norm{\dot{h}(t,x)} = \norm{\pdv{h}{t}(x) + \pdv{h}{x}(x) \dot{x}} \le \Delta_{h_t} + \Delta_{h_x}\Delta_{\dot{x}},
\]
for all $t \in [0, \tau]$. Combining, these bounds with the equation above results in the following inequality
\[
\dot{V}(\tilde{x}, \tilde{\sigma}) \le -\underline{\lambda}(Q)\norm{\tilde{x}}^2 + 4\Delta_h\Gamma^{-1}(\Delta_{h_t} + \Delta_{h_x}\Delta_{\dot{x}}).
\]
Now, if $\dot{V}(\tilde{x}(t), \tilde{\sigma}(t)) \ge 0$, then
\begin{equation}
\label{eqn:app:tildebound3}
\norm{\tilde{x}(t)}^2 \le \frac{4 \Delta_h(\Delta_{h_t} + \Delta_{h_x}\Delta_{\dot{x}})}{\Gamma\underline{\lambda}(Q)},
\end{equation}
for all $t \in [0, \tau]$. However, the Lyapunov function always remains bounded as
\[
\ulambda(P)\norm{\tilde{x}(t)}^2 \le V(\tilde{x}(t), \tilde{\sigma}(t)) \le \olambda(P)\norm{\tilde{x}(t)}^2 + 4\Delta_h^2 \Gamma^{-1}.
\]
Combining the equation above with \cref{eqn:app:tildebound3}, we arrive at the result in \cref{eqn:app:tildebound2}. In order to show that the inequality in \cref{eqn:app:tildebound2} holds, we start with the error in the uncertainty estimate from the state predictor error dynamics in terms of the Moore-Penrose inverse of $B(x)$ as
\[
\tilde{\sigma}(t) = B^\dagger(x)\dot{\tilde{x}}(t) - B^\dagger(x)A_m \tilde{x}(t).
\]
Since $\tilde{\eta}(s) = C(s)\tilde{\sigma}(s)$, we obtain
\begin{align*}
\tilde{\eta}(s) &= C(s)\mathscr{L}[B^\dagger(x)\dot{\tilde{x}}(t) - B^\dagger(x)A_m \tilde{x}(t)] \\
&= C(s)\mathscr{L}\left[\odv{}{t}\left(B^\dagger(x)\tilde{x}(t)\right) -\partial_{\dot{x}} B^\dagger(x)\tilde{x}(t) - B^\dagger(x)A_m \tilde{x}(t)\right] \\
&= C(s)s\mathscr{L}\left[B^\dagger(x)\tilde{x}(t)\right] + C(s)\mathscr{L}\left[-\partial_{\dot{x}} B^\dagger(x)\tilde{x}(t) - B^\dagger(x)A_m \tilde{x}(t)\right].
\end{align*}
Since $\norm{\dot{x}}_{\mathcal{L}_\infty}^{[0, \tau]} \le \Delta_{\dot{x}}$ from \cref{eqn:bounds:dx},  we have that $\norm{\partial_{\dot{x}}B^\dagger(x)} = \norm{\sum_{j=0}^n\pdv{B^\dagger(x)}{x_j}\dot{x}_j(t)} \le \Delta_{B^\dagger_x}\Delta_{\dot{x}}$ for all $t \in [0, \tau]$. Furthermore, from \cref{assmp:moore_penrose} we have that $\norm{B^\dagger(x)} \le \Delta_{B^\dagger}$ for all $x \in \Omega(\rho, x^\star(t))$ and $t \in [0, \tau]$. Therefore, using the property from \cite[Lemma A.7.1]{hovakimyan2010L1}, the following inequality holds
\[
\norm{\tilde{\eta}(t)} \le \left(\norm{C(s)s}_{\mathcal{L}_1} \Delta_{B^\dagger} + \norm{C(s)}_{\mathcal{L}_1} \left( \Delta_{B^\dagger_x} \Delta_{\dot{x}} + \Delta_{B^\dagger} \norm{A_m}\right) \right)\frac{\Delta_{\tilde{x}}}{\sqrt{\Gamma}},
\]
for all $x \in \Omega(\rho, x^\star(t))$ and $t \in [0, \tau]$. Since $\norm{C(s)}_{\mathcal{L}_1} = 1$, we arrive at the result.
\end{proof}
%%%%%%%%%%%%%%%%%%%%%%%%%%%%%%%%%%%%%%%%
\begin{lemma}
\label{lem:dgeobound}
Let the state $x_r(t)$ of the reference system in~\cref{eqn:reference_system} and the state $x(t)$ of the real system in~\cref{eqn:proset:dynamics} with control input~\cref{eqn:control_def:overall_control} satisfy 
$x_r(t), x(t) \in \Omega(\rho, x^\star(t))$ for all $t \in [0, \tau]$, for some $\tau>0$. Additionally, let Assumptions~\ref{assmp:desired_control}-\ref{assmp:bounds_uncertainty} hold. Then if $\ogamma(\cdot,t): [0,1] \to \mathcal{X}$ is the minimizing geodesic under the metric $M(x)$ satisfying \cref{assmp:CCM} such that $\ogamma(0,t) = x(t)$ and $\ogamma(1,t) = x_r(t)$,
the following inequality is satisfied
\begin{equation}
    \label{eqn:app:dgeo0}
\norm{\odv{}{t} \left(B(x)^\top M(x) \ogamma_s(0,t)\right)} \le \Delta_{\dot{\Psi}} \norm{x_r(t) - x(t)},
\end{equation}
for all $t \in [0, \tau]$, where $\Delta_{\dot{\Psi}}$ is defined in \cref{eqn:bounds:dpsi}.
\end{lemma}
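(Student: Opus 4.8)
The plan is to differentiate the product $\Psi(t):=B(x(t))^\top M(x(t))\,\ogamma_s(0,t)$ by the product rule and bound the three resulting pieces separately, isolating the one genuinely hard estimate. Using the paper's directional-derivative notation, $\tfrac{d}{dt}B(x(t))=\partial_{\dot x}B(x)$ and $\tfrac{d}{dt}M(x(t))=\partial_{\dot x}M(x)$, I would write
\[
\frac{d}{dt}\Psi = (\partial_{\dot x}B)^\top M(x)\ogamma_s(0,t) + B(x)^\top(\partial_{\dot x}M)\ogamma_s(0,t) + B(x)^\top M(x)\frac{d}{dt}\ogamma_s(0,t).
\]

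The first two terms succumb to the estimates already in hand, provided one is careful about how the geodesic bounds of \cref{lem:geobound} are applied to each. For the first term I would keep $M(x)\ogamma_s(0,t)$ grouped and invoke \cref{eqn:app:geobound1}, $\norm{M(x)\ogamma_s(0,t)}\le\oalpha\norm{x_r-x}$, together with $\norm{\partial_{\dot x}B}\le\Delta_{B_x}\Delta_{\dot x}$ (from \cref{assmp:bounds_known} and $\norm{\dot x}_{\mathcal L_\infty}^{[0,\tau]}\le\Delta_{\dot x}$ of \cref{lem:dynbound}), yielding $\oalpha\Delta_{B_x}\Delta_{\dot x}\norm{x_r-x}$. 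For the second term I would instead use the pointwise bound \cref{eqn:app:geobound2}, $\norm{\ogamma_s(0,t)}\le\sqrt{\oalpha/\ualpha}\,\norm{x_r-x}$, with $\norm{B(x)}\le\Delta_B$ and $\norm{\partial_{\dot x}M}\le\Delta_{M_x}\Delta_{\dot x}$ (the latter from the definition of $\Delta_{M_x}$ in \cref{eqn:bounds:M_x}), yielding $\sqrt{\oalpha/\ualpha}\,\Delta_B\Delta_{M_x}\Delta_{\dot x}\norm{x_r-x}$. The third term is bounded by $\oalpha\Delta_B\norm{\tfrac{d}{dt}\ogamma_s(0,t)}$ using $\norm{M}\le\oalpha$ and $\norm{B}\le\Delta_B$.

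The hard part is the auxiliary estimate $\norm{\tfrac{d}{dt}\ogamma_s(0,t)}\le\Delta_{\dot\ogamma_s}\norm{x_r-x}$, which governs how the initial velocity of the \emph{minimizing} geodesic changes as its endpoints $x(t)$ and $x_r(t)$ drift in time. My approach would exploit the commutation of mixed partials, $\tfrac{d}{dt}\ogamma_s(0,t)=\partial_s\ogamma_t(0,t)$, and recognize $\ogamma_t$ as the variation field of the geodesic family, whose endpoint values $\ogamma_t(0,t)=\dot x(t)$ and $\ogamma_t(1,t)=\dot x_r(t)$ are governed by the real and reference dynamics. Associating a control field along the geodesic, $\partial_s\ogamma_t$ obeys a first-variation equation of the form \cref{eqn:prelim:variational_dynamics} with $\delta_x=\ogamma_s$: its system matrix $\partial f/\partial x+\sum_j(u+h)[j]\,\partial b_j/\partial x$ is bounded by $\Delta_{f_x}+(\Delta_h+\Delta_{u^\star}+\rho\Delta_{\delta_u})\Delta_{b_x}$ (the effective-input magnitude coming from the feedback bound in \cref{lem:dynbound}), and its differential-input contribution $B\,\delta_u$ is bounded by $\Delta_B$ times the differential of $h$ and of the contraction feedback, i.e.\ $\Delta_B(\Delta_{h_x}+\tfrac{\sqrt{\ualpha}}{\sqrt{\oalpha}}\Delta_{\delta_u})$ per unit displacement. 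Applying $\norm{\ogamma_s(0,t)}\le\sqrt{\oalpha/\ualpha}\,\norm{x_r-x}$ from \cref{lem:geobound} then reproduces exactly $\Delta_{\dot\ogamma_s}$ as defined in \cref{eqn:bounds:dgamma_s}. I expect this step to be the main obstacle, both because the intermediate points of the minimizing geodesic are not themselves system trajectories (so the control field and the first-variation identity must be set up with care) and because the differential of the QP-based feedback $k_c$ must be controlled through $\Delta_{\delta_u}$.

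Finally I would sum the three bounds. Collecting the common factor, the coefficients collapse precisely to the definition in \cref{eqn:bounds:dpsi}: the $\partial_{\dot x}M$ term contributes $\sqrt{\oalpha/\ualpha}=\oalpha/\sqrt{\oalpha\ualpha}$, and the $\Delta_{\delta_u}\Delta_B$ piece inside $\Delta_{\dot\ogamma_s}$ pairs with the leading $\oalpha$ and the factor $\sqrt{\ualpha/\oalpha}$ to leave $\oalpha\Delta_B\Delta_{\delta_u}$ clean, giving $\norm{\tfrac{d}{dt}\Psi}\le\Delta_{\dot\Psi}\norm{x_r-x}$ as claimed. The arithmetic of matching coefficients is routine; the conceptual weight is entirely in the geodesic-variation estimate of the preceding paragraph.
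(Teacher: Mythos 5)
Your proposal is correct and follows essentially the same route as the paper's proof: the identical product-rule decomposition into three terms, the same use of \cref{lem:geobound} and \cref{lem:dynbound} for the $\partial_{\dot x}B$ and $\partial_{\dot x}M$ pieces, and the same treatment of $\dot{\ogamma}_s(0,t)$ via the differential (first-variation) dynamics, which the paper simply imports as \cite[Theorem 3.2]{singh2019robust} with the bounds $\norm{\delta_u}\le\Delta_{\delta_u}\norm{x_r-x}$ and $\norm{\ogamma_s}\le\sqrt{\oalpha/\ualpha}\norm{x_r-x}$ that you identify. The only nit is a harmless slip in your closing arithmetic narrative (the $\Delta_{\delta_u}$ contribution carries $\Delta_B^2$, not $\Delta_B$, after multiplying by the outer $\oalpha\Delta_B$), which does not affect the argument since you defer to the definitions in \cref{eqn:bounds:dgamma_s} and \cref{eqn:bounds:dpsi}.
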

\begin{proof}
We apply chain rule and triangle inequality to obtain
\begin{align}
    \norm{\odv{}{t} \left(B(x)^\top M(x) \ogamma_s(0,t)\right)} \le \norm{B(x)^\top M(x) \dot{\overline{\gamma}}_s(0, t)} + \norm{B(x)^\top \partial_{\dot{x}(t)} M(x) \overline{\gamma}_s(0,t)}& \nonumber \\ + \norm{\partial_{\dot{x}(t)}B(x)^\top M(x) \overline{\gamma}_s(0,t)}&. \label{eqn:app:dgeo1} 
\end{align}
The time-evolution of the velocity of the minimizing geodesic evaluated at $s=0$ is given by the differential dynamics \cite[Theorem 3.2]{singh2019robust} of the actual system as follows
\begin{equation}
    \label{eqn:app:dgeo2}
    \dot{\ogamma}_s(0,t) = \left(\left[ \pdv{f(x)}{x} + \sum_{j=1}^m \left(u_j(t) + h_j(t, x)  \right)\pdv{b_j(x)}{x} \right] + B(x) \pdv{h(t, x)}{x} \right) \overline{\gamma}_s(0, t) + B(x)\delta_u,
\end{equation}
where $b_j(x)$ is the $j^{\textrm{th}}$ column of $B(x)$, $u_j(t)$ is the $j^{\textrm{th}}$ value in control channel, and $h_j(t, x)$ is the $j^{\textrm{th}}$ value of the uncertainty. Previously, from \cref{assmp:bounds_known,assmp:bounds_uncertainty} we know that
\begin{gather*}
    \norm{\pdv{f(x)}{x}} \le \Delta_{f_x}, \quad \norm{B(x)} \le \Delta_B, \quad \sum_{i=1}^{n} \norm{\pdv{B(x)}{x_i}} \le \Delta_{B_x}, \quad \sum_{j=1}^{m} \norm{\pdv{b_j(x)}{x}} \le \Delta_{b_x}, \\
    \norm{h(t, x)} \le \Delta_h, \quad \norm{\pdv{h(t,x)}{x}} \leq \Delta_{h_x}, 
\end{gather*}
for all $x \in \Omega(\rho, x^\star(t))$ and $t \in [0, \tau]$. Additionally from \cref{eqn:app:geobound2}, \cref{eqn:B7:controlbound},  and \cite[Theorem 5.2]{singh2019robust} the following hold
\[
    \norm{u(t)} \le \Delta_{u^\star} + \rho \Delta_{\delta_u} + \Delta_h, \quad \norm{\overline{\gamma}_s(1, t)} \le \sqrt{\frac{\overline{\alpha}}{\underline{\alpha}}} \|x_r(t) - x(t)\|, \quad \norm{\delta_u} \le \Delta_{\delta_u} \|x_r(t) - x(t)\|
\]
respectively for all $x \in \Omega(\rho, x^\star(t))$ and $t \in [0, \tau]$, where $\Delta_{\delta_u}$ defined in \cref{eqn:bounds:delta_u}. With these considerations, the expression in \cref{eqn:app:dgeo2} is bounded by $\Delta_{\dot{\ogamma}_s}\norm{x_r(t) - x(t)}$ for all $t \in [0, \tau]$. Furthermore, from \cref{assmp:CCM} we know that $\oalpha\mathbb{I}_n \succ M(x) \succ \ualpha\mathbb{I}_n \succ 0$ for all $x \in \mathcal{X}$, which implies that 
\begin{equation}
\label{eqn:app:dgeo3}
\norm{B(x)^\top M(x) \dot{\ogamma}_s(0,t)} \le \Delta_{B} \oalpha \Delta_{\dot{\ogamma}_s} \norm{x_r(t) - x(t)}, \quad \forall t \in [0, \tau].
\end{equation}
Since, $\norm{\partial_{\dot{x}}M(x)} \le \norm{\sum_{i=1}^n \pdv{M}{x_i} \dot{x}_i(t)} \le \Delta_{M_x} \Delta_{\dot{x}}$ from \cref{eqn:bounds:M_x} and \cref{eqn:bounds:dx} for all $x \in \Omega(\rho, x^\star(t))$ and $t \in [0, \tau]$, we obtain
\begin{equation}
\label{eqn:app:dgeo4}
\norm{B(x)^\top \partial_{\dot{x}} M(x) \ogamma_s(0,t)} \le \Delta_{B} \Delta_{M_x} \Delta_{\dot{x}} \sqrt{\frac{\oalpha}{\ualpha}} \norm{x_r(t) - x(t)}, \quad \forall t \in [0, \tau].
\end{equation}
Finally, $\norm{\partial_{\dot{x}}B(x)} \le \norm{\sum_{i=1}^n \pdv{B}{x_i} \dot{x}_i(t)} \le \Delta_{B_x} \Delta_{\dot{x}}$ from \cref{assmp:bounds_known} and \cref{eqn:bounds:dx}, and $\norm{M(x)\ogamma_s(0,t)} \le \oalpha \norm{x_r(t) - x(t)}$ holds from \cref{eqn:app:geobound1}. Therefore
\begin{equation}
\label{eqn:app:dgeo5}
\norm{\partial_{\dot{x}} B(x)^\top M(x) \ogamma_s(0,t)} \le \Delta_{B_x} \Delta_{\dot{x}} \oalpha \norm{x_r(t) - x(t)}, \quad \forall t \in [0, \tau].
\end{equation}
Substituting \cref{eqn:app:dgeo3,eqn:app:dgeo4,eqn:app:dgeo5} in \cref{eqn:app:dgeo1}, we arrive at the main result in \cref{eqn:app:dgeo0}.
\end{proof}

\section{Main Results}\label{app:main}

In this appendix we provide the proofs of the claims in the manuscript.

\begin{proof}[Proof of~\cref{lem:refbound}]
We show that $\norm{x^\star - x_r}_{\mathcal{L}_\infty} < \rho_r$ by contradiction. We have $\norm{x^\star(0) - x_r(0)} = \norm{x^\star_0 - x_0} < \rho_r$, since $\epsilon > 0$ and $\frac{\overline{\alpha}}{\underline{\alpha}} \ge 1$ from \cref{eqn:conditions:rho_r}. Assume that the lemma statement does not hold; then since $x_r$ and $x^\star$ are continuous, there must exist a $\tau^\star > 0$ such that
\begin{align*}
    \norm{x^\star(\tau^\star)  - x_r(\tau^\star)} &= \rho_r, \\
    \norm{x^\star(t) - x_r(t)} &< \rho_r, \quad t \in [0, \tau^\star).
\end{align*}
Consider the minimizing geodesic $\overline{\gamma}(s, t)$ such that $\overline{\gamma}(0, t) = x^\star(t)$ and $\overline{\gamma}(1, t) = x_r(t)$. From \cite[Theorem 3.2]{singh2019robust}, the time derivative of the Riemannian energy satisfies
\begin{align*}
    \frac{1}{2}\dot{\mathcal{E}}(x^\star, x_r)
    &= \overline{\gamma}_s(s, t)^\top M(\overline{\gamma}(s, t)) \dot{\overline{\gamma}}(s, t) \vert_{s=0}^{s=1} \\
    &= \overline{\gamma}_s(1, t)^\top M(x_r) \dot{x}_r(t) - \overline{\gamma}_s(0, t)^\top M(x^\star) \dot{x}^\star(t).
\end{align*}
Substituting the desired and reference system dynamics from \cref{eqn:proset:unperturbed_dynamics,eqn:reference_system} respectively into the equation above, we obtain
\begin{align}
\begin{split}
    \frac{1}{2}\dot{\mathcal{E}}(x^\star, x_r) =  \overline{\gamma}_s(1,t)^\top M(x_r) \left( f(x_r) + B(x_r)(u_{c,r}(t) + h(t, x_r) -\eta_r(t))\right)& \\ -\overline{\gamma}_s(0,t)^\top M(x^\star) \left( f(x^\star) + B(x^\star)u^\star(t) \right)&.
\end{split}
\label{eqn:app:ref1}
\end{align}
As shown in \cite[Lemma 1]{lopez2019contraction}, since the uncertainty is matched with the control channel, the metric $M$ designed for the unperturbed system also satisfies the stronger CCM conditions for the real system. This implies that the unperturbed part of the reference system is exponentially convergent with rate $\lambda$, given by the following expression
\begin{align*}
    \overline{\gamma}_s(1,t)^\top M(x_r)\left( f(x_r) + B(x_r)u_{c,r}(t) \right)
    -\overline{\gamma}_s(0,t)^\top M(x^\star)\left( f(x^\star) + B(x^\star)u^\star(t) \right)
    \le -\lambda\mathcal{E}(x^\star, x_r).
\end{align*}
Substituting into \cref{eqn:app:ref1} produces
\begin{align*}
        \frac{1}{2}\dot{\mathcal{E}}(x^\star, x_r) \le -\lambda\mathcal{E}(x^\star, x_r)
        + \overline{\gamma}_s(1,t)^\top M(x_r) B(x_r)(h(t, x_r)-\eta_r(t)) .
\end{align*}
Integrating both sides of the equation above, we have
\begin{align}
    \label{eqn:app:energy_integral}
    \mathcal{E}(x^\star, x_r) \le e^{-2\lambda t}\mathcal{E}(x^\star_0, x_0)
    + 2\int_0^t e^{-2\lambda(t-\nu)}\overline{\gamma}_s(1,t)^\top M(x_r)B(x_r)(h(\nu, x_r)-\eta_r(t))  \diff \nu.
\end{align}
The integral term on the right-hand side in the equation above can be expressed as the solution to the following virtual scalar system
\begin{align}
    \label{eqn:app:ref2a}
    \dot{z}(t) &= -2\lambda z(t) + \overline{\gamma}_s(1,t)^\top M(x_r) B(x_r) \xi(t), \quad z(0) = 0, \\
    \label{eqn:app:ref2b}
    \xi(s) &= (1 - C(s)) \mathscr{L}[h(t, x_r)].
\end{align}
Note that from \cref{lem:geobound,lem:dynbound,assmp:bounds_known,assmp:bounds_uncertainty} the following bounds hold for all $x_r(t) \in \Omega(\rho_r, x^\star(t)) \subset \Omega(\rho, x^\star)$ (since $\rho_r < \rho$) and all $t \in [0, \tau^\star]$:
\begin{gather*}
    \norm{\overline{\gamma}_s(1,t) M(x_r) B(x_r)} \le \rho \oalpha \Delta_B, \quad \norm{h(t,x_r)} \le \Delta_h, \\
    \norm{\dot{h}(t,x_r)} = \norm{\pdv{h(t,x_r)}{t} + \pdv{h(t,x_r)}{x} \dot{x}_r(t)} \le \Delta_{h_t} + \Delta_{h_x}\Delta_{\dot{x}_r}.
\end{gather*}
Then the solution of a linear system of the form in \cref{eqn:app:ref2a,eqn:app:ref2b} satisfies the following norm bound from \cref{lem:scalar}
\begin{align*}
    \norm{z(t)} \le \rho \oalpha \Delta_B \left( \frac{\Delta_h}{\abs{2\lambda - \omega}} +  \frac{\Delta_{h_t} + \Delta_{h_x}\Delta_{\dot{x}_r}}{2\lambda \omega} \right) = \frac{1}{2}\ualpha \zeta_1(\omega),
\end{align*}
where $\zeta_1$ is defined in \cref{eqn:conditions:zeta_1}. Substituting this inequality into \cref{eqn:app:energy_integral} produces
\[
    \mathcal{E}(x^\star, x_r) \le e^{-2\lambda t}\mathcal{E}(x^\star_0, x_0) + \ualpha \zeta_1(\omega)
\]
Moreover, since the metric $M$ satisfies \cref{assmp:CCM}, the Riemannian energy satisfies the following lower bound from \cref{lem:riemannbounds} 
\begin{align*}
\underline{\alpha} \norm{x^\star(t) - x_r(t)}^2 \le \mathcal{E}(x^\star, x_r)
\le e^{-2\lambda t}\mathcal{E}(x^\star_0, x_0) +  \ualpha \zeta_1(\omega).
\end{align*}
In our contradiction argument, we had that $\norm{x^\star(\tau^\star) - x(\tau^\star)} = \rho_r$. Therefore the following inequality must be satisfied
\begin{align}
\label{eqn:app:ref3a}
\ualpha \rho_r^2 &\le e^{-2\lambda \tau^\star}\mathcal{E}(x^\star_0, x_0) + \ualpha \zeta_1(\omega) \\
\label{eqn:app:ref3b}
&<  \mathcal{E}(x^\star_0, x_0) + \ualpha \zeta_1(\omega),
\end{align}
for some $\tau^\star > 0$. However, from~\cref{eqn:filter_condition_1} the bandwidth is chosen such that
\[
\ualpha \rho_r^2 \ge \mathcal{E}(x^\star_0, x_0) + \ualpha \zeta_1(\omega).
\]
This directly contradicts our earlier statement in \cref{eqn:app:ref3b}. Therefore, $\|x_r - x^\star\|_{\mathcal{L}_\infty} < \rho_r$. Moreover, from \cref{eqn:app:ref3a} we also obtain the following uniform ultimate bound
\[
\norm{x^\star(t) - x_r(t)} \le \sqrt{\frac{e^{-2\lambda T}\mathcal{E}(x^\star_0, x_0)}{\underline{\alpha}} + \zeta_1(\omega)},
\]
for all $t \ge T \ge 0$.
\end{proof}

%%%%%%%%%%%%%%%%%%%%%%%%%%%%%%%%%%

\begin{proof}[Proof of Lemma~\ref{lem:realbound}]
We prove this lemma by contradiction. Assume that
\begin{equation}
\label{eqn:app:contradiction}
\norm{x_r - x}_{\mathcal{L}_\infty}^{[0, \tau]} \ge \rho_a,
\end{equation}
for some $\tau > 0$, where $\rho_a$ is an arbitrary positive scalar used in the definition of $\rho$ in \cref{eqn:conditions:rho}. Since $x_r(0) = x(0)$, there exists a $\tau^\star \in (0, \tau]$ such that
\begin{align}
    \norm{x_r(\tau^\star) - x(\tau^\star)} &= \rho_a, \nonumber \\
    \label{eqn:app:taustar}
    \norm{x_r(t) - x(t)} &< \rho_a, \quad t \in [0, \tau^\star).
\end{align}
Let $\ogamma(s,t)$ be the minimizing geodesic between $x_r$ and $x$ such that $\ogamma(1,t) = x_r(t)$ and $\ogamma(0,t) = x(t)$. Consider the Riemannian energy $\mathcal{E}(x_r,x)$ between $x_r$ and $x$. Then the time derivative of the Riemannian energy as shown in \cite[Theorem 3.2]{singh2019robust} is given by
\begin{align*}
    \frac{1}{2}\dot{\mathcal{E}}(x_r,x)
    &= \overline{\gamma}_s(s, t)^\top M(\overline{\gamma}(s, t)) \dot{\overline{\gamma}}(s, t) \vert_{s=0}^{s=1} \\
    &= \ogamma_s(1)^\top M(x_r)\dot{x}_r - \ogamma_s(0) M(x) \dot{x}.
\end{align*} 
Substituting the reference system dynamics from \cref{eqn:reference_system} and the actual system dynamics  from \cref{eqn:proset:dynamics} with the control law from \cref{eqn:control_def:overall_control}, we get
\begin{align*}
    \frac{1}{2}\dot{\mathcal{E}}(x_r,x) =& \gamma_s(1)^\top M(x_r) \left[
        f(x_r) + B(x_r)(u_{c,r}(t) + h(t,x_r) - \eta_r(t))
    \right] \\
    & - \gamma_s(0)^\top M(x)\left[
    f(x) + B(x)(u_c(t) + h(t,x) - \hat{\eta}(t))
    \right],
\end{align*}
where $\hat{\eta}(t) = -u_a(t)$. Similar to the reasoning used in \cref{lem:refbound}, the metric $M$ designed for the ideal system also satisfies the stronger CCM conditions for the real system~\cite[Lemma~1]{lopez2019contraction}. This implies that the nominal parts of the system are contracting, given by
\begin{equation*}
    \frac{1}{2}\dot{\mathcal{E}}(x_r,x) \leq -\lambda \mathcal{E}(x_r,x) +  \Psi(x_r)^\top(h(t, x_r) - \eta_r(t)) - \Psi(x)^\top (h(t, x) - \hat{\eta}(t)),
\end{equation*}
where $\Psi(x_r) := B(x_r)^\top M(x_r) \overline{\gamma}_s(1, t)$ and $\Psi(x) := B(x)^\top M(x)\overline{\gamma}_s(0, t)$ are introduced for clarity. Define $\eta(s) = C(s)\mathscr{L}[h(t,x)]$; then by adding and subtracting $\Psi(x)^\top(h(t,x_r) - \eta_r(t) + \eta(t))$ on the right-hand side we obtain
\begin{align*}
\begin{split}
    \frac{1}{2}\dot{\mathcal{E}}(x_r,x) \leq -\lambda \mathcal{E}(x_r,x) + (\Psi(x_r) - \Psi(x))^\top(h(t, x_r)- \eta_r(t))
    + \Psi(x)^\top \left(h(t, x_r) - \eta_r(t) - h(t,x) + \eta(t)\right)& \\
    + \Psi(x)^\top\left(\hat{\eta}(t) - \eta(t)\right)&.
\end{split}
\end{align*}
Since $h(t, x_r) - \eta_r(t) = \mathscr{L}^{-1}[(1 - C(s))\mathscr{L}[h(t, x_r)]]$, $h(t, x) - \eta(t) = \mathscr{L}^{-1}[(1 - C(s))\mathscr{L}[h(t, x)]]$, with $\tilde{\eta}(t)=\hat{\eta}(t) - \eta(t)$, we rewrite the equation above as
\begin{equation}\label{eqn:app:real1}
    \frac{1}{2}\dot{\mathcal{E}}(x_r,x) \leq -\lambda \mathcal{E}(x_r, x) + \Phi_1(x_r, x) + \Phi_2(x_r, x) + \Phi_3(x_r, x),
\end{equation}
where
\begin{align*}
    \Phi_1(x_r, x) &:=  (\Psi(x_r) - \Psi(x))^\top \mathscr{L}^{-1}[(1 - C(s))\mathscr{L}[h(t, x_r)]], \\
    \Phi_2(x_r, x) &:=  \Psi(x)^\top \mathscr{L}^{-1}[(1 - C(s))\mathscr{L}[h(t, x_r) -h(t, x)]], \\
    \Phi_3(x_r, x) &:=  \Psi(x)^\top \tilde{\eta}(t).
\end{align*}
Solving the differential equation in \cref{eqn:app:real1}  we obtain
\begin{align*}
    \mathcal{E}(x_r, x) \le \mathcal{E}(x_r(0), x(0)) + 2\int_0^t e^{-2\lambda(t-\nu)} (\Phi_1(x_r, x) + \Phi_2(x_r, x) +\Phi_3(x_r, x)) \diff \nu.
\end{align*}
Since $x_r(0) = x(0) \implies \mathcal{E}(x_r(0), x(0)) = 0$,  the inequality above reduces to
\begin{align}
    \label{eqn:app:real2}
    \mathcal{E}(x_r, x) \le  2\int_0^t e^{-2\lambda(t-\nu)} (\Phi_1(x_r, x) + \Phi_2(x_r, x) +\Phi_3(x_r, x)) \diff \nu.
\end{align}
Notice that $\norm{\Psi(x_r) - \Psi(x)}$ satisfies the following bounds
\begin{align*}
    \norm{\Psi(x_r) - \Psi(x)} &\le \norm{B(x_r)^\top M(x_r)\ogamma_s(1,t) - B(x)M(x)\ogamma_s(0,t)}.
\end{align*}
Adding and subtracting $B(x)M(x_r)\ogamma_s(1,t)$, from the right hand side of the equation above we obtain
\begin{align}
    \norm{\Psi(x_r) - \Psi(x)} &\le \norm{(B(x_r) - B(x))^\top M(x_r)\ogamma_s(1,t) + B(x)^\top(M(x_r)\ogamma_s(1,t) - M(x)\ogamma_s(0,t))} \nonumber \\
    \label{eqn:app:bound1}
    &\le \norm{B(x_r) - B(x)}\norm{M(x_r)\overline{\gamma}_s(1, t)}
    + \norm{B(x)}\norm{M(x_r)\overline{\gamma}_s(1, t) - M(x)\overline{\gamma}_s(0, t)}.
\end{align}
Since $x_r(t) \in \Omega(\rho_r, x^\star(t))$ from \cref{lem:refbound} and $x(t) \in \Omega(\rho, x^\star(t))$ for all $t \in [0, \tau^\star]$ by assumption in \cref{eqn:app:taustar}, the following bounds hold for $t \in [0, \tau^\star]$ as a result of \cref{assmp:bounds_known}:
\[
\norm{B(x_r) - B(x)} \le \Delta_{B_x}\norm{x_r(t) - x(t)}, \quad \norm{B(x)} \le \Delta_B.
\]
And from \cref{lem:geobound,lem:roadblock} we obtain the following respectively 
\[
\norm{M(x_r)\overline{\gamma}_s(1, t)} \le \oalpha \norm{x_r(t) - x(t)}, \quad \norm{M(x_r)\ogamma_s(1,t) - M(x)\ogamma_s(0,t)} \le \frac{\oalpha}{2\ualpha} \Delta_{M_x} \norm{x_r(t) - x(t)}^2,
\]
where $\Delta_{M_x}$ is defined in \cref{eqn:bounds:M_x}. Substituting these bounds in \cref{eqn:app:bound1} produces
\begin{align}
\label{eqn:app:bound2}
    \norm{\Psi(x_r) - \Psi(x)} \le  \frac{1}{2} \oalpha \Delta_{\Psi_x} \norm{x_r(t) - x(t)}^2,
\end{align}
which holds for all $t \in [0, \tau^\star]$ and where $\Delta_{\Psi_x}$ is a scalar defined in \cref{eqn:bounds:psi_x}. Additionally, from \cref{assmp:bounds_uncertainty} we know that the following inequalities hold
\begin{align}
    \label{eqn:app:bound3}
    \norm{h(t, x_r)} \le \Delta_h, \quad \norm{h(t, x_r) - h(t, x)} \le \Delta_{h_x} \norm{x_r(t) - x(t)}.
\end{align}
Since $\norm{\pdv{h}{t}(t,x_r)} \le \Delta_{h_t}$ and $\norm{\pdv{h}{x}(t,x_r)} \le \Delta_{h_x}$ from \cref{assmp:bounds_uncertainty}, and $\norm{\dot{x}_r}_{\mathcal{L}_\infty} \le \Delta_{\dot{x}_r}$ from \cref{lem:dynbound}, the following inequality is satisfied
\begin{align}
    \norm{\dot{h}(t,x_r)} &= \norm{\pdv{h}{t}(t, x_r) + \pdv{h}{x}(t, x_r)\dot{x}_r} \nonumber \\
    \label{eqn:app:bound4}
    &\le \Delta_{h_t} + \Delta_{h_x}\Delta_{\dot{x}_r}.
\end{align}
Since $\norm{M(x)\ogamma(0,t)} \le \oalpha \norm{x_r(t) - x(t)}$ from \cref{eqn:app:geobound1} for all $t \in [0, \tau^\star]$, the following holds
\begin{align}
    \norm{\Psi(x)} &= \norm{B(x)^\top M(x) \ogamma(0,t)}
    \le \norm{B(x)} \norm{M(x)\ogamma(0,t)} \nonumber \\
    \label{eqn:app:bound5}
    &\le \Delta_B \oalpha \norm{x_r(t) - x(t)}
\end{align}
for all $t \in [0, \tau^\star]$. From \cref{lem:dgeobound} we have the following result for all $t \in [0, \tau^\star]$
\begin{equation}
    \label{eqn:app:bound6}
    \norm{\dot{\Psi}(x)} \le \Delta_{\dot{\Psi}}\norm{x_r(t) - x(t)}.
\end{equation}
In order to derive bounds on \cref{eqn:app:real2}, define the following scalar trajectories
\begin{align*}
z_1(t) := \int_0^t e^{-2\lambda(t - \nu)}\Phi_1(x,x_r)\diff\nu, \quad
z_2(t) := \int_0^t e^{-2\lambda(t - \nu)}\Phi_2(x,x_r)\diff\nu.
\end{align*}
Then, the functions $z_i$, $i \in \{1,2\}$, are the states of the following systems
\begin{subequations}\label{eqn:final_boss:scalar_systems}
\begin{align}
    \dot{z}_i(t) &=  -2\lambda z_i(t) + b_i(t)\xi_i(t), \quad z_i(0) = 0,\\
    \xi_i(s) &= (1-C(s))\sigma_i(s),
\end{align}
\end{subequations} where
\begin{align*}
    b_1(t) =& \Psi(x_r) - \Psi(x),\quad \sigma_1(t) = h(t,x_r),\\
    b_2(t) =& \Psi(x), \quad \sigma_2(t) = h(t,x_r) - h(t,x).
\end{align*}
From \cref{eqn:app:contradiction} we assumed that $\norm{x_r(t) - x(t)} \le \rho_a$ for $t \in [0, \tau^\star]$. Using \cref{lem:scalar} for the $z_1(t)$ system, \cref{lem:scalar2} for the $z_2(t)$ system, and the bounds in \cref{eqn:app:bound2,eqn:app:bound3,eqn:app:bound4,eqn:app:bound5,eqn:app:bound6}, we have the following inequalities
\begin{align}
\label{eqn:app:tbound1}
   \norm{z_1(t)}\le  \frac{1}{2} \zeta_2(\omega) \rho_a^{2}, \quad
   \norm{z_2(t)}\le  \frac{1}{2} \zeta_3(\omega) \rho_a^{2},
\end{align}
for all $t \in [0, \tau^\star]$, and where $\zeta_2$ and $\zeta_3$ are defined in \cref{eqn:conditions:zeta_2,eqn:conditions:zeta_3} respectively. Moreover, it is easy to show from \cref{eqn:app:tildebound2} that
\begin{align}
\label{eqn:app:tbound2}
\norm{\int_0^t e^{-2\lambda(t - \nu)}\Phi_3(x,x_r)d\nu} \le \frac{\Delta_\theta\rho_a}{2\sqrt{\Gamma}},
\end{align}
for all $t \in [0, \tau^\star]$, where $\Delta_{\theta}$ is defined in \cref{eqn:bounds:theta}. Substituting \cref{eqn:app:tbound1,eqn:app:tbound2} into \cref{eqn:app:real1} we obtain the following bound on the Riemannian energy
\[
\mathcal{E}(x_r, x) \le \zeta_2(\omega)\rho_a^2 + \zeta_3(\omega)\rho_a^2 + \frac{\Delta_\theta \rho_a}{\sqrt{\Gamma}},
\]
for all $t \in [0, \tau^\star]$. Recall from the contradiction argument that $\norm{x_r(\tau^\star) - x(\tau^\star)} = \rho_a$. Since $\mathcal{E}(x_r, x) \ge \ualpha \norm{x_r(t) - x(t)}^2$ from \cref{lem:riemannbounds}, the following inequality must be satisfied at $t = \tau^\star$
\begin{align*}
\underline{\alpha}\rho_a^2 \le \zeta_2(\omega)\rho_a^2 + \zeta_3(\omega)\rho_a^2 + \frac{\Delta_\theta \rho_a}{\sqrt{\Gamma}} \implies
\sqrt{\Gamma} \le
\frac{\Delta_\theta}{\rho_a(\underline{\alpha} - \zeta_2(\omega) - \zeta_3(\omega))}.
\end{align*}
However, from~\cref{eqn:adaptation_rate_condition} the adaptation rate is chosen such that
\[
\sqrt{\Gamma} >
\frac{\Delta_\theta}{\rho_a(\underline{\alpha} - \zeta_2(\omega) - \zeta_3(\omega))}.
\]
This directly contradicts our earlier statement, and therefore $\|x_r - x\|_{\mathcal{L}_\infty}^{[0, \tau]} < \rho_a$.
\end{proof}

% \begin{theorem}\label{thm:main_theorem}
% Suppose that the assumptions and conditions hold. Consider a system as given in \todoref which is driven by the combination of a CCM-based feedback and a $\mathcal{L}_1$ adaptive controller as given in \todoref. Then the following inequality holds
% \[
% \norm{x^\star - x}_{\mathcal{L}_\infty} < \rho,
% \]
% and is uniformly ultimately bounded with the ultimate bound
% \[
% \norm{x^\star(t) - x_r(t)} \le \delta(\omega, T) := \epsilon(\omega, T) + \rho_\gamma,
% \]
% for all $t \ge T \ge 0$.
% \end{theorem}
%%%%%%%%%%%%%%%%%%%%%%
\begin{proof}[Proof of Theorem~\ref{thm:main_theorem}]
We will prove this by contradiction: assume that
\[
\norm{x^\star - x}_{\mathcal{L}_\infty} \ge \rho.
\]
From the definition of $\rho_r$ in \cref{eqn:conditions:rho_r}, we have $\norm{x^\star(0) - x(0)} < \rho_r$, which implies that $\norm{x^\star(0) - x(0)} < \rho$. Then there must exist a $\tau^\star > 0$ such that
\begin{align}
    \label{eqn:app:main1}
    \norm{x^\star(\tau^\star) - x(\tau^\star)} &= \rho, \\
    \norm{x^\star(t) - x(t)} &< \rho, \quad t \in [0, \tau^\star).
\end{align}
According to \cref{lem:refbound}, we have $\norm{x^\star - x_r}_{\mathcal{L}_\infty} < \rho_r$, and from \cref{lem:realbound} it follows that $\norm{x_r - x}_{\mathcal{L}_\infty}^{[0, \tau^\star]} < \rho_a$. Therefore, from triangle inequality  we have $\norm{x^\star - x}_{\mathcal{L}_\infty}^{[0, \tau^\star]} < \rho_r + \rho_a = \rho$, which implies that $\norm{x^\star(\tau^\star) - x(\tau^\star)} < \rho$. This directly contradicts our assumption in \cref{eqn:app:main1}, and therefore $\norm{x^\star - x}_{\mathcal{L}_\infty} < \rho$.
Additionally, from \cref{lem:refbound} the reference system satisfies a uniform ultimate bound given by $\norm{x^\star(t) - x(t)} \le \epsilon(\omega, T)$ for all $t \ge T \ge 0$. Therefore, $\norm{x^\star(t) - x(t)} < \epsilon(\omega, T) + \rho_a$ for all $t \ge T \ge 0$.
\end{proof}

%%%%%%%%%%%%%%%%%%%%%%%%%%%%%%%%%%%%%%%%%%%
\bibliographystyle{IEEEtran}
\bibliography{ref}

\end{document}